\def\url@leostyle{%
 \@ifundefined{selectfont}{\def\UrlFont{\sf}}{\def\UrlFont{\scriptsize\ttfamily}}} \makeatother\urlstyle{leo}
\newtheorem{theorem}{Theorem}
\newtheorem{proposition}[theorem]{Proposition}
\newtheorem{lemma}[theorem]{Lemma}
\theoremstyle{definition}
\newtheorem{definition}[theorem]{Definition}
\newtheorem{example}[theorem]{Example}
\theoremstyle{remark}
\numberwithin{equation}{section}
\numberwithin{theorem}{section}
\definecolor{Red}{rgb}{0.9,0,0.0}
\definecolor{Blue}{rgb}{0,0.0,1.0}
\def\cD{\mathcal{D}}
\def\cF{\mathcal{F}}
\def\cI{\mathcal{I}}
\def\cN{\mathcal{N}}
\def\cP{\mathcal{P}}
\def\bE{\mathbb{E}}
\def\bN{\mathbb{N}}
\def\bP{\mathbb{P}}
\def\bQ{\mathbb{Q}}
\def\bR{\mathbb{R}}
\def\sF{\mathscr{F}}
\newcommand{\1}{\mathbbm{1}}            
\newcommand{\set}[1]{\{#1\}}            
\renewcommand{\mid}{\;|\;}              
\newcommand{\abs}[1]{\left\vert#1\right\vert}   
\DeclareMathOperator{\Cov}{Cov}          
\DeclareMathOperator{\Var}{Var}          
\DeclareMathOperator{\var}{\mathrm{V}@\mathrm{R}}           
\definecolor{darkgreen}{rgb}{0,0.6,0}
\definecolor{midgreen}{rgb}{0,0.5,0}
\title{Fair Estimation of Capital Risk Allocation}
\def\and{%
  \end{tabular}%
  \begin{tabular}[t]{c}}%
\def\@fnsymbol#1{\ensuremath{\ifcase#1\or a\or b\or c\or
   d\or e\or f\or g\or h\or i\else\@ctrerr\fi}}
\author{
        Tomasz R. Bielecki\,\thanks{Department of Applied Mathematics, Illinois Institute of Technology
       \newline \hspace*{1.45em}  10 W 32nd Str, Building REC, Room 208, Chicago, IL 60616, USA
       \newline \hspace*{1.45em}  Emails: \url{tbielecki@iit.edu} (T.R. Bielecki), and \url{cialenco@iit.edu} (I. Cialenco)
       \newline \hspace*{1.45em}  URLs: \url{http://math.iit.edu/\~bielecki}  and \url{http://math.iit.edu/\~igor}
        \vspace{0.5em}} ,
\and
         Igor Cialenco,\,\footnotemark[1] \newline
\and
        Marcin Pitera\,\thanks{ Institute of Mathematics, Jagiellonian University,  Lojasiewicza 6, 30-348 Cracow, Poland
         \newline \hspace*{1.45em}  Email: \url{marcin.pitera@im.uj.edu.pl},  URL: \url{http://www2.im.uj.edu.pl/MarcinPitera/}
         \vspace{0.5em}} ,
\and
        Thorsten Schmidt\,\thanks{ Department of Mathematical Stochastics, University of Freiburg, Eckerstr.1, 79104 Freiburg, Germany
         \newline \hspace*{1.45em} Email: \url{thorsten.schmidt@stochastik.uni-freiburg.de},
         \newline \hspace*{1.45em} URL: \url{http://www.archiv.stochastik.uni-freiburg.de/homepages/schmidt/}
         }
        }
\date{ {\small 
 First Circulated: February 26, 2019 \\ 
 This Version: November 20, 2019
}}
\begin{document}

\maketitle

{\footnotesize
\begin{tabular}{l@{} p{350pt}}
  \hline \\[-.2em]
  \textsc{Abstract}: \ &
  In this paper we develop a novel methodology for  \emph{estimation} of risk capital allocation. The methodology is rooted in the theory of risk measures. We work within a general, but tractable class of law-invariant coherent risk measures, with a particular focus on  expected shortfall. We introduce the concept of \emph{fair} capital allocations and provide explicit formulae for fair capital allocations in case when the constituents of the risky portfolio are jointly normally distributed.
 The main focus of the paper is  on the problem of approximating fair portfolio allocations in the case of not fully known law of the portfolio constituents. We define and study the concepts of fair allocation estimators and asymptotically fair allocation estimators.
  A substantial part of our study is devoted to the problem of estimating fair risk allocations for expected shortfall. We study this problem  under normality as well as in a nonparametric setup.   We derive several estimators, and prove their fairness and/or asymptotic fairness.
  Last, but not least, we propose two backtesting methodologies that are oriented at assessing the performance of the allocation  estimation procedure. The paper closes with a substantial numerical study of the subject.
    \\[0.5em]

\textsc{Keywords:} \ & capital allocation, fair capital allocation, asymptotic fairness, expected shortfall, risk measures, Euler principle, value-at-risk, tail-value-at-risk, backtesting capital allocation.  
\\[1em]
  \hline
\end{tabular}
}


\section{Introduction}

The measurement and the management of risk is without doubt of highest importance in the financial and the insurance industries.
Arguably, the theory and applications of risk measures are  most useful for this purpose.  For early applications in the insurance context see \cite{Buehlmann1970,Gerber1974},  and for a historical perspective in the financial context see  \cite{guill2016bankers}. The seminal article \cite{ArtznerDelbaenEberHeath1999} placed
risk measurements on an axiomatic foundation paving the way to coherent risk measures which have been treated in numerous works since then.  We refer to \cite{Delbaen2000,FoellmerSchied2011,McNFE15Text} for an in-depth treatment of the topic.

The application of risk measures to portfolio management naturally leads to the problem of allocating  portions of the  risk capital to the constituents of the portfolio, i.e.~to the \emph{risk allocation problem.} There are a number of different approaches to risk capital allocation,
depending on the one hand on the class of the used risk measures, and on the other hand on the used allocation principles.
The \emph{Euler principle}, often used in risk management practice, is one example, see e.g. \cite{Tasche2004,Tasch2007}. For coherent risk measures, the Euler principle coincides with the axiomatic approach proposed in \cite{Kalkbrener2005}.
For the more general case of convex risk measures we refer to  \cite{Tsanakas2009, McNFE15Text} and references therein.

Risk measures as we consider them here are mathematical tools which require as inputs probability distributions of the underlying risk factors. In practical applications one is typically confronted with the fact that these probability distributions are not fully specified. For example, let $X$  represent a P\&L, which is a function of some underlying risk factors, and let $\rho$ be the risk measure used to measure the riskiness of $X$, so that the desired quantity to compute is the risk $\rho(X)$. Since the probability laws of the risk factors are not fully specified, then one needs to approximate $\rho(X)$, perhaps by estimating this quantity exploiting historical data. As a consequence, the risk allocations,
which are usually computed in terms of risk measures, need to be approximated, in particular by estimation.

The problem of estimation of risk has, to a great extent, been  neglected in the literature. In the recent paper \cite{PiteraSchmidt2018}  a new statistical methodology for efficient estimation of risk capital $\rho(X)$ was proposed. The methodology introduced in that paper is based on the key concept, which the authors call unbiased estimation of risk also introduced in \cite{PiteraSchmidt2018}, and is based on economic principle.\footnote{The concept of unbiased estimation of risk must not be confused with the classical concept of unbiased estimator.}  Inspired by the ideas from \cite{PiteraSchmidt2018}, in this paper we develop a novel methodology for \emph{estimation of capital risk allocation}.\footnote{In this paper we will occasionally write capital allocation or risk allocation in place of capital risk allocation.} We work within a general, but tractable class of coherent risk measures, the so-called weighted value-at-risk measures introduced in \cite{Cherny2006}, with focus on the expected shortfall risk measure, which is broadly accepted in the risk management practice.

The underlying key concept introduced in this paper is the \textit{fair capital risk allocation}, which builds upon the robust representation of coherent risk measures.  Our concept of fairness aligns well with what has been done in some of the existing literature. In particular, it implies fairness in the sense of fuzzy games  introduced in \cite{Delbaen2000}. The fair capital risk allocation can be also viewed as version of the Euler principle of risk  allocation. The fair allocation principle used here has been also applied in \cite{BCF2018} in the context of allocation of the total default fund among the clearing  members of a CCP. For additional insight about fair risk allocation we refer to the recent work~\cite{CoculescuDelbaen2019}. We provide explicit formulae for fair capital allocations in case when the constituents of the portfolio are jointly normally distributed.

The major focus of the paper is  on the problem of approximating fair portfolio allocations when the law of the portfolio constituents is not fully known.  Motivated by the concept of the fair capital allocation, we define and study the concepts of \textit{fair allocation estimators} and \textit{asymptotically fair allocation estimators}. A substantial portion of our study is devoted to the problem of estimating the risk allocation under expected shortfall and normality. In addition we consider a nonparametric approach to this problem. We derive several estimators, and prove their fairness and/or asymptotic fairness. Last, but not least, we propose two backtesting methodologies that are oriented at assessing the performance of the allocation estimation procedure. Finally, we perform relevant numerical studies. The results of the numerical studies that we have conducted so far are encouraging for practical use of the estimation and backtesting of the capital allocation.

This work is a first step towards developing formal methodologies for estimating and backtesting of fair capital allocation. As such, it has potential to open new theoretical and practical research avenues.

\section{The fair allocation principle}
Let $(\Omega,\sF,\bP)$ be an atomless probability space, and let $\bE$ be the expectation under $\bP$.  In what follows, all needed integrability and regularity assumptions are taken for granted.

We consider a random vector $X=(X_1,\ldots,X_d)$ whose components are interpreted as discounted future profits and losses (P\&Ls). The marginal random variable $X_i$ (margin -- for short) might correspond to the $i$th clearing member of a central clearing counterparty (CCP), to the $i$th position in the portfolio, to the $i$th trader portfolio in a trading desk, or to the $i$th desk in the financial institution portfolio. In the following, we will refer to $X$ as \emph{portfolio} and to $X_i$ as the $i$th \textit{portfolio margin} or the $i$th \textit{portfolio constituent}.

Let  $L^1:=L^1(\Omega,\cF,\bP)$ and let $\rho: L^{1}\to \bR \cup \{+\infty\}$ be a  \textit{normalized monetary risk measure}. That is: $\rho$ is  \emph{monotone}, i.e.~$\rho(U)\leq \rho(V)$ for all $U,V\in L^1$ such that $U\geq V$; $\rho$ is \emph{cash-additive}, i.e.~$\rho(U+c)=\rho(U)-c$ for all $c \in \bR$ and all $U\in L^1$; $\rho$ is \emph{normalized}, i.e.~$\rho(0)=0$.

The riskiness of the portfolio $X$ is measured by applying the risk measure $\rho$ to the \textsl{aggregated  portfolio P\&L} denoted by
\[
S:=\sum_{i=1}^{d}X_i.
\]
We call the quantity $\rho(S)$ the \emph{aggregated risk}, or total risk, of the portfolio $X$.

Our objective is to study the issue of allocating  the aggregated risk of the portfolio to the individual constituents of the portfolio. Specifically, we intend to find
a vector $a=(a_1,\ldots,a_d)\in \bR^{d}$, called a \emph{risk allocation}, such that the following \textit{balance condition} holds
\begin{equation}\label{eq:1}
\rho(S)=\sum_{i=1}^d a_i.
\end{equation}
The component $a_i$ is interpreted as the risk contribution of $X_i$ to the aggregated risk, and therefore $X_i+a_i$ is interpreted as the $i$th \textit{secured margin} of portfolio $X$. Correspondingly, we call $X+a$ the {\it secured portfolio}, and $S+ \sum_{i=1}^{d}a_i$ the \textit{secured aggregated position}.

Stated as such, the risk allocation problem is ill--posed. Indeed, any collection of numbers $a_1,\ldots,a_d$ satisfying the balance condition \eqref{eq:1} constitutes a risk allocation. In order to deal with a meaningful risk allocation problem we need to impose additional conditions, that reflect some additional and desired features of the portfolio allocation. With this in mind, we impose an additional condition on $a$, which we will call the \textit{fairness condition}.

Towards this end, we require more structure on the risk measure $\rho$. We additionally assume that  the monetary risk measure $\rho$ is  finite, law-invariant,  comonotonic  and coherent; see~\cite{Kusuoka2001} for details. In view of \cite[Theorem 2(iii)]{Shapiro2013} we conclude that $\rho$ is a  \emph{weighted value-at-risk measure},\footnote{Following the traditional nomenclature, we use the name `weighted value-at-risk measure', although a more appropriate name would be `weighted expected shortfall'.} so that it admits representation (1.1) in \cite{Cherny2006} for a  fixed probability measure $\nu$  on $[0,1]$. Specifically, for a continuously distributed random variable $Y$,
\begin{equation}\label{eq:rho}
\rho(Y)=\rho_\nu (Y):=\int_{[0,1]}\textrm{ES}_\alpha(Y)\nu(d\alpha), \quad Y\in L^1,
\end{equation}
where $\textrm{ES}_\alpha$ is the \emph{Expected Shortfall}\footnote{For a formal definition of expected shortfall in the context of this paper see \eqref{def:ex}.} (ES) risk measure (sometimes also called {\it tail value-at-risk} or {\it conditional value-at-risk}) for reference level $\alpha \in [0,1]$. Moreover,  $\rho$ admits a robust-type representation of the form 
\begin{equation}\label{eq:robust}
\rho(Y)=\sup_{\bQ\in \cD}\bE_{\bQ}[-Y],
\end{equation}
where $\cD$ is a determining family of probability measures absolutely continuous with respect to $\bP$. As shown in~\cite[Theorem 6.3]{Cherny2006}, for any $Y\in L^1$ there exists a unique {\it minimal extreme measure} $\bQ_Y\in \cD$ such that\footnote{Note that the set of {\it extreme measures}, i.e. the set of measures that satisfy \eqref{eq:maximiser}, might contain more than one element. The term {\it minimal} corresponds to the minimal element with respect to the convex stochastic order; see~\cite{Cherny2006} for details.}
\begin{equation}\label{eq:maximiser}
\rho(Y)=\bE_{\bQ_Y}[-Y].
\end{equation}
Sometimes, we refer to $\bQ_Y$ as  {\it the worst-case scenario measure} (for position $Y$). We denote by $Z_Y$ the associated Radon-Nikodym derivative $d\bQ_Y/d\bP$. In particular, as shown in \cite{Cherny2006} (cf. formula (6.2) there), if $Y$ has a continuous distribution then we have
\begin{equation}\label{eq:g(Y)}
Z_Y=g(Y), \quad \textrm{and} \quad \rho(Y) = \bE[-g(Y)Y],
\end{equation}
for some Borel function $g$. For example if $\rho=\textrm{ES}_\alpha$ is the expected shortfall at level $\alpha$, then we have
\begin{equation}\label{ZY}
Z_Y=\frac{1}{\alpha}\1_{\{Y<q_Y(\alpha)\}},
\end{equation} where $q_Y(\alpha)$ is the $\alpha$--quantile of $Y$.

In what follows, for simplicity, we write $\bE_S$ instead of $\bE_{\bQ_S}$. The value $\bE_{S}\left[X_i +a_i\right]$  represents the average performance of the secured margin $X_i +a_i$ under the extremal measure $\bQ_S$.
The following \emph{fairness condition} selects risk allocations which are comparable under the extremal measure of the aggregated portfolio P\&L.

\begin{definition}\label{def:fair}
The capital allocation $a=(a_1,\dots,a_d)$ is called \emph{fair}, if
\begin{equation}\label{eq:fair.th}
\bE_S\left[X_i +a_i\right]=\bE_S\left[X_j +a_j\right], \quad i,j=1,\ldots,d.
\end{equation}
\end{definition}

\bigskip

The economic intuition behind this definition is as follows: the worst-case-scenario $\bQ_S$ is, in our setting, the determining scenario of the capital allocation for the portfolio through $\rho(S)=\bE_S[-S]$ resulting from Equation \eqref{eq:maximiser}. A \emph{fair} capital allocation is meant to  create secured positions $X_i+a_i$, $1 \le i \le d$,  so that the averages  of all secured positions with respect to the worst-case-scenario $\bQ_S$  are all equal.

Since $\rho$ is a monetary risk measure, the extremal measures for $S$ and $S+c$, $c \in \bR$, coincide. Thus, for any fair capital allocation $a$ satisfying the balance condition in \eqref{eq:1} we have
\begin{align}\label{density}
    0 = \rho \bigg(\sum_{i=1}^{d}(X_i +a_i)\bigg)
      = -\bE_S\bigg[\sum_{i=1}^{d}(X_i +a_i)\bigg]
      = -\sum_{i=1}^{d}\bE_S\left[X_i +a_i\right],
\end{align}
and consequently the risk allocations are given by
\begin{equation}\label{Eq:ai}
    a_i=-\bE_{S}[X_i]= -\bE[Z_S X_i], \quad i=1,\ldots,d.
\end{equation}
In view of \eqref{eq:g(Y)}, we also have that
\begin{equation}\label{eq:aig}
  a_i = -\bE\big[g\big(\sum_{k=1}^{d}X_k\big)X_i\big], \quad i=1,\ldots,d.
\end{equation}

First, we note that the fair risk allocation is unique, which is due to the existence and 	uniqueness of the extreme measure $\bQ_S$.
Secondly, we also note that the concept of fairness introduced in Definition \ref{def:fair} is actually equivalent to the concept of \emph{Euler risk allocation}. This observation is readily demonstrated by \eqref{Eq:ai}. However, it is the characterization of the fairness property of risk allocation as presented in \eqref{eq:fair.th} that underlies the notion of fair allocation estimator given in Definition \ref{def:def}, which is the key definition in this paper. That is why we defined fairness of risk allocation via \eqref{eq:fair.th} rather than via  \eqref{Eq:ai}.

We also note that the above notion of fairness implies \emph{fairness in the sense of fuzzy games} introduced in \cite{Delbaen2000}. Indeed, this follows from Theorems 17 and 18 therein taking representation \eqref{eq:robust} into account. The fair allocation principle of Definition~\ref{def:fair} has been applied in \cite{BCF2018} in the context of allocation of the total default fund among the clearing  members of a CCP.

The following example illustrates the concept of fair allocation.

\begin{example}[Mean risk allocation]\label{ex:mean}
Consider expectation for measuring risk, i.e.~$\rho(Y)=\bE[-Y]$, in which case $\mathcal{D}=\{\bP\}$. Then, clearly, for any $X=(X_1,\ldots,X_d)$, the capital allocation $\mathbf{a}=(a_1,\ldots,a_d)$ given as
\[
a_i= -\bE[X_i], \quad i=1,\ldots,d,
\]
is fair.
\end{example}

\subsection{Risk allocation under normality}\label{S:normal}

As an  example where explicit formulae can be obtained, we study the case of normally distributed profits and losses. In this regard, let us assume that the vector $X$ is normally  distributed under $\bP$ with mean $\mu$ and covariance matrix $\Sigma$ and fix $i \in \{1,\dots,d\}$. Then, $(X_i,S)$ is bivariate normal, and the conditional expectation $\bE[X_i|S]$ takes the form
\[
\bE[X_i|S]=\beta_i S +\alpha_i,
\]
with $ \beta_i= \frac{\Cov(X_i,S)}{\Var(S)}$,  and $\alpha_i = \mu_i -\beta_i\sum_{j=1}^d\mu_j$. Since this conditional expectation is the $L^2:=L^2(\Omega,\cF,\bP)$ orthogonal projection of $X_i$ on the linear space spanned by $S$ we obtain
\[X_i=\beta_i S +\alpha_i+\epsilon_i,\]
where $S$ and $\epsilon_i$ are independent under $\bP$, and $\bE [\epsilon_i]=0$.
For any weighted value-at-risk measure $\rho$, Equation \eqref{Eq:ai} implies that a fair capital allocation is given by
\begin{align}\label{eq:ai.normal}
a_i &= -\bE_{S}[X_i] = -\alpha_i - \beta_i \bE_S[S] - \bE_S[\varepsilon_i] \nonumber\\
    &= - \alpha_i + \beta_i \rho(S) - \bE[Z_S\varepsilon_i] \nonumber \\
    &= - \alpha_i + \beta_i \rho(S) - \bE[g(S)\varepsilon_i]=- \alpha_i + \beta_i \rho(S) - \bE[g(S)]\bE[\varepsilon_i]\nonumber \\
    &=- \alpha_i + \beta_i \rho(S),
\end{align}
where we have used \eqref{eq:g(Y)} in the fourth  equality, independence of $S$ and $\epsilon_i$  under $\bP$ in the fifth equality, and the fact that $\epsilon_i$ has zero mean under $\bP$, in the last equality.
As expected, the total allocated risk is divided among constituents using the regression slope allocations which is typically referred to as the \emph{covariance principle}, see \cite[Section 8.5]{McNFE15Text}.

\medskip \noindent
\textbf{Expected shortfall.}
To be more specific, we consider as an important example the expected shortfall (ES). In this regard, let $\rho=\textrm{ES}_\alpha$  denote ES  under $\bP$ for the level $\alpha \in (0,1)$. Then, for a continuously distributed real valued random variable $Y$  we have
\begin{align}\label{def:ex}
\textrm{ES}_\alpha(Y)=\bE[-Y \mid  Y\leq q_{Y}(\alpha)],
\end{align}
where $q_{Y}(\alpha)$ is an $\alpha$-quantile of $Y$.  Thus, since $S$ is normally  distributed,  \eqref{def:ex} yields
\begin{equation}\label{eq:rhoS}
\textrm{ES}_\alpha\left(S\right)= -\sum_{i=1}^d\mu_i + \frac{1}{\alpha}\sqrt{\Var(S)}\, \phi\big(\Phi^{-1}(\alpha)\big),
\end{equation}
where $\phi$ and $\Phi$ are the density and the cumulative distribution function of the standard normal distribution; see   \cite[Example~2.14]{McNFE15Text}.
Putting together \eqref{eq:ai.normal} and \eqref{eq:rhoS} we see that the capital allocation for ES is given as
\begin{equation}\label{eq:norm.es.true}
a_i= -\mu_i + \frac{\Cov(X_i,S)}{\alpha\sqrt{\Var(S)}}\phi(\Phi^{-1}(\alpha)),\quad\quad\quad i=1,2,\ldots,d\,.
\end{equation}

\section{Fair allocation estimators}
In practice, the  probability distribution under $\bP$ of $X$, the portfolio's P\&L, is not fully specified. Since, in view of \eqref{eq:g(Y)} and \eqref{eq:aig}, we have
\begin{equation}\label{eq:aig-new}
 \rho(S)=-\bE\bigg[g\Big(\sum_{k=1}^d X_k\Big)\sum_{k=1}^d X_k\bigg],\ \ \textrm{and}\ \ a_i = -\bE\bigg[g\Big(\sum_{k=1}^{d}X_k\Big)X_i\bigg], \quad i=1,\ldots,d,
\end{equation}
then, in almost all practically relevant applications, neither the aggregated risk $\rho(S)$ nor the fair risk allocation $a$ are  known, and thus need to be estimated.  Hence, appropriate estimation procedures have to be developed, in particular estimation procedures based on the historical data about realizations of the portfolio. This will involve estimating, in some  way, the probability distribution of $X$ under $\bP$.

In the following, we set the relevant statistical framework and propose efficient procedures to deal with this estimation issue. We refer to $X$ as to the \emph{population}. Historical information about $X$ is given in terms of a random sample of size $n$ drawn from $X$, which we denote by $X^1,\dots,X^n$, so that  $X^1,\dots,X^n$ are independently drawn copies of the random variable $X$. Our aim is to estimate the aggregated risk $\rho(S)$ using the information contained in  the sample. Towards this end we let
\[
\mathbf{X}^n:=\set{X^j= (X^j_1,\ldots,X^j_d),\ j=1,\ldots,n},
\]
represent the random sample,  and let us denote its realization by
\begin{equation}
\mathbf{x}^n:=\set{x^j= (x^j_1,\ldots,x^j_d),\ j=1,\ldots,n},
\end{equation}
where $x^j_k$ corresponds to the $j$-th observed (realized) value of the portfolio's $k$th margin.

The formal statistical setup for this situation is as follows: consider a family of probability measures $\cP:=(\bP^\theta)_{\theta\in\Theta}$ on $(\Omega,\sF)$, where $\Theta$ denotes the parameter space.  To avoid unnecessary technical difficulties, we assume that all measures in $\cP$ are equivalent.
Furthermore, we assume that for any $\theta\in\Theta$ the random sample $X^1,\dots,X^{n}$ is i.i.d.~under $\bP^{\theta}$. Moreover, we assume that $\bP =\bP^{\theta_0}$ for some (unknown) parameter $\theta_0\in\Theta$. We will denote by $\rho^\theta$ and, respectively $\bE^\theta$, the  risk measure $\rho$, and respectively the expectation, under the probability measure $\bP^\theta$. Similarly to the notation $\bQ_Y$ and $Z_Y$, corresponding to the reference measure $\bP$, we will use notation  $\bQ^\theta_Y$ and $Z^\theta_Y$ with regard to the reference measure $\bP^\theta$.

Given the random sample $\mathbf{X}^n$, the allocation $a$ is estimated using
an \emph{allocation estimator} $\hat A^n=(\hat A^n_1,\ldots,\hat A^n_d)$ defined as
\begin{equation}\label{estimator}
  \hat A^n = \eta_n (\mathbf{X}^n),
\end{equation}
 for some measurable function $\eta_n:\bR^{d\times n}\to \bR^d$.  

Next, we define a property that should be satisfied by any reasonable \emph{allocation estimator}.

\begin{definition}\label{def:def}
An allocation estimator $\hat A^n$ is called {\it fair} if, for all $\theta  \in \Theta$,
\begin{equation}\label{eq:fair}
\bE^\theta\Big[ Z_{S,\hat A^n}^\theta (X_i +\hat A^n_i)\Big] = 0,\quad i=1,\ldots,d,
\end{equation}
where $Z^{\theta}_{S,\hat A^n} := Z^{\theta}_{S+\sum_{i=1}^{d}\hat A^n_i}$.
\end{definition}
We emphasize that $\hat{A}^n$ is a random variable, and $Z^{\theta}_{S,\hat A^n}$ is the Radon-Nikodym derivative corresponding to  $S+\sum_{i=1}^{d}\hat A^n_i$.

We stress that the definition of the fair allocation estimator requires that property \eqref{eq:fair} is satisfied for all populations  from the population space $\Theta$, that is for all $\theta  \in \Theta$.

Intuitively,  the above definition means that  an allocation estimator is fair if it mimics the balanced fairness condition \eqref{Eq:ai} for all relevant scenarios (given by probability distributions $\bP^\theta, \ \theta \in \Theta$). In particular, the aggregated risk estimator obtained from a  {\it fair} allocation estimator $\hat A$ by summation turns out to be  {\it unbiased} in the sense of \cite[Definition 4.1]{PiteraSchmidt2018}, namely,  for any $\theta\in\Theta$ we get
\begin{equation}\label{eq:unbiased}
    \rho^\theta\Big(S + \sum_{i=1}^d\hat A^n_i\Big)
         = -\sum_{i=1}^{d}\bE^{\theta}\Big[ Z^\theta_{S,\hat A^n} (X_i +\hat A^n_i)\Big]
         = 0.
\end{equation}

 Equality~\eqref{eq:unbiased} guarantees that the secured aggregated portfolio position $S+\sum_{i=1}^d\hat A_i$ is acceptable in the sense that it bears no risk, while Equality~\eqref{eq:fair}  ensures that the average performance of the secured marginal positions under the worst-case scenario measure for the secured portfolio $S$ are the same and that the joint position is secured. In particular, for $d=1$, the definitions of {\it fairness} and {\it unbiasedness} coincide.

It should be noted that  \eqref{eq:unbiased} means that a fair allocation estimator charges an adequate amount of capital to secure the portfolio. This is a consequence of  \eqref{eq:fair}, which means that a fair allocation estimator applies an adequate amount of capital charge to each position constituent.

We end this section with a simple example to illustrate the concept of fairness.

\begin{example}\label{ex:1-0}
Consider the mean risk allocation given in Example~\ref{ex:mean}. This leads to the  family of risk measures $\rho^{\theta}(\cdot)=-\bE^{\theta}[\,\cdot\,]$, $\theta\in\Theta$.
Then,  the risk allocation estimator
\[
\hat M^{n}_i=-\frac 1n \sum_{j=1}^{n}X^j_i,\qquad \textrm{for } i=1,2,\ldots,d\,,
\]
is a fair allocation estimator. Indeed, note that here, for each $\theta\in\Theta$, the extremal measure coincides with the original probability measure $\bP^\theta$, i.e. $Z_{S,\hat M^{n}}^\theta\equiv 1$. Thus, for $i\in\{1,2,\ldots d\}$ we obtain
\[
\bE^{\theta}\Big[ Z_{S,\hat M^{n}}^\theta (X_i +\hat M^{n}_i)\Big]=
\bE^{\theta}\Big[X_i -\frac 1n \sum_{j=1}^n X_i^j\Big]=0.
\]
\end{example}

\subsection{Estimating  capital allocation under expected shortfall and normality}

Following Section~\ref{S:normal}, we study the case where  the $d$-dimensional random vector $X$ is normally distributed under every $\bP^{\theta}$, and we assume that the risk is measured by the expected shortfall $\textrm{ES}^\theta_\alpha$, at a fixed level $\alpha\in (0,1)$.
In what follows, for the random sample $\mathbf{X}^n$, we will use the notation   $S^j  := \sum_{i=1}^d X_i^j$, $j=1,\ldots,n$, and we set\footnote{To ease  the notation, we will drop the superscript $n$ in the following. So, we will write  $\hat\mu_i$ rather than $\hat\mu_i^n$, etc.  }
\begin{align*}
\hat\mu_i & := \tfrac{1}{n}\textstyle \sum_{j=1}^n X_i^j,\\
\hat\mu_S & :=\textstyle \tfrac{1}{n}\sum_{j=1}^{n} S^j = \sum_{i=1}^{d}\hat\mu_i  \\
\hat\sigma_S^2 &:= \tfrac{1}{n}\textstyle \sum_{j=1}^n(S^j - \hat \mu_S)^2,\\
{\widehat \Cov}_{X_i,S} &:=\tfrac{1}{n}\textstyle \sum_{j=1}^n (X_i^j - \hat \mu_i)(S^j - \hat \mu_S),
\end{align*}
to denote the sample mean of the $i$th constituent, the sample mean of the portfolio, the sample variance of the portfolio, and the sample covariance of the $i$th constituent and the portfolio, respectively.

Motivated by the Representation~\eqref{eq:ai.normal} we define the allocation estimator $\hat B=(\hat B_1,\ldots,\hat B_d)$ as
\begin{equation}\label{eq:ai.normal3}
\hat B_i:= -\hat \alpha_i+ \hat \beta_i\,\hat R(S),\qquad i=1,\ldots,d\,,
\end{equation}
where
$ \hat\beta_i= \tfrac{1}{{\hat\sigma}_S^2}{\widehat\Cov}_{X_i,S}$\, and\, $\hat\alpha_i =\hat\mu_i-\hat\beta_i\hat\mu_S$\, are the estimators of the slope and intercept regression coefficient from the $L^2$ orthogonal projection of the $i$th margin of $X$ onto $S$, and where $\hat R(S)$ is an unbiased risk estimator (in the sense of \cite{PiteraSchmidt2018}) for the Expected Shortfall of the secured position $S$.
It has been shown in \cite[Example 5.4]{PiteraSchmidt2018} that $\hat R(S)$ under normality can be represented as \begin{equation} \label{eq:hatR}
\hat R(S) =-\hat\mu_S+\hat {\sigma}_S b_n,
\end{equation}
where $b_n\in\mathbb{R}$ is deterministic, and depends only on the sample size $n$, and risk level $\alpha\in (0,1)$.
Consequently, the estimator becomes
$$
\hat B_i = -\hat \mu_i + \frac{\widehat \Cov_{X_i,S} }{\hat \sigma_S}b_n,\qquad i=1,\ldots,d.
$$

Before we show that $\hat B$ satisfies the fairness property, we show an important conditional un\-biasedness  property  of the estimators  $ \hat\beta_i$ and $\hat\alpha_i $, in the usual statistical sense.
Towards this end, for $i=1,2,\ldots,d$, we use
\begin{align*}
\beta_i^\theta &:={\Cov^\theta(X_i,S)} \cdot (\Var^\theta(S))^{-1},\\
\alpha_i^\theta & := \bE^\theta(X_i) -\beta^\theta_i\sum_{k=1}^d\bE^\theta(X_k),
\end{align*}
to denote the true regression coefficients of the $L^2$--orthogonal projection of $i$th margin of $X$ onto $S$ under $\bP^\theta$, for $\theta\in\Theta$; see Section~\ref{S:normal}. Note that, in view of our assumption that for any $\theta\in\Theta$ the random sample $X^1,\dots,X^{n}$ is i.i.d.~under $\bP^{\theta}$, we get $\beta_i^\theta={\Cov^\theta(X^j_i,S^j)} \cdot (\Var^\theta(S^j))^{-1}$ and $\alpha_i^\theta= \bE^\theta(X^j_i) -\beta^\theta_i\sum_{k=1}^d\bE^\theta(X^j_k)$, for $j=1,\ldots, n$.

\begin{proposition}\label{prop2.3}
For any $\theta\in\Theta$ it holds that
\begin{equation}\label{eq:unbiased.ab}
\bE^\theta\Big[\hat\beta_i\,|\, \hat\mu_S,\hat\sigma_S\Big] =\beta_i^\theta\qquad \textrm{ and }\qquad \bE^\theta\Big[\hat\alpha_i\,|\, \hat\mu_S,\hat\sigma_S\Big] =\alpha_i^\theta, \qquad i=1,\dots,d.
\end{equation}
\end{proposition}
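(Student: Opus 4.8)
The plan is to exploit the conditional structure of jointly Gaussian vectors. Fix $\theta \in \Theta$ and write everything under $\bP^\theta$. The key observation is that $\hat\beta_i$ and $\hat\alpha_i$ are functions of the sample $\mathbf{X}^n$, while the conditioning $\sigma$-algebra is generated by $(\hat\mu_S,\hat\sigma_S)$, which are functions of $(S^1,\dots,S^n)$ only. So the natural first step is to decompose each $X_i^j$ along the regression on $S^j$: write $X_i^j = \alpha_i^\theta + \beta_i^\theta S^j + \varepsilon_i^j$, where, by joint normality under $\bP^\theta$, the residual vector $(\varepsilon_i^1,\dots,\varepsilon_i^n)$ is independent of $(S^1,\dots,S^n)$, has mean zero, and is i.i.d.\ across $j$. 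This is the Gaussian analogue of the orthogonal decomposition already used in Section~\ref{S:normal} to derive \eqref{eq:ai.normal}.

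Next I would substitute this decomposition into the explicit formula $\hat\beta_i = \widehat\Cov_{X_i,S}/\hat\sigma_S^2$. Because $\widehat\Cov_{X_i,S} = \tfrac1n\sum_j (X_i^j-\hat\mu_i)(S^j-\hat\mu_S)$ is bilinear, it splits into $\beta_i^\theta \,\hat\sigma_S^2$ plus the cross term $\tfrac1n\sum_j(\varepsilon_i^j - \bar\varepsilon_i)(S^j-\hat\mu_S)$, where $\bar\varepsilon_i = \tfrac1n\sum_j \varepsilon_i^j$ (the $\hat\mu_i$-centering contributes exactly the $\bar\varepsilon_i$ and $\beta_i^\theta$-shifted $\hat\mu_S$ terms, which recombine cleanly). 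Hence $\hat\beta_i = \beta_i^\theta + N_i / \hat\sigma_S^2$ with $N_i := \tfrac1n\sum_j(\varepsilon_i^j-\bar\varepsilon_i)(S^j-\hat\mu_S)$. Conditioning on $(\hat\mu_S,\hat\sigma_S)$: since $(\varepsilon_i^1,\dots,\varepsilon_i^n)$ is independent of $(S^1,\dots,S^n)$ — and a fortiori of the coarser $\sigma(\hat\mu_S,\hat\sigma_S)$ — and $\bE^\theta[\varepsilon_i^j]=0$, each summand in $N_i$ has conditional expectation $(S^j-\hat\mu_S)\,\bE^\theta[\varepsilon_i^j-\bar\varepsilon_i \mid \hat\mu_S,\hat\sigma_S] = 0$. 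Thus $\bE^\theta[N_i/\hat\sigma_S^2 \mid \hat\mu_S,\hat\sigma_S] = 0$ (dividing by the $\sigma(\hat\sigma_S)$-measurable quantity $\hat\sigma_S^2$ is harmless), giving $\bE^\theta[\hat\beta_i \mid \hat\mu_S,\hat\sigma_S] = \beta_i^\theta$. For the intercept, $\hat\alpha_i = \hat\mu_i - \hat\beta_i\hat\mu_S$; writing $\hat\mu_i = \alpha_i^\theta + \beta_i^\theta\hat\mu_S + \bar\varepsilon_i$ and using the $\hat\beta_i$ result plus $\bE^\theta[\bar\varepsilon_i \mid \hat\mu_S,\hat\sigma_S]=0$ and $\hat\mu_S$-measurability, one gets $\bE^\theta[\hat\alpha_i\mid\hat\mu_S,\hat\sigma_S] = \alpha_i^\theta + \beta_i^\theta\hat\mu_S - \beta_i^\theta\hat\mu_S = \alpha_i^\theta$.

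The main technical point to be careful about — and the step I would flag as the real content — is the independence of the residual process $(\varepsilon_i^j)_j$ from the aggregate process $(S^j)_j$, not merely uncorrelatedness. This is exactly where joint normality of $X$ under $\bP^\theta$ is used: for Gaussian vectors, the $L^2$-projection residual is genuinely independent of the regressor, so the conditional expectations above vanish rather than merely being orthogonal. One should also note that $\hat\sigma_S^2$ could in principle vanish on a null set; this is not an issue since $\Var^\theta(S)>0$ under the standing nondegeneracy assumptions and the event $\{\hat\sigma_S=0\}$ has probability zero, so the quotients are a.s.\ well defined. A minor bookkeeping subtlety is handling the double centering (by $\hat\mu_i$ and by $\hat\mu_S$) in $\widehat\Cov_{X_i,S}$; the clean way is to expand $(X_i^j-\hat\mu_i) = \beta_i^\theta(S^j-\hat\mu_S) + (\varepsilon_i^j-\bar\varepsilon_i)$ directly, after which no further algebra on the centerings is needed. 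Everything else is routine linearity of conditional expectation.
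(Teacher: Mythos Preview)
Your approach is essentially the paper's: decompose $X_i^j = \alpha_i^\theta + \beta_i^\theta S^j + \varepsilon_i^j$, use joint normality to upgrade uncorrelatedness to independence of $(\varepsilon_i^1,\dots,\varepsilon_i^n)$ from $(S^1,\dots,S^n)$, and then compute the conditional expectations. Your pre-centered identity $(X_i^j-\hat\mu_i) = \beta_i^\theta(S^j-\hat\mu_S) + (\varepsilon_i^j-\bar\varepsilon_i)$ is in fact a little tidier than the paper's route, which expands $\widehat\Cov_{X_i,S}=\tfrac1n\sum_j X_i^j S^j - \hat\mu_i\hat\mu_S$ and carries the extra $\alpha_i^\theta\hat\mu_S + \beta_i^\theta\hat\mu_S^2 - \hat\mu_i\hat\mu_S$ bookkeeping.

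There is one slip to fix. You write that each summand of $N_i$ has conditional expectation $(S^j-\hat\mu_S)\,\bE^\theta[\varepsilon_i^j-\bar\varepsilon_i \mid \hat\mu_S,\hat\sigma_S]$, but $(S^j-\hat\mu_S)$ is \emph{not} $\sigma(\hat\mu_S,\hat\sigma_S)$-measurable, so it cannot be pulled out of that conditional expectation. The argument needs the tower step the paper makes explicit: first condition on the finer $\sigma(S^1,\dots,S^n)$, where $(S^j-\hat\mu_S)$ \emph{is} measurable and independence gives $\bE^\theta[(\varepsilon_i^j-\bar\varepsilon_i)(S^j-\hat\mu_S)\mid S^1,\dots,S^n]=(S^j-\hat\mu_S)\,\bE^\theta[\varepsilon_i^j-\bar\varepsilon_i]=0$, and then tower down to $\sigma(\hat\mu_S,\hat\sigma_S)$. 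With that correction your proof is complete.
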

\begin{proof}
Recall from Section \ref{S:normal} that under normality, for $j\in \{1,\ldots, n\}$, $i\in \{1,\ldots,d\}$, and $\theta\in\Theta$, we have
\begin{equation}\label{eq:residual.k}
X^j_i=\alpha^\theta_i+\beta^\theta_i S^j+\epsilon^{j,\theta}_i,
\end{equation}
where $\epsilon_i^{j,\theta}$ is a zero mean Gaussian random variable independent of $S^j$.
As a simple consequence of \eqref{eq:residual.k} we obtain that $\epsilon_i^{j,\theta}$ is independent of $\mu_S$ and $\sigma_S$ under $\bP^\theta$ for all $\theta \in \Theta$.
Then, by definition,
\begin{align}
  \cI_1:=  \bE^\theta\bigg[\hat\beta_i\,|\, \hat\mu_S,\hat\sigma_S\bigg] \label{temp:387}
       & = \tfrac{1}{\hat\sigma_S^2}\bE^\theta\bigg[\frac{1}{n}\sum_{j=1}^{n}(X^j_i-\hat\mu_i)(S^j-\hat\mu_S) \,|\, \hat\mu_S,\hat\sigma_S\bigg]\\
       & = \tfrac{1}{\hat\sigma_S^2}\bE^\theta\bigg[\frac{1}{n}\sum_{j=1}^{n}X^j_i S^j-\hat\mu_i\hat\mu_S \,|\, \hat\mu_S,\hat\sigma_S\bigg]. \nonumber
\end{align}
Inserting \eqref{eq:residual.k}, and using that $n^{-1}\sum_{j=1}^n (S^j)^2 = \hat \sigma_S^2 + \hat \mu_S^2$, we obtain
\begin{align*}
\cI_1
& = \tfrac{1}{\hat\sigma_S^2}\bE^\theta\bigg[\frac{1}{n}\sum_{j=1}^{n}(\alpha_i^\theta +\beta_i^\theta S^j+\epsilon_i^{j,\theta})S^j-\hat\mu_i\hat\mu_S \,|\, \hat\mu_S,\hat\sigma_S\bigg]\\
&=\tfrac{1}{\hat\sigma_S^2}\bE^\theta\bigg[\alpha_i^\theta \hat\mu_S +\beta_i^\theta (\hat\sigma_S^2+\hat\mu_S^2)+\frac{1}{n}\sum_{j=1}^{n}\epsilon^{j,\theta}_i S^j-\hat\mu_i\hat\mu_S \,|\, \hat\mu_S,\hat\sigma_S\bigg]\\
&{=\tfrac{1}{\hat\sigma_S^2}\bE^\theta\bigg[\alpha_i^\theta \hat\mu_S +\beta_i^\theta (\hat\sigma_S^2+\hat\mu_S^2)+\frac{1}{n}\sum_{j=1}^{n}\bE^\theta\big[\epsilon^{j,\theta}_i S^j\,|\, S^j,\hat\mu_S,\hat\sigma_S\big]-\hat\mu_i\hat\mu_S \,|\, \hat\mu_S,\hat\sigma_S\bigg]}\\
&{=\tfrac{1}{\hat\sigma_S^2}\bE^\theta\bigg[\alpha_i^\theta \hat\mu_S +\beta_i^\theta (\hat\sigma_S^2+\hat\mu_S^2)+\frac{1}{n}\sum_{j=1}^{n} S^j\bE^\theta\big[\epsilon^{j,\theta}_i\,|\, S^j,\hat\mu_S,\hat\sigma_S\big]-\hat\mu_i\hat\mu_S \,|\, \hat\mu_S,\hat\sigma_S\bigg]}\\
&{=\tfrac{1}{\hat\sigma_S^2}\bE^\theta\bigg[\alpha_i^\theta \hat\mu_S +\beta_i^\theta (\hat\sigma_S^2+\hat\mu_S^2)+\frac{1}{n}\sum_{j=1}^{n} S^j\bE^\theta\big[\epsilon^{j,\theta}_i\big]-\hat\mu_i\hat\mu_S \,|\, \hat\mu_S,\hat\sigma_S\bigg]}\\
&=\tfrac{1}{\hat\sigma_S^2}\bE^\theta\bigg[\alpha_i^\theta \hat\mu_S +\beta_i^\theta (\hat\sigma^2_S+\hat\mu_S^2)-\hat\mu_i\hat\mu_S \,|\, \hat\mu_S,\hat\sigma_S\bigg]
=\beta_i^\theta+\tfrac{\hat\mu_S}{\hat\sigma_S^2} \, \bE^\theta\bigg[\alpha_i^\theta +\beta_i^\theta \hat\mu_S-\hat\mu_i \,|\, \hat\mu_S,\hat\sigma_S\bigg].
\end{align*}
We use again \eqref{eq:residual.k} and obtain
\begin{align}\label{eq:mui}
    \hat \mu_i = \frac 1 n \sum_{j=1}^n X_i^j = \alpha_i^\theta + \beta_i^\theta \frac 1 n \sum_{j=1}^n S^j + \eta^\theta,
\end{align}
{with  $\eta^\theta= \frac{1}{n}\sum_{j=1}^n \epsilon_i^{j,\theta}$ satisfying $\bE^\theta[\eta^\theta \,|\, \hat\mu_S,\hat\sigma_S]=0$, so that}
\begin{align}
    \bE^\theta\big[ \hat \mu_i \,|\, \hat\mu_S,\hat\sigma_S \big] = \alpha_i^\theta +\beta_i^\theta \hat\mu_S,
\end{align}
and hence $\cI_1 = \beta_i^\theta$ yielding our first claim. With this result and using \eqref{eq:mui}, we obtain
\begin{align*}
\bE^\theta\left[\hat\alpha_i\,|\, \hat\mu_S,\hat\sigma_S\right] & =\bE^\theta\left[\hat\mu_i-\hat\beta_i\hat\mu_S\,|\, \hat\mu_S,\hat\sigma_S\right]\\
& =\bE^\theta\left[\hat\mu_i-\beta^\theta_i\hat\mu_S\,|\, \hat\mu_S,\hat\sigma_S\right]
 =\alpha_i^\theta
\end{align*}
which concludes the proof of~\eqref{eq:unbiased.ab}. \end{proof}

Proposition \ref{prop2.3} shows that we can estimate the portfolio risk expressed through $\hat\mu_S$ and $\hat\sigma_S$ without impacting the statistical unbiasedness property of the regression coefficients; cf. Equation \eqref{eq:hatR}. Consequently, the risk allocation estimation procedure could be split into two independent steps. First, we estimate the aggregated portfolio risk, and then we estimate the proper allocation of the risk within portfolio constituents. Now, we use this property to show that the allocation estimator given in~\eqref{eq:ai.normal3} satisfies the fairness property.

\begin{theorem}\label{th:normal.fair}
Assume that the allocation estimator $\hat B=(\hat B_1,\dots,\hat B_d)$ is given by \eqref{eq:ai.normal3} with $\hat R(S)$ as in \eqref{eq:hatR}. Then, the capital allocation $\hat B$ is fair.
\end{theorem}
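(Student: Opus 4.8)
\medskip
\noindent\textbf{Proof proposal.}\quad
The plan is to reduce the fairness identity \eqref{eq:fair} for $\hat B$ to two facts already available: the conditional unbiasedness of $\hat\alpha_i$ and $\hat\beta_i$ from Proposition~\ref{prop2.3}, and the defining property of the unbiased risk estimator $\hat R(S)$, namely $\rho^\theta\big(S+\hat R(S)\big)=0$ for every $\theta\in\Theta$ (equivalently $\bE^\theta[Z^\theta_{S+\hat R(S)}(S+\hat R(S))]=0$).

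First I would simplify the secured aggregated position. Since $\sum_{i=1}^d(X_i^j-\hat\mu_i)=S^j-\hat\mu_S$, summing the definition of $\widehat{\Cov}_{X_i,S}$ over $i$ yields $\sum_{i=1}^d\widehat{\Cov}_{X_i,S}=\hat\sigma_S^2$, hence $\sum_{i=1}^d\hat\beta_i=1$, and therefore $\sum_{i=1}^d\hat\alpha_i=\sum_{i=1}^d\hat\mu_i-\big(\sum_{i=1}^d\hat\beta_i\big)\hat\mu_S=0$. Consequently $\sum_{i=1}^d\hat B_i=\hat R(S)$, so that $S+\sum_{i=1}^d\hat B_i=S+\hat R(S)=:W$ and $Z^\theta_{S,\hat B}=Z^\theta_W$. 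Because $\hat R(S)=-\hat\mu_S+b_n\hat\sigma_S$ by \eqref{eq:hatR}, we have $W=S-\hat\mu_S+b_n\hat\sigma_S$, so that $W$, and hence the Radon--Nikodym derivative $Z^\theta_W$, is a measurable function of $(S,\hat\mu_S,\hat\sigma_S)$. This observation is what will let me bring Proposition~\ref{prop2.3} into play.

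Next I would use the population analogue of \eqref{eq:residual.k}, namely $X_i=\alpha_i^\theta+\beta_i^\theta S+\epsilon_i^\theta$ with $\epsilon_i^\theta$ centred and independent of $S$ under $\bP^\theta$, and (being a function of the population vector $X$) independent of the whole sample $\mathbf{X}^n$, hence of $\sigma(S,\mathbf{X}^n)\supseteq\sigma(W)$. Substituting this and $\hat B_i=-\hat\alpha_i+\hat\beta_i\hat R(S)$ gives
\[
X_i+\hat B_i=\beta_i^\theta\,W+(\alpha_i^\theta-\hat\alpha_i)+\epsilon_i^\theta+(\hat\beta_i-\beta_i^\theta)\hat R(S).
\]
I would then apply $\bE^\theta[Z^\theta_W\,\cdot\,]$ term by term: the first contribution equals $\beta_i^\theta\,\bE^\theta[Z^\theta_W W]=-\beta_i^\theta\,\rho^\theta(W)=0$ by the unbiasedness of $\hat R(S)$; the $\epsilon_i^\theta$ term equals $\bE^\theta[Z^\theta_W]\,\bE^\theta[\epsilon_i^\theta]=1\cdot 0=0$ by the independence noted above and the fact that $Z^\theta_W$ is a $\bP^\theta$--density; and for the last two terms I would condition on $(S,\hat\mu_S,\hat\sigma_S)$, pull out the $\sigma(S,\hat\mu_S,\hat\sigma_S)$--measurable factors $Z^\theta_W$ and $\hat R(S)$, and use that independence of the population from the sample gives $\bE^\theta[\hat\alpha_i\mid S,\hat\mu_S,\hat\sigma_S]=\bE^\theta[\hat\alpha_i\mid\hat\mu_S,\hat\sigma_S]=\alpha_i^\theta$ and, likewise, $\bE^\theta[\hat\beta_i\mid S,\hat\mu_S,\hat\sigma_S]=\beta_i^\theta$ by Proposition~\ref{prop2.3}. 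Both of these terms therefore vanish, so $\bE^\theta[Z^\theta_{S,\hat B}(X_i+\hat B_i)]=0$ for every $i\in\{1,\dots,d\}$ and every $\theta\in\Theta$, which is precisely the fairness property of Definition~\ref{def:def}.

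I expect the step demanding the most care to be the independence bookkeeping around the conditioning: one has to check carefully that $Z^\theta_W$ depends on the sample only through $(\hat\mu_S,\hat\sigma_S)$ --- so that the extra conditioning on the population variable $S$ is harmless and Proposition~\ref{prop2.3} applies verbatim --- and that the population residual $\epsilon_i^\theta$ is independent of the \emph{joint} vector $(S,\mathbf{X}^n)$, not merely of $S$ and of $\mathbf{X}^n$ separately. Everything else is the elementary algebra above together with the identity $\rho^\theta(Y)=-\bE^\theta[Z^\theta_Y Y]$ applied to $Y=W$.
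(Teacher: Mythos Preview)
Your proposal is correct and follows essentially the same route as the paper's proof: both establish $\sum_{i=1}^d\hat B_i=\hat R(S)$ via $\sum_i\hat\beta_i=1$ and $\sum_i\hat\alpha_i=0$, note that $Z^\theta_{S,\hat B}$ is $\sigma(S,\hat\mu_S,\hat\sigma_S)$-measurable, use Proposition~\ref{prop2.3} to swap $\hat\alpha_i,\hat\beta_i$ for $\alpha_i^\theta,\beta_i^\theta$ under $\bE^\theta[Z^\theta_{S,\hat B}\,\cdot\,]$, invoke unbiasedness of $\hat R(S)$ for the $W$-term, and finish with the independence of $\epsilon_i^\theta$ from $Z^\theta_{S,\hat B}$. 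The only difference is organizational---you decompose $X_i+\hat B_i$ algebraically up front and handle four pieces, whereas the paper first replaces $\hat\alpha_i,\hat\beta_i$ inside the expectation and then reduces---and your independence bookkeeping (flagging that $\epsilon_i^\theta\perp(S,\mathbf X^n)$ jointly is what is needed) is in fact more explicit than the paper's.
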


\begin{proof} In what follows we will simply write $\hat R$ instead of $\hat R(S)$.
 We note that for any $\theta \in \Theta$ the Radon-Nikodym density $Z^\theta_{S,\hat B}$ is $\sigma(S+\sum_{i=1}^d \hat B_i)$-measurable; see \cite[Proposition~6.2]{Cherny2006} and recall that
$\hat B_i = - \hat \alpha_i + \hat \beta_i \hat R$. Moreover, since
\begin{align*}
    \sum_{i=1}^d \hat \beta_i &= \frac{1}{\hat \sigma_S^2} {\sum_{i=1}^d {\widehat \Cov}_{ X_i, S}} = \frac{\hat \sigma_S^2}{\hat \sigma_S^2} =1
\end{align*}
we  obtain that
\begin{align*}
    \sum_{i=1}^d \hat \alpha_i &= \sum_{i=1}^d \hat \mu_i - \hat \mu_S \cdot \sum_{i=1}^d \hat \beta_i = \hat \mu_S - \hat \mu_S = 0.
\end{align*}
Consequently, as expected,
\begin{align} \label{sumhatAi}
    \sum_{i=1}^d \hat B_i = \hat R
\end{align}
and Equation \eqref{eq:hatR} yields that $Z^\theta_{S,\hat B}$ is $\sigma(\hat \mu_S,\hat \sigma_S,S)$-measurable.
With a view towards \eqref{eq:fair},  we compute
\begin{align*}
    \bE^\theta\Big[ Z^\theta_{S,\hat B} \hat \alpha_i \Big] &= \bE^\theta\Big[ Z^\theta_{S,\hat B}  \bE^\theta[ \hat \alpha_i \,|\, \hat\mu_S,\hat\sigma_S,S] \Big] \\
    &=  \bE^\theta\Big[ Z^\theta_{S,\hat B}  \bE^\theta[ \hat \alpha_i \,|\, \hat\mu_S,\hat\sigma_S] \Big]
     =  \bE^\theta\Big[ Z^\theta_{S,\hat B}   \alpha_i^\theta \Big],
\end{align*}
by Proposition \ref{prop2.3}. Analogously,
\begin{align*}
    \bE^\theta\Big[ Z^\theta_{S,\hat B} \hat \beta_i \Big] &=
       \bE^\theta\Big[ Z^\theta_{S,\hat B}   \beta_i^\theta \Big]
\end{align*}
and we obtain
\begin{align}
\bE^\theta\left[Z^\theta_{S,\hat B}\left(X_i +\hat B_i\right)\right] &=\bE^\theta\left[Z^\theta_{S,\hat B}\left(X_i -\hat\alpha_i+\hat\beta_i\hat R\right)\right]\nonumber\\
& =\bE^\theta\left[Z^\theta_{S,\hat B}\left(X_i -\alpha^\theta_i+\beta^\theta_i\hat R\right)\right].\label{eq:1.un}
\end{align}
Next, using \eqref{eq:unbiased} and \eqref{sumhatAi} yields that
\begin{equation}\label{eq:regr.s}
    0 = \bE^\theta\bigg[Z^\theta_{S,\hat B}\Big(S+\sum_{i=1}^{d}\hat B_i \Big)\bigg]=\bE^\theta\bigg[Z^\theta_{S,\hat B}\Big(S+\hat R\Big)\bigg].
\end{equation}
This result, together with representation \eqref{eq:residual.k} for $j=n+1$, and letting $X^{n+1}=X$, imply  that
\begin{align}
\eqref{eq:1.un}
    &= \bE^\theta\left[Z^\theta_{S,\hat B}\left(X_i -\alpha^\theta_i-\beta^\theta_iS\right)\right]
     =\bE^\theta\left[Z^\theta_{S,\hat B}\epsilon_i^\theta\right]
     =\bE^\theta\big[Z^\theta_{S,\hat B}\big]\bE^\theta\big[\epsilon_i^\theta\big]
     =0,\label{eq:ex2:normal}
\end{align}
where we used the fact that $(\epsilon_i^\theta,S)$ is bivariate normal with uncorrelated margins, so that $\epsilon_i^\theta$ is independent of $S$, and consequently from $Z^\theta_{S,\hat B}$. This concludes the proof.
\end{proof}

\section{Asymptotic fairness}

We now introduce the definition of fairness for a sequence of estimators, $(\hat A^n)_{n\in\bN}$, and we define the notion of asymptotic fairness.

\begin{definition}\label{def:asymptotic}
 A sequence of allocation estimators $(\hat A^n)_{n\in\bN}$ will be called \emph{fair} at $n\in\bN$, if $\hat A^n$ is fair. If fairness holds for all $n\in\bN$, we call the sequence $(\hat A^n)_{n\in\bN}$ fair. The sequence $(\hat A^n)_{n\in\bN}$ is called \emph{asymptotically fair} if
\begin{equation}\label{eq:fair.asymptotic}
\bE^\theta\Big[ Z_{S,\hat A^n}^\theta (X_i +\hat A^n_i)\Big] \xrightarrow{\,n\to\infty\,} 0,\quad i=1,2,\ldots,d,\ \textrm{and}\ \theta  \in \Theta.
\end{equation}
\end{definition}

In view of Theorem~\ref{th:normal.fair} it is clear   that the sequence of capital allocation estimators $(\hat B^n)_{n\in\bN}$ defined in \eqref{eq:ai.normal3}, for varying $n$, is a fair sequence.\footnote{Recall that the superscript $n$ is omitted in \eqref{eq:ai.normal3} for the ease of notation.}

In the rest of the section we assume that the risk allocation is done using ES with reference level $\alpha$.

\subsection{Asymptotic fairness of capital allocation estimators under normality}

 Using \eqref{eq:norm.es.true}, we now define a sequence $\hat C^n=(\hat C_1^n,\ldots,\hat C_d^n), n\in\bN,$ of ``plug-in type'' capital allocation estimators as
\begin{equation}\label{eq:normal.plugin}
\hat C_i^n :=-\hat \mu_i + \frac{{\widehat \Cov}_{X_i,S}}{\alpha\hat\sigma_S}\phi(\Phi^{-1}(\alpha)).
\end{equation}
The sequence $(\hat C^n)_{n\in\bN}$ is not fair, in general, but it is asymptotically fair, as proven below.

\begin{proposition}\label{pr:norm.asympt}
The sequence $(\hat C^n)_{n\in\bN}$ is asymptotically fair.
\end{proposition}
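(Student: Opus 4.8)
The strategy is to show that $\hat C^n$ differs from the known fair allocation estimator $\hat B^n$ of Theorem~\ref{th:normal.fair} only through the deterministic scalar multiplying $\hat\beta_i$, and that this scalar converges to the right limit. Concretely, comparing \eqref{eq:ai.normal3}--\eqref{eq:hatR} with \eqref{eq:normal.plugin}, one sees that $\hat B_i^n = -\hat\mu_i + \hat\beta_i\,\hat\sigma_S b_n$ while $\hat C_i^n = -\hat\mu_i + \hat\beta_i\,\hat\sigma_S\,\alpha^{-1}\phi(\Phi^{-1}(\alpha))$, since $\widehat\Cov_{X_i,S}/\hat\sigma_S = \hat\beta_i\hat\sigma_S$. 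Thus $\hat C_i^n = -\hat\alpha_i + \hat\beta_i \hat R_C$, where $\hat R_C := -\hat\mu_S + \hat\sigma_S c$ with $c := \alpha^{-1}\phi(\Phi^{-1}(\alpha))$ is the classical plug-in estimator for $\textrm{ES}_\alpha(S)$. The key analytic input is that $b_n \to c$ as $n\to\infty$: this is exactly the statement that the unbiased risk estimator $\hat R(S)$ of \cite{PiteraSchmidt2018} is consistent, recovering in the limit the true multiplier appearing in \eqref{eq:rhoS}. I would cite \cite[Example 5.4]{PiteraSchmidt2018} (or its proof) for $b_n\to c$, or alternatively derive it from the explicit form of $b_n$ there.

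With that reduction in place, the plan is to retrace the proof of Theorem~\ref{th:normal.fair} with $\hat R$ replaced by $\hat R_C$ and track the single non-vanishing error term. As in that proof, $\sum_{i=1}^d\hat\beta_i = 1$ and $\sum_{i=1}^d\hat\alpha_i = 0$, so $\sum_{i=1}^d \hat C_i^n = \hat R_C$, and hence $Z^\theta_{S,\hat C^n}$ is $\sigma(\hat\mu_S,\hat\sigma_S,S)$-measurable. Using Proposition~\ref{prop2.3} exactly as before to replace $\hat\alpha_i,\hat\beta_i$ by $\alpha_i^\theta,\beta_i^\theta$ under the expectation against $Z^\theta_{S,\hat C^n}$, and then substituting $X_i = \alpha_i^\theta + \beta_i^\theta S + \epsilon_i^\theta$ with $\epsilon_i^\theta$ independent of $(S,\hat\mu_S,\hat\sigma_S)$ and hence of $Z^\theta_{S,\hat C^n}$, I get
\begin{align*}
\bE^\theta\!\left[Z^\theta_{S,\hat C^n}\left(X_i+\hat C_i^n\right)\right]
&= \bE^\theta\!\left[Z^\theta_{S,\hat C^n}\left(X_i-\alpha_i^\theta-\beta_i^\theta S\right)\right]
 + \bE^\theta\!\left[Z^\theta_{S,\hat C^n}\,\beta_i^\theta\left(S+\hat R_C\right)\right]\\
&= \beta_i^\theta\,\bE^\theta\!\left[Z^\theta_{S,\hat C^n}\left(S+\hat R_C\right)\right].
\end{align*}
So everything reduces to showing the scalar error $\bE^\theta[Z^\theta_{S,\hat C^n}(S+\hat R_C)]$ tends to $0$; equivalently, writing $\hat R_C = \hat R + \hat\sigma_S(c-b_n)$ and using \eqref{eq:regr.s}, namely $\bE^\theta[Z^\theta_{S,\hat B^n}(S+\hat R)]=0$ together with $Z^\theta_{S,\hat C^n}=Z^\theta_{S,\hat B^n}$ (both depend on the data only through $\hat\mu_S,\hat\sigma_S,S$, and adding a deterministic-in-$n$ affine function of $\hat\mu_S,\hat\sigma_S$ does not change the aggregated position's law-invariant worst-case density beyond this $\sigma$-algebra — one should note here the density is that of $S$ itself shifted by a constant, so in fact $Z^\theta_{S,\hat B^n}=Z^\theta_{S,\hat C^n}=Z^\theta_S$), it remains to bound
\[
\bigl|\bE^\theta\!\left[Z^\theta_{S,\hat C^n}(S+\hat R_C)\right]\bigr|
= |c-b_n|\cdot\bigl|\bE^\theta\!\left[Z^\theta_S\,\hat\sigma_S\right]\bigr|.
\]

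\textbf{Finishing and the main obstacle.} To conclude I need $|c-b_n|\to 0$ (the consistency of the unbiased ES estimator, quoted from \cite{PiteraSchmidt2018}) and a uniform-in-$n$ bound on $\bE^\theta[Z^\theta_S\hat\sigma_S]$. The latter follows from Cauchy--Schwarz, $\bE^\theta[Z^\theta_S\hat\sigma_S]\le (\bE^\theta[(Z^\theta_S)^2])^{1/2}(\bE^\theta[\hat\sigma_S^2])^{1/2}$: under normality $Z^\theta_S = \alpha^{-1}\1_{\{S<q_S(\alpha)\}}$ has $\bE^\theta[(Z^\theta_S)^2]=\alpha^{-1}$, a fixed constant, and $\bE^\theta[\hat\sigma_S^2]=\frac{n-1}{n}\Var^\theta(S)\le \Var^\theta(S)$, so the product is bounded by $\alpha^{-1/2}\Var^\theta(S)^{1/2}$, finite and independent of $n$ for each fixed $\theta$. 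Hence $\bE^\theta[Z^\theta_{S,\hat C^n}(X_i+\hat C_i^n)] = \beta_i^\theta(c-b_n)\,\bE^\theta[Z^\theta_S\hat\sigma_S]\to 0$, which is precisely \eqref{eq:fair.asymptotic}. The only genuinely delicate point is the precise form and convergence of $b_n$; I expect this to be entirely handled by the cited result from \cite{PiteraSchmidt2018}, so the proof is essentially the bookkeeping above — the main obstacle is organizing the argument so that the worst-case density is recognized to be unchanged across $\hat B^n$, $\hat C^n$, and the true allocation (all equal to $Z^\theta_S$), which makes the single error term explicit.
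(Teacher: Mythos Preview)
Your reduction up to the identity
\[
\bE^\theta\!\left[Z^\theta_{S,\hat C^n}\left(X_i+\hat C_i^n\right)\right]
= \beta_i^\theta\,\bE^\theta\!\left[Z^\theta_{S,\hat C^n}\left(S+\hat R_C\right)\right]
\]
is correct and matches the paper's approach exactly: the paper also rewrites $\hat C_i^n=-\hat\alpha_i+\hat\beta_i\hat F^n$ (your $\hat R_C$) and retraces the proof of Theorem~\ref{th:normal.fair} to reduce to showing $\bE^\theta[Z^\theta_{S+\hat F^n}(S+\hat F^n)]\to 0$.

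The gap is in how you finish. Your assertion $Z^\theta_{S,\hat C^n}=Z^\theta_{S,\hat B^n}=Z^\theta_S$ is false. Cash-additivity gives $\bQ_{Y+c}=\bQ_Y$ only for \emph{deterministic} $c$, but $\hat R$ and $\hat R_C$ are random (they depend on the sample through $\hat\mu_S,\hat\sigma_S$). Hence the extremal density of $S+\hat R_C$ is genuinely $\sigma(\hat\mu_S,\hat\sigma_S,S)$-measurable and not equal to $Z^\theta_S=\alpha^{-1}\1_{\{S<q_S(\alpha)\}}$; likewise $S+\hat R$ and $S+\hat R_C$ differ by the random quantity $(c-b_n)\hat\sigma_S$, so their extremal densities need not agree. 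Consequently your displayed equality $\bigl|\bE^\theta[Z^\theta_{S,\hat C^n}(S+\hat R_C)]\bigr|=|c-b_n|\,|\bE^\theta[Z^\theta_S\hat\sigma_S]|$ is not justified, and neither is the intermediate step $\bE^\theta[Z^\theta_{S,\hat C^n}(S+\hat R)]=0$.

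The paper avoids this entirely by observing that $-\bE^\theta[Z^\theta_{S+\hat F^n}(S+\hat F^n)]=\rho^\theta(S+\hat F^n)$ by definition of the extremal measure, and then citing \cite[Proposition~6.4]{PiteraSchmidt2018} (asymptotic unbiasedness of the Gaussian plug-in ES estimator) to get $\rho^\theta(S+\hat F^n)\to 0$ directly. If you want to keep your decomposition $\hat R_C=\hat R+(c-b_n)\hat\sigma_S$, a correct way to finish is to use the $L^1$-Lipschitz property of ES (the densities in the robust representation are bounded by $\alpha^{-1}$), which gives
\[
\bigl|\rho^\theta(S+\hat R_C)-\rho^\theta(S+\hat R)\bigr|\le \tfrac{1}{\alpha}\,\bE^\theta\bigl[|\hat R_C-\hat R|\bigr]=\tfrac{|c-b_n|}{\alpha}\,\bE^\theta[\hat\sigma_S]\to 0,
\]
and combine with $\rho^\theta(S+\hat R)=0$ from \eqref{eq:regr.s}. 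This still requires $b_n\to c$, which you should source precisely rather than infer from the existence of the unbiased estimator.
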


\begin{proof}
Set $\hat F^n:=-\hat \mu_S + \hat {\sigma}_S\frac{\phi(\Phi^{-1}(\alpha))}{\alpha}$ and note that $\hat C_i^n =-\hat\alpha^n_i+\hat \beta^n_i \hat F^n,\  i=1,2,\ldots,d.$

Proceeding analogously  to  the proof of Theorem~\ref{th:normal.fair}, with $\hat B$ replaced by $\hat C^n$  and with $\hat R$ replaced by $\hat F^n$, we see that in order to prove proposition it is enough to show that for any $\theta\in\Theta$ we have
\begin{equation}
\bE^\theta\bigg[Z^\theta_{S+\hat F^n}\Big(S+\hat F^n\Big)\bigg] \xrightarrow{\,n\to\infty\,} 0.
\end{equation}
Now, note that
\[
\bE^\theta\bigg[Z^\theta_{S+\hat F^n}\Big(S+\hat F^n\Big)\bigg] =\rho_\theta(S+\hat F^n),
\]
and, in the terminology of  \cite{PiteraSchmidt2018},  $\hat F^n$ is the standard Gaussian expected shortfall plug-in estimator for $S$. Consequently, noting that for $d=1$ the definition of asymptotic fairness coincides with the definition of asymptotic unbiasedness given in \cite[Definition 6.1]{PiteraSchmidt2018}, and using \cite[Proposition 6.4]{PiteraSchmidt2018} we conclude the proof.
\end{proof}

\subsection{Asymptotic fairness of non-parametric capital allocation estimators}

We assume throughout this section that the population $X$, and hence the aggregated portfolio $S$, are continuous random variables under any $\theta\in\Theta$. Given that the ES is used to determine the risk allocation, and taking \eqref{ZY} and \eqref{Eq:ai} into account, we consider two natural non-parametric expected shortfall capital allocation estimators
\begin{align}
\check D^n_i&:=-\frac{\sum_{k=1}^{n}X_i^k\1_{\{S^k+\hat{\var}^n_{\alpha}\leq 0\}}}{n\alpha},\quad i=1,\ldots,d,\\
\hat D^n_i&:=-\frac{\sum_{k=1}^{n}X_i^k\1_{\{S^k+\hat{\var}^n_{\alpha}\leq 0\}}}{\sum_{k=1}^{n}\1_{\{S^k+\hat{\var}^n_{\alpha}\leq 0\}}},\quad i=1,\ldots,d,
\end{align}
where $\hat\var^n_{\alpha}:=-S^{(\lfloor n\alpha\rfloor+1)}$, with $S^{(j)}$ denoting the $j$th order statistics, and $\lfloor z \rfloor$ denoting the largest integer less or equal than $z$.

\begin{proposition}\label{pr:nonpar.fair}
The sequences $(\hat D^n)_{n\in\bN}$ and $(\check D^n)_{n\in\bN}$ are asymptotically fair.
\end{proposition}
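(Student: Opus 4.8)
The plan is to follow the same two-step logic used in the proof of Theorem~\ref{th:normal.fair}: first reduce the $d$-dimensional fairness (in the sense of the limit in \eqref{eq:fair.asymptotic}) to a one-dimensional asymptotic unbiasedness statement about the aggregated position, and then invoke the results of \cite{PiteraSchmidt2018} for that one-dimensional problem. Concretely, for the estimator $\hat D^n$ write
$\hat D^n_i = -\bE\big[\hat Z^n X_i^{\,n+1}\big]$-type expressions by setting $\hat Z^n := (n\alpha)^{-1}\sum_{k=1}^n \1_{\{S^k+\hat\var^n_\alpha\le 0\}}\delta_k$ heuristically; more carefully, note that $\check D^n_i$ is exactly the empirical analogue of $a_i = -\bE[Z_S X_i]$ with $Z_S$ as in \eqref{ZY}, and that $\hat D^n_i$ is the self-normalized variant. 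The key algebraic observation is the analogue of \eqref{sumhatAi}: summing over $i$, $\sum_{i=1}^d \check D^n_i = \check R^n$ and $\sum_{i=1}^d \hat D^n_i = \hat R^n$, where $\check R^n$ and $\hat R^n$ are precisely the nonparametric ES estimators for the aggregated sample $S^1,\dots,S^n$ studied in \cite{PiteraSchmidt2018} (the plug-in empirical ES and its self-normalized version).

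Second, I would argue that $Z^\theta_{S+\hat D^n}$ and $Z^\theta_{S+\check D^n}$ are $\sigma(S, S^1,\dots,S^n)$-measurable. Since $X$ is continuous under every $\bP^\theta$, $S+\sum_i \hat D^n_i = S + \hat R^n$ is continuous, so by \eqref{ZY} the density equals $\alpha^{-1}\1_{\{S+\hat R^n < q(\alpha)\}}$ for the appropriate conditional quantile, which depends only on $S$ and the sample aggregates $S^1,\dots,S^n$; in particular it does not depend on the individual margins except through $S$. Then, exactly as in \eqref{eq:1.un}-\eqref{eq:ex2:normal}, one decomposes $X_i = \alpha_i^\theta + \beta_i^\theta S + \epsilon_i^\theta$ under $\bP^\theta$, so that
\[
\bE^\theta\big[Z^\theta_{S+\hat D^n}(X_i + \hat D^n_i)\big]
= \bE^\theta\big[Z^\theta_{S+\hat D^n}(\alpha_i^\theta + \beta_i^\theta S + \epsilon_i^\theta + \hat D^n_i)\big].
\]
The $\epsilon_i^\theta$ term drops because $\epsilon_i^\theta$ is independent of $S$ and of the whole sample's aggregates — \emph{wait}, this is the subtle point: $\hat D^n_i$ itself depends on the individual margins $X_i^k$, not just on $S^k$, so $\hat D^n_i$ is \emph{not} independent of $\epsilon_i^\theta$, unlike in the normal-fairness proof where $\hat B_i$ was built from regression estimators that were conditionally unbiased. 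So the clean cancellation of the normal case does not survive verbatim; instead one must show the residual contribution vanishes \emph{in the limit}. The natural route is a law of large numbers / consistency argument: $\hat D^n_i \to a_i^\theta$ and $\check D^n_i \to a_i^\theta$ $\bP^\theta$-a.s.\ (and with enough uniform integrability, in $L^1$) by the same order-statistics and Glivenko--Cantelli arguments that underlie consistency of empirical ES, and $Z^\theta_{S+\hat D^n} \to Z^\theta_{S+a^\theta}$ in an appropriate sense; then one passes to the limit to get $\bE^\theta[Z^\theta_{S+a^\theta}(X_i+a_i^\theta)] = 0$, which is the exact fairness of the true allocation $a^\theta$ from \eqref{eq:fair.th}/\eqref{Eq:ai}.

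Alternatively, and perhaps more cleanly, I would reduce directly to the scalar result. Define $\hat G^n := S + \hat R^n$ (respectively $S + \check R^n$) where $\hat R^n, \check R^n$ are the nonparametric ES estimators of \cite{PiteraSchmidt2018}. By the argument above, $\sum_i \bE^\theta[Z^\theta_{S+\hat D^n}(X_i+\hat D^n_i)] = \bE^\theta[Z^\theta_{S+\hat R^n}(S+\hat R^n)] = \rho_\theta(S+\hat R^n)$, which tends to $0$ by the asymptotic unbiasedness of these nonparametric ES estimators established in \cite{PiteraSchmidt2018} (the analogue of how Proposition~\ref{pr:norm.asympt} invoked \cite[Proposition 6.4]{PiteraSchmidt2018}, here one needs the corresponding nonparametric statement). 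This handles the ``sum'' direction. For the full coordinatewise statement one still needs the residual term $\bE^\theta[Z^\theta_{S+\hat D^n}(X_i - \alpha_i^\theta - \beta_i^\theta S)] = \bE^\theta[Z^\theta_{S+\hat D^n}\epsilon_i^\theta] \to 0$ and $\bE^\theta[Z^\theta_{S+\hat D^n}(\hat D^n_i - a_i^\theta)] \to 0$.

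\textbf{Main obstacle.} The principal difficulty is precisely that, unlike the normal-parametric estimator $\hat B$, the nonparametric allocation estimators mix the aggregate-risk estimation with the within-portfolio split: $\hat D^n_i$ and $\check D^n_i$ are functionals of the full sample $\mathbf{X}^n$ (through the empirical tail event $\{S^k + \hat\var^n_\alpha \le 0\}$ weighted by $X_i^k$), so there is genuine statistical dependence between the estimated allocation, the density $Z^\theta_{S+\hat D^n}$, and the residuals $\epsilon_i^\theta$. There is no finite-sample cancellation, so one is forced into an asymptotic argument combining (i) a.s.\ consistency of the order-statistic-based VaR threshold $\hat\var^n_\alpha$, (ii) a strong law for the self-normalized and plain tail averages that define $\hat D^n_i$ and $\check D^n_i$, (iii) continuity/convergence of $Z^\theta_{S+\hat D^n}$ to $Z^\theta_{S+a^\theta}$ (using continuity of $S$ to avoid atoms at the quantile), and (iv) a uniform-integrability or dominated-convergence step to move the limit inside the expectations. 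The quantity $Z^\theta$ is bounded by $1/\alpha$, which helps with (iv), but one must still control the product $Z^\theta_{S+\hat D^n}\cdot(X_i+\hat D^n_i)$, i.e.\ integrability of $X_i$ restricted to the (random, shrinking-to-true) tail event; this is where the continuity assumption on $X$ and the i.i.d.\ structure under $\bP^\theta$ are essential. I would expect the self-normalized estimator $\hat D^n$ to require marginally more care than $\check D^n$ because of the random denominator $\sum_k \1_{\{S^k+\hat\var^n_\alpha\le 0\}}$, which equals $\lfloor n\alpha\rfloor$ or $\lfloor n\alpha\rfloor+1$ deterministically up to the tie-breaking in the order statistics — so in fact this denominator is essentially deterministic, which simplifies matters and is worth exploiting explicitly.
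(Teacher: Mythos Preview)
Your proposal identifies many of the right ingredients --- consistency of $\hat D^n_i \to a_i^\theta$, convergence $Z^\theta_{S,\hat D^n} \to Z^\theta_{S,a^\theta}$, boundedness of $Z^\theta$ by $1/\alpha$, the deterministic denominator $\lfloor n\alpha\rfloor +1$ --- but the organizing decomposition you reach for is wrong, and the correct one is not clearly stated.

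The regression decomposition $X_i = \alpha_i^\theta + \beta_i^\theta S + \epsilon_i^\theta$ with $\epsilon_i^\theta$ \emph{independent} of $S$ is a consequence of joint normality; in the nonparametric setting you only get uncorrelatedness, so $\bE^\theta[Z^\theta_{S,\hat D^n}\epsilon_i^\theta]$ has no reason to vanish even in the limit (since $Z^\theta_{S,a^\theta}=g(S)$ is a nonlinear function of $S$). This is not a technicality you can patch; the whole $\epsilon$-cancellation route is unavailable here. You partially sense this but then still try to force the residual term $\bE^\theta[Z^\theta_{S+\hat D^n}\epsilon_i^\theta]\to 0$ through.

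The paper's decomposition is on the \emph{density}, not the integrand: write $Z^\theta_{S,\hat D^n} = Z^\theta_n + Z^\theta_{S,a^\theta}$ with $Z^\theta_n := Z^\theta_{S,\hat D^n} - Z^\theta_{S,a^\theta}$, so that
\[
\bE^\theta\big[Z^\theta_{S,\hat D^n}(X_i+\hat D^n_i)\big]
= \bE^\theta\big[Z^\theta_n(X_i+\hat D^n_i)\big] + \bE^\theta\big[Z^\theta_{S,a^\theta}(X_i+\hat D^n_i)\big].
\]
The first term goes to zero because $\bP^\theta[|Z^\theta_n|\neq 0]\to 0$ (this is the lemma you correctly anticipate as item (iii)), combined with integrability of $X_i$ and the bound $|Z^\theta_n|\le 1/\alpha$. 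For the second term, the decisive observation --- which you do not isolate --- is that $Z^\theta_{S,a^\theta}$ is a function of the \emph{new} observation $S$ only, hence \emph{independent of the entire sample} and in particular of $\hat D^n_i$. Since $\bE^\theta[Z^\theta_{S,a^\theta}(X_i+a_i^\theta)]=0$ by exact fairness of the true allocation, the second term reduces to $\bE^\theta[Z^\theta_{S,a^\theta}]\cdot\bE^\theta[\hat D^n_i - a_i^\theta]$, and one then shows $\bE^\theta[\hat D^n_i]\to a_i^\theta$ via the empirical-quantile consistency $-\hat\var^n_\alpha \to q^\theta_S(\alpha)$. Your ``reduce to the scalar result'' detour is unnecessary and, as you note yourself, does not give the coordinatewise conclusion; the independence-of-sample trick handles each $i$ directly.
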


We will show only that $\hat D^n_i$ is asymptotically fair. The proof for $\check{D}_i^n$ follows by similar arguments.
Before we prove Proposition~\ref{pr:nonpar.fair}, let us introduce supplementary notation and a lemma that will be useful for the proof. For any $\theta\in\Theta$ we use $a^{\theta}=(a^{\theta}_1,\ldots,a^{\theta}_n)$ to denote the true expected shortfall allocation for $X$ under $\theta$ and so we have (cf.  \eqref{ZY})
\[
Z^\theta_{S,a^\theta}=\frac{1}{\alpha}\1_{\left\{S+\sum_{i=1}^{d}a^{\theta}_i\leq q^\theta_{S+\sum_{i=1}^{d}a^{\theta}_i}(\alpha)\right\}},
\]
where  $q^{\theta}_{S+\sum_{i=1}^{d}a^{\theta}_i}(\alpha)$ denotes the true $\alpha$-quantile of $S+\sum_{i=1}^{d}a^{\theta}_i$ under $\mathbb{P}^\theta$.
Similarly, we have
\[
Z^\theta_{S,\hat D}=\frac{1}{\alpha}\1_{\left\{S+\sum_{i=1}^{d}\hat D^n_i\leq q^\theta_{S+\sum_{i=1}^{d}\hat D^n_i}(\alpha)\right\}}.
\]

\begin{lemma}\label{lm:Z.ok}
For any $\theta\in\Theta$ we get $Z_{S,\hat D^n}^\theta\xrightarrow{\,\bP^{\theta}\,} Z^\theta_{S,a^\theta}$, as $n\to\infty$.
\end{lemma}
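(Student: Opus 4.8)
The plan is to establish the convergence $Z_{S,\hat D^n}^\theta\to Z^\theta_{S,a^\theta}$ in $\bP^\theta$-probability by showing that (i) the secured aggregated position $S+\sum_{i=1}^d\hat D^n_i$ converges in $\bP^\theta$-probability to $S+\sum_{i=1}^d a^\theta_i$, (ii) the corresponding $\alpha$-quantiles converge, and (iii) these two facts together force the indicator sets to agree asymptotically, up to a $\bP^\theta$-null set arising from the boundary $\{S+\sum a^\theta_i = q^\theta(\alpha)\}$, which has measure zero by continuity of $S$. Fix $\theta\in\Theta$; to lighten notation write $\var^n_\alpha=\hat\var^n_\alpha$, $T^n:=S+\sum_{i=1}^d \hat D^n_i$ and $T:=S+\sum_{i=1}^d a^\theta_i$.

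The first step is to identify $\sum_{i=1}^d \hat D^n_i$. Summing the definition of $\hat D^n_i$ over $i$, the numerators telescope to $-\sum_{k=1}^n S^k\1_{\{S^k+\var^n_\alpha\le 0\}}$ and the denominator is $\sum_{k=1}^n \1_{\{S^k+\var^n_\alpha\le 0\}}$, so $\sum_{i=1}^d \hat D^n_i$ is exactly the empirical ES estimator of $S$ based on the order statistics, namely $-\big(\sum_{k=1}^n S^k\1_{\{S^k \le S^{(\lfloor n\alpha\rfloor+1)}\}}\big)\big/\big(\sum_{k=1}^n \1_{\{S^k\le S^{(\lfloor n\alpha\rfloor+1)}\}}\big)$. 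By the Glivenko--Cantelli theorem applied to the i.i.d.\ sample $S^1,\dots,S^n$ and continuity of the law of $S$ under $\bP^\theta$, one has $\hat\var^n_\alpha\to q^\theta_S(\alpha)$ a.s.\ (equivalently $\var^n_\alpha\to \var^\theta_\alpha(S)$), and the empirical tail average converges a.s.\ to $\textrm{ES}^\theta_\alpha(S)=\rho_\theta(S)$; hence $\sum_{i=1}^d \hat D^n_i\to \sum_{i=1}^d a^\theta_i$ a.s.\ (this is the classical consistency of the nonparametric ES estimator, and $\sum_i a^\theta_i=\rho_\theta(S)$ by the balance condition \eqref{eq:1} together with \eqref{Eq:ai}). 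Consequently $T^n\to T$ a.s., a fortiori in $\bP^\theta$-probability.

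The second step is quantile convergence. Since $T^n = S + c_n$ with $c_n:=\sum_{i=1}^d\hat D^n_i$ a (random) constant converging a.s.\ to $c:=\sum_{i=1}^d a^\theta_i$, and since $\var^n_\alpha$, $c_n$ are both measurable functions of the sample that converge a.s., cash-additivity gives $q^\theta_{T^n}(\alpha)=q^\theta_S(\alpha)+c_n$ on a.e.\ realization of the sample, so $q^\theta_{T^n}(\alpha)\to q^\theta_S(\alpha)+c = q^\theta_T(\alpha)$ a.s. The third step assembles the pieces: on the event where $c_n\to c$ and $q^\theta_{T^n}(\alpha)\to q^\theta_T(\alpha)$, for $\bP^\theta$-a.e.\ $\omega$ with $T(\omega)\ne q^\theta_T(\alpha)$ — and the exceptional set $\{T=q^\theta_T(\alpha)\}$ is $\bP^\theta$-null because $S$ is continuous — we have $\1_{\{T^n\le q^\theta_{T^n}(\alpha)\}}(\omega)=\1_{\{T\le q^\theta_T(\alpha)\}}(\omega)$ for all $n$ large (depending on $\omega$). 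Therefore $Z^\theta_{S,\hat D^n}=\tfrac1\alpha\1_{\{T^n\le q^\theta_{T^n}(\alpha)\}}\to \tfrac1\alpha\1_{\{T\le q^\theta_T(\alpha)\}}=Z^\theta_{S,a^\theta}$ $\bP^\theta$-a.s., which in particular yields convergence in $\bP^\theta$-probability, as claimed.

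The main obstacle is the a.s.\ consistency statements of the first step: one must control the random threshold $\hat\var^n_\alpha=-S^{(\lfloor n\alpha\rfloor+1)}$ simultaneously with the tail sum, since both the numerator set $\{S^k+\hat\var^n_\alpha\le 0\}$ and the denominator count depend on the empirical order statistic. The clean way to handle this is to pass through the empirical distribution function $F_n$ of $S^1,\dots,S^n$: Glivenko--Cantelli gives $\sup_x|F_n(x)-F^\theta_S(x)|\to 0$ a.s., continuity and strict-monotonicity-at-the-quantile arguments give $S^{(\lfloor n\alpha\rfloor+1)}\to q^\theta_S(\alpha)$ a.s., and then the tail average $\tfrac1{n\alpha}\sum_k S^k\1_{\{S^k\le S^{(\lfloor n\alpha\rfloor+1)}\}}$ is compared to $\bE^\theta[S\1_{\{S\le q^\theta_S(\alpha)\}}]/\alpha$ by splitting off the (vanishing) contribution of the symmetric-difference region $\{S^k$ between $S^{(\lfloor n\alpha\rfloor+1)}$ and $q^\theta_S(\alpha)\}$, whose empirical mass is $O(1/n)$ plus the $o(1)$ from Glivenko--Cantelli, and whose contribution is negligible by integrability of $S$ under $\bP^\theta$. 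This is standard but is the technical heart; everything downstream is soft.
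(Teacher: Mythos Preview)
Your first step (consistency of the empirical ES estimator $\sum_i \hat D^n_i$) is correct and parallels the paper's argument, though you claim a.s.\ convergence via Glivenko--Cantelli while the paper proves convergence in $\bP^\theta$-probability via a trimmed-mean decomposition; either works.

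The genuine gap is in your second step. The quantity $q^\theta_{T^n}(\alpha)$ is, in the paper's framework, the \emph{unconditional} $\alpha$-quantile of the random variable $T^n=S+c_n$ under $\bP^\theta$, hence a single deterministic number determined by the joint law of $(S,c_n)$. Cash-additivity in the form $q^\theta_{S+c}(\alpha)=q^\theta_S(\alpha)+c$ holds only for deterministic $c$; for random $c_n$ the identity $q^\theta_{T^n}(\alpha)=q^\theta_S(\alpha)+c_n$ is false, since the left-hand side is a number and the right-hand side is a nondegenerate random variable. Indeed, if your identity held, one would get
\[
\{T^n\le q^\theta_{T^n}(\alpha)\}=\{S+c_n\le q^\theta_S(\alpha)+c_n\}=\{S\le q^\theta_S(\alpha)\}
\]
for every $n$, i.e.\ $Z^\theta_{S,\hat D^n}=Z^\theta_{S,a^\theta}$ identically, which is strictly stronger than the lemma and not true in general.

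The fix is exactly what the paper does: from $R_n\to r$ in probability one gets $S+R_n\to S+r$ in distribution, and since the limit law is continuous this yields convergence of the (deterministic) quantiles $q^\theta_{S+R_n}(\alpha)\to q^\theta_{S+r}(\alpha)$. With this in hand, your third step goes through essentially unchanged: on $\{T\ne q^\theta_T(\alpha)\}$ (a set of full measure) the indicators $\1_{\{T^n\le q^\theta_{T^n}(\alpha)\}}$ eventually agree with $\1_{\{T\le q^\theta_T(\alpha)\}}$, giving convergence in probability. The paper phrases this last part by directly bounding $\bP^\theta[Z^\theta_{S,\hat D^n}\ne Z^\theta_{S,a^\theta}]$ via the symmetric difference of the two indicator events, but the content is the same.
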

\begin{proof}
Let us fix $\theta\in\Theta$. For brevity we will use the notation $r :=\sum_{i=1}^{d}a^\theta$ and $R_n :=\sum_{i=1}^{d} \hat D^n_i$.
First,  using classical trimmed-mean convergence arguments (see e.g. \cite{Stigler1973}) we will show that
\begin{equation}\label{eq:Zto0.3}
R_n \xrightarrow{\,\bP^{\theta}\,} r,\quad n\to\infty.
\end{equation}
Let $I_n :=\sum_{k=1}^n \1_{\{S^k-q^{\theta}_{S}(\alpha) \leq 0\}}$ and $a_n:=\lfloor n\alpha\rfloor +1$, $n\in\bN$. Since $a_n=\sum_{k=1}^{n}\1_{\{S^k+\hat{\var}^n_{\alpha}\leq 0\}}$, we get
\begin{align*}
R_n & = -\frac{\sum_{k=1}^{n}S^k\1_{\{S^k+\hat{\var}^n_{\alpha}\leq 0\}}}{\sum_{k=1}^{n}\1_{\{S^k+\hat{\var}^n_{\alpha}\leq 0\}}}=-\frac{1}{a_n}\sum_{k=1}^{a_n}S^{(k)}=-\frac{1}{a_n}\sum_{k=1}^{I_n}S^{(k)}+\epsilon_n,
\end{align*}
where $\epsilon_n:=-\frac{1}{a_n}\left(\1_{\{a_n> I_n\}} \sum_{k=I_n+1}^{a_n}S^{(k)}-\1_{\{a_n < I_n\}} \sum_{k=a_n+1}^{I_n}S^{(k)}\right)$. Next, we will show that $\epsilon_n \xrightarrow{\,\mathbb{P}^\theta\,}0$.
Due to the consistency of the empirical quantiles, we have that $S^{(a_n)}\xrightarrow{\mathbb{P}^\theta}q^{\theta}_{S}(\alpha)$ and $S^{(I_n)}\xrightarrow{\mathbb{P}^\theta}q^{\theta}_{S}(\alpha)$, as $n\to \infty$. Hence, noting that
\begin{align*}
0 & \leq  \abs{\epsilon_n} \leq \abs{\frac{I_n-a_n}{a_n}} \max\left\{\abs{S^{(a_n)}},\abs{S^{(I_n)}}\right\},
\end{align*}
it is sufficient to prove that $\left| \frac{I_n-a_n}{a_n}\right|\xrightarrow{\,\mathbb{P}^\theta\,}0$.  For this, we observe that
$$
\frac{I_n-a_n}{a_n}=\frac{n}{a_n}\left(\frac{1}{n}I_n-\alpha\right)+\frac{n\alpha-a_n}{a_n}.
$$
Since $\lim_{n\to\infty}\frac{n\alpha-a_n}{a_n} =0$, $\lim_{n\to\infty}\frac{n}{a_n} =\frac{1}{\alpha}$, and, by the Law of Large Numbers, $\left(\tfrac{1}{n}I_n-\alpha\right)\xrightarrow{\,\mathbb{P}^\theta\,}0$, we have that $\epsilon_n \xrightarrow{\,\mathbb{P}^\theta\,}0$.
Also, by the Law of Large Numbers we get at once that
\[
\frac{1}{a_n}\sum_{k=1}^{I_n}S^{(k)}=\frac{n}{a_n}\left(\frac{1}{n}\sum_{k=1}^{n}S^{k}\1_{\left\{S^k \leq q^{\theta}_{S}(\alpha)\right\}}\right)\,\xrightarrow{\,\,\,\mathbb{P}^\theta\,\,\,}\, \frac{1}{\alpha}\bE\left[S\1_{\{S\leq q^{\theta}_{S}(\alpha)\}}\right] =-r,
\]
which concludes the proof of \eqref{eq:Zto0.3}.


Next, for a fixed $\epsilon \in (\frac{1}{\alpha},0)$, we get
\begin{align}
\bP^{\theta}\left[|Z_{S,\hat D^n}-Z^\theta_{S,a^\theta}|>\epsilon \right] &=\bP^{\theta}\left[|Z_{S,\hat D^n}-Z^\theta_{S,a^\theta}|\neq 0 \right]\nonumber\\
& = \bP^{\theta}\left[ \{S+R_n \leq q^{\theta}_{S+R_n}(\alpha)\}\cap \{S+r >q^{\theta}_{S+r}(\alpha)\} \right]\label{eq:Zto0.1}\\
&\quad +\bP^{\theta}\left[ \{S+R_n > q^{\theta}_{S+R_n}(\alpha)\}\cap \{S+r \leq q^{\theta}_{S+r}(\alpha)\} \right].\label{eq:Zto0.2}
\end{align}
We want to show that \eqref{eq:Zto0.1} and \eqref{eq:Zto0.2} go to zero as $n\to\infty$. For brevity, we show the proof only for \eqref{eq:Zto0.1}; the proof for \eqref{eq:Zto0.2} is analogous. For any $\epsilon_2 >0$ we get
\begin{align}
\eqref{eq:Zto0.1} &= \bP^{\theta}\left[ \{q^{\theta}_{S+r}(\alpha) <S+r \leq q^{\theta}_{S+r-(r-R_n)}(\alpha)+(r-R_n)\} \right]\nonumber\\
&\leq \bP^{\theta}\left[ \{q^{\theta}_{S+r}(\alpha) <S+r \leq q^{\theta}_{S+r-(r-R_n)}(\alpha)+|r-R_n|\} \right]\nonumber\\
&\leq \bP^{\theta}\left[ \{ |r-R_n| \geq \epsilon_2\}\right]+\bP^{\theta}\left[ \{q^{\theta}_{S+r}(\alpha) <S+r \leq q^{\theta}_{S+r-(r-R_n)}(\alpha)+\epsilon_2\}\right]\label{eq:Zto0.4}.
\end{align}
Using \eqref{eq:Zto0.3}, and recalling that convergence in probability implies convergence in distribution which in turn implies convergence of quantiles (at continuity points) for $n\to\infty$ we get
\begin{equation}\label{eq:Zto0.5}
\bP^{\theta}\left[ \{ |r-R_n| \geq \epsilon_2\}\right]\to 0\quad \textrm{and}\quad  q^{\theta}_{S+r-(r-R_n)}(\alpha) \to q^{\theta}_{S+r}(\alpha).
\end{equation}
Combining \eqref{eq:Zto0.4} with \eqref{eq:Zto0.5}, noting that the choice of $\epsilon_2$ was arbitrary, and that $S$ is continuous, we conclude the proof.
\end{proof}

Now, we are ready to prove Proposition~\ref{pr:nonpar.fair}.
\begin{proof}[Proof of Proposition~\ref{pr:nonpar.fair}]
Let us fix $\theta\in\Theta$ and $i\in \{1,\ldots,d\}$. We want to show that
\[
\bE^\theta\Big[ Z_{S,\hat D^n}^\theta (X_i +\hat D^n_i)\Big]\to 0,\quad n\to\infty.
\]
Noting that
\[
\bE^\theta\Big[ Z_{S,\hat D^n}^\theta (X_i +\hat D^n_i)\Big]
=\bE^\theta\Big[ Z^\theta_n(X_i +\hat D^n_i)\Big]+\bE^\theta\Big[ Z^\theta_{S,a^\theta} (X_i +\hat D^n_i)\Big],
\]
where $Z^\theta_n:=Z_{S,\hat D^n}^\theta-Z^\theta_{S,a^\theta}$, we need to prove that
\begin{equation}\label{eq:cons1}
\bE^\theta\Big[ Z^\theta_n(X_i +\hat D^n_i)\Big]\to 0,\quad n\to\infty,
\end{equation}
and
\begin{equation}\label{eq:cons2}
\bE^\theta\Big[ Z^\theta_{S,a^\theta} (X_i +\hat D^n_i)\Big]\to 0,\quad n\to\infty.
\end{equation}
We start with the proof of~\eqref{eq:cons1}. Noting that for any $n\in\bN$ we have $|Z^\theta_n|\leq \frac{1}{\alpha}$
and
\[
\sum_{k=1}^{n}\1_{\{S^k+\hat{\var}^n_{\alpha}\leq 0\}} =\lfloor n\alpha\rfloor +1,
\]
we get
\begin{align}
\left|\bE^\theta\Big[Z^\theta_n (X_i +\hat D^n_i)\Big]\right| & \leq \bE^\theta\Big[ |Z^\theta_n| (|X_i| +|\hat D^n_i|)\Big]\nonumber\\
&\leq  \frac{1}{\alpha}\left(\bE^\theta\Big[ \1_{\{\left|Z^\theta_n\right|\neq 0\}} |X_i|\Big]+\bE^\theta\Big[ \1_{\{\left|Z^\theta_n\right|\neq 0\}} |\hat D^n_i|\Big]\right)\nonumber\\
&\leq  \frac{1}{\alpha}\left(\bE^\theta\Big[ \1_{\{\left|Z^\theta_n\right|\neq 0\}} |X_i|\Big]+\frac{1}{\lfloor n\alpha\rfloor +1}\bE^\theta\Big[ \1_{\{\left|Z^\theta_n\right|\neq 0\}}\sum_{k=1}^{n}|X_i^k|\1_{\{S^k+\hat{\var}^n_{\alpha}\leq 0\}}\Big]\right)\nonumber\\
&\leq  \frac{1}{\alpha}\left(\bE^\theta\Big[ \1_{\{\left|Z^\theta_n\right|\neq 0\}} |X_i|\Big]+\frac{1}{\lfloor n\alpha\rfloor +1}\bE^\theta\Big[ \1_{\{\left|Z^\theta_n\right|\neq 0\}}\sum_{k=1}^{n}|X_i^k|\Big]\right)\nonumber\\
&\leq  \frac{1}{\alpha}\left(\bE^\theta\Big[ \1_{\{\left|Z^\theta_n\right|\neq 0\}} |X_i|\Big]+\frac{n}{\lfloor n\alpha\rfloor +1}\bE^\theta\Big[ \1_{\{\left|Z^\theta_n\right|\neq 0\}}|X_i^1|\Big]\right).\label{eq:cons3}
\end{align}
Now, noting that  $\1_{\{\left|Z^\theta_n\right|\neq 0\}}=\1_{\{\left|Z^\theta_n\right|< \frac{1}{2\alpha}\}}$ and using Lemma~\ref{lm:Z.ok} we get
\[
\bP^{\theta}\left[|Z^\theta_n|\neq 0\right]\to 0,\quad n\to\infty.
\]
Combining this with \eqref{eq:cons3}, noting that $|X_i|$ and $|X_i^1|$ are integrable, and $\frac{n}{\lfloor n\alpha\rfloor +1}\to \frac{1}{\alpha}$ as $n\to\infty$, we conclude the proof of~\eqref{eq:cons1}.

Next, we prove~\eqref{eq:cons2}. Recalling that $a^\theta$ is a true allocation for $X$ under $\theta$ we get
\[
\bE^\theta\Big[ Z^\theta_{S,a^\theta} (X_i +\hat D^n_i)\Big]=\bE^\theta\Big[ Z^\theta_{S,a^\theta} (X_i +a_i^{\theta})\Big]+\bE^\theta\Big[ Z^\theta_{S,a^\theta} (\hat D^n_i-a_i^\theta)\Big]=\bE^\theta\Big[ Z^\theta_{S,a^\theta} (\hat D^n_i-a_i^\theta)\Big].
\]
Consequently, noting that $Z^\theta_{S,a^\theta}$ and $\hat D^n_i$ are independent under $\bP^{\theta}$ we get
\begin{align}
\bE^\theta\Big[ Z^\theta_{S,a^\theta} (X_i +\hat D^n_i)\Big] &=\bE^\theta\Big[ Z^\theta_{S,a^\theta}\Big] \bE^\theta\Big[\hat D^n_i-a_i^\theta\Big]\nonumber\\
& = -\bE^\theta\left[\frac{\sum_{k=1}^{n}X_i^k\1_{\{S^k+\hat{\var}^n_{\alpha}\leq 0\}}}{\sum_{k=1}^{n}\1_{\{S^k+\hat{\var}^n_{\alpha}\leq 0\}}}+a_i^\theta\right]\nonumber\\
& =  -\frac{1}{\lfloor n\alpha\rfloor +1}\bE^\theta\left[\sum_{k=1}^{n}(X_i^k+a_i^\theta)\1_{\{S^k+\hat{\var}^n_{\alpha}\leq 0\}}\right]\nonumber\\
& =  -\frac{n}{\lfloor n\alpha\rfloor +1}\bE^\theta\left[ (X_i^1+a_i^\theta)\1_{\{S^1+\hat{\var}^n_{\alpha}\leq 0\}}\right]\nonumber\\
& =-\frac{n}{\lfloor n\alpha\rfloor +1}\bE^\theta\left[ (X_i^1+a_i^\theta)\left(\1_{\{S^1+\hat{\var}^n_{\alpha}\leq 0\}}-\1_{\{S^1\leq q^\theta_{S}(\alpha)\}}\right)\right].\nonumber\\
& \leq\frac{n}{\lfloor n\alpha\rfloor +1}\bE^\theta\left[ |X_i^1+a_i^\theta|\1_{A_n}\right],\label{eq:new.2nd.1}
\end{align}
where $A_n:=\{q^\theta_{S}(\alpha)<S^1\leq -\hat{\var}^n_{\alpha}\} \cup \{-\hat{\var}^n_{\alpha}<S^1\leq q^\theta_{S}(\alpha)\}$, and where in the last equality we used the property $\bE^\theta\left[ (X_i^1+a_i^\theta)\1_{\{S^1\leq q^\theta_{S}(\alpha)\}}\right]=0$. By similar reasoning as in \eqref{eq:Zto0.4}, we get
\begin{align*}
\bP^\theta[A_n]
& \leq\bP^\theta[|q^\theta_{S}(\alpha)+\hat{\var}^n_{\alpha}|>\epsilon] +\bP^\theta[|S^1+\hat{\var}^n_{\alpha}|<\epsilon],
\end{align*}
for any $\epsilon>0$. Since $-\hat{\var}^n_{\alpha}$ is a consistent estimator of $q^\theta_{S}(\alpha)$, we conclude that $\bP^\theta[|q^\theta_{S}(\alpha)+\hat{\var}^n_{\alpha}|>\epsilon] \to 0$, as $n\to\infty$. Consequently, as the choice of $\epsilon$ was arbitrary and $S^1$ is continuous, we obtain that $\bP^\theta[A_n] \to 0$, as $n\to\infty$. Combining this with \eqref{eq:new.2nd.1}, and since $ |X_i^1+a_i^\theta|$ is integrable, and $\frac{n}{\lfloor n\alpha\rfloor +1}\to \frac{1}{\alpha}$ as $n\to\infty$, the proof of~\eqref{eq:cons2} is complete.

\end{proof}

\section{Backtesting and numerical examples}\label{sec:backtesting}

In this section we  analyze the proposed fair capital allocation methodology via examples using simulated data and real market data. It goes without saying that any quantitative methodology used for measuring and allocating risk relies on an adopted formal model. It also goes without saying that actual results of risk measurement and/or risk allocation need to be tested for their adequacy. Often, testing adequacy of the results of risk measurement is done in practice using backtesting, and we will use this approach in testing the estimation procedures of fair capital allocation introduced in the previous sections.

\emph{Backtesting}, applied for risk measurement in the financial context,  can be summarized as follows: given a time series of capital forecasts, one compares these forecasts with the realized losses; the accumulated performance is the key ingredient of the backtesting.
Backtesting might be also treated as a specific case of assessment of quality of a {\it point forecast}, which aims at assessing whether the forecasted capital is sufficient; see \cite{Ziegel2014,SchKatGne2015,NolZie2017}. In particular, backtesting value-at-risk goes back to~\cite{Kupiec73} and recently has gained a lot of practical and theoretical interest; see \mbox{\cite{AcerbiSzekely2014,PiteraSchmidt2018}} for further details on this topic and the related literature.
Undoubtedly, similar backtesting procedure should be developed for testing the adequacy of risk capital allocation methodologies.

We focus our attention on assessing the performance of a statistical capital allocation methodology when the underlying reference risk measure is expected shortfall at the fixed level $\alpha\in (0,1]$, used in computing of the values of our estimators. For this purpose we propose two backtesting frameworks: \begin{itemize}\addtolength{\itemsep}{-0.5\baselineskip}
\item absolute deviation from fairness backtesting;
\item risk level shifts adjustments backtesting.
\end{itemize}
The backtesting framework adopted for assessment of adequacy of estimators of capital allocations, say $\widetilde {A} = (\widetilde A_1, \ldots, \widetilde A_d)$, that were created using some capital allocation methodology,\footnote{We refer to such methodology as to an Internal Capital Allocation Model (ICAM).} uses as its input the observations of past P\&Ls.
The key ingredient to both backtesting methods is the estimation of
\begin{equation}\label{eq:fair.bt-total}
 \sum_{i=1}^d\bE^{\theta_0}\Big[ Z_{S,\hat A}^{\theta_0}(X_i +\widetilde A_i)\Big],
\end{equation}
and the estimation of
\begin{equation}\label{eq:fair.bt}
\bE^{\theta_0}\Big[ Z_{S,\hat A}^{\theta_0} (X_i +\widetilde A_i)\Big], \ i=1,2,\ldots,d.
\end{equation}
We assume that the length of the backtesting window is $m$ days. With each day $k=1,\ldots,m,$ we associate the P\&Ls $X^k_i$ and allocation estimators $\widetilde A^k_i$, $i=1,\ldots,d.$ The estimators $\widetilde A^k_i$ can be obtained in various ways. One way is to proceed in accordance to what was  proposed previously in this paper. Specifically,  to produce allocation estimators $\widetilde A^k_i$ on day $k$ one uses market observations from the previous $n$ days. We denote these observations as $\mathbf{X}^{k,n}=(X_i^{k-n},\ldots,X_i^{k},\ i=1,\ldots,d)$. Based on these observations, and following \eqref{estimator}, we compute the estimators of the allocations as $\widetilde A^{k}=\eta_n(\mathbf{X}^{k,n})$.

The realizations of $X^k_i$ and $\widetilde A^k_i$ are denoted as $x^k_i$ and $\widetilde a^k_i$, respectively.  We set $y:=(y^1,\dots,y^m)$, where $y^k= (y_1^k,\ldots ,y_d^k)$, $k=1,\ldots,m$, and
$ y_i^k:=x_i^k+ \widetilde a^k_i,\quad i=1,2,\ldots,d$. 
We also let $\xi^k:=\sum_{i=1}^{d}y_i^k,\quad k=1,2,\ldots,m$, to  denote the realized aggregated secured position on day $k$, and we set $\xi:=(\xi^1,\ldots,\xi^m)$.

In order to proceed we introduce the following functions of $\beta \in (0,1]$,

\begin{equation}\label{eq:fair.bt-total-hist}
 G_{\beta}(\tilde{A}):=-\frac{\sum_{k=1}^{m}\xi^k\1_{\{\xi^k+\hat{\var}_{\beta}(\xi)\leq 0\}}}{\sum_{k=1}^{m}\1_{\{\xi^k+\hat{\var}_{\beta}(\xi)\leq 0\}}},
\end{equation}
and
\begin{equation}\label{eq:fair.bt-hist}
G^i_{\beta}(\tilde{A}):=-\frac{\sum_{k=1}^{m}y_i^k\1_{\{\xi^k+\hat{\var}_{\beta}(\xi)\leq 0\}}}{\sum_{k=1}^{m}\1_{\{\xi^k+\hat{\var}_{\beta}(\xi)\leq 0\}}}, \quad i=1\ldots,d,
\end{equation}
with $\hat{\var}_{\beta}$ being the empirical value-at-risk at level $\beta\in (0,1]$. Note that $y^i_k$s are computed using as the reference risk measure ES at the fixed risk level $\alpha$. If no confusions arise, we will write  $G_\beta$, respectively $G_\beta^i$, instead of $G_{\beta}(\tilde{A})$, respectively $ G_{\beta}^i(\tilde{A})$.

Now, similarly to the derivation of $\hat D^n_i$, we estimate the expectation in \eqref{eq:fair.bt-total} as $-G_{\alpha}$, and we estimate \eqref{eq:fair.bt} as $-G^i_{\alpha}$.

\medskip\noindent
\textbf{Deviation from fairness backtesting.} 
If the capital allocation methodology is fair, then the obtained empirical values $G_\alpha^i, \ i=1,\ldots,d$, should be close to zero, for the fixed reference level $\alpha$; the bigger the obtained estimate, the bigger the potential (true) deviation from fairness for the $i$th margin. \textit{The deviation from fairness backtest} assesses proximity to zero of $G_\alpha^i, \ i=1,\ldots,d$. A comprehensive study of properties of $G_\alpha^i$s, such as `how far from zero is an acceptable value' is beyond the scope of this manuscript. Nevertheless, the following backtesting methodology is one way to address this question.

\medskip\noindent
\textbf{Risk level shift backtesting.} Instead of measuring the deviation from fairness directly, it is natural to find the reference risk level $\beta\in(0,1]$ that makes $G_\beta^i$ closest to zero; equivalently, we want to answer the question
by how much one needs to shift the reference risk level $\alpha$ to make the position acceptable.  This approach hinges on duality-based performance measurement introduced in~\cite{PiteraMoldenhauer2018}. It should be noted that this approach is different from the elicitability-based backtests as it focuses on capital conservativeness assessment rather than the general forecast fit; cf.~\cite{NolZie2017}. Formally, for the estimators of capital allocation $\tilde{A}$, we define
\begin{align}
\Upsilon (\tilde{A}) &:=\inf\set{\beta\in (0,1]\, :\, G_{\beta}(\tilde{A})\leq 0}, \label{eq:R.bt} \\
W_{-}^{i}(\tilde{A}) &:= \inf\{\epsilon\in [0,\alpha]\, :\,G^i_{\alpha}(\tilde{A})\cdot G^i_{\alpha-\epsilon}(\tilde{A})\leq 0 \}, \label{eq:W+}\\
W_{+}^{i} (\tilde{A}) &:= \inf\{\epsilon\in [0,1-\alpha]\, :\, G^i_{\alpha}(\tilde{A})\cdot G^i_{\alpha+\epsilon}(\tilde{A}) \leq 0 \}, \label{eq:W-}
\end{align}
where in \eqref{eq:W+} we use the convention $\inf \emptyset =\alpha$, and correspondingly,  in  \eqref{eq:W-} we put $\inf \emptyset =1-\alpha$. Similar to $G^i_\beta, G_\beta$, we may simple write $\Upsilon$, and $W_\pm^i$. Note that $G_\beta$ is a monotone decreasing function in $\beta$, while $G_\beta^i$ generally speaking is not monotone. Hence, the quantities $W^\pm$ are defined as the smallest shift in the reference risk level from $\alpha$, to the right or to the left, that makes the $i$th secured position acceptable. Thus,
the closer $\Upsilon$ is to the initial reference risk level $\alpha$ the better is the total risk estimation procedure. Similarly, the closer $W^\pm$ are to zero, the better is the risk allocation procedure. One can look at $W$ as the performance index that is dual to the ES family; see \cite[Proposition 4.3]{PiteraMoldenhauer2018} for more details.

Finally, by combining the left and right minimal shifts, we define the the minimal shift estimator as
\begin{equation}\label{eq:performance2}
W^i (\tilde{A}) :=
\begin{cases}
-W_{-}^{i},  & \textrm{if } W_{-}^{i} < W_{+}^{i} \\
\phantom{-}W_{+}^{i},  & \textrm{if } W_{-}^{i} \geq W_{+}^{i}
\end{cases},
\quad i=1,2,\ldots,d.
\end{equation}

\smallskip \noindent
Before moving to numerical examples, several comments on backtesting procedure are in order.
\begin{enumerate}[(a)]\addtolength{\itemsep}{-0.5\baselineskip}
\item It goes without saying that the results produced by the deviation from fairness and the risk level shift approaches should be compared with each other for consistency and reality check.
\item It is worth mentioning that the two proposed backtesting methodologies can be applied to any  ICAM, not necessarily those discussed in this paper.
\item  Our study of the backtesting procedure of the estimation of the risk capital allocation is  preliminary.  A thorough investigation of the statistical properties of $ G_\alpha^i$ and $W^i$ is deferred to future studies.
\end{enumerate}

\noindent Next we will illustrate the performance of the capital allocation estimators $\hat B^n$, $\hat C^n$, and $\hat D^n$ on simulated data by applying the two backtesting procedures described above. For brevity and to ease the notation, we will write $\hat B^n$, $\hat C^n$, and $\hat D^n$ as $\hat B$, $\hat C$, and $\hat D$, respectively. 

For simulations, we consider two cases of probability distributions of the P\&Ls vector $X$ - the Gaussian distribution and the Student's $t$-distribution. We also fix the reference level $\alpha=0.05$. All numerical evaluations are performed using R statistical software; the source codes are available from the authors upon request.

\begin{example}[Gaussian P\&Ls]\label{ex:1}
We assume that the portfolio $X$ of eight (discounted) P\&Ls follows an eight dimensional Gaussian distribution $\cN(\mu, \Sigma)$, with the (true) mean
\[
\mu= (0.000786,  0.001549,  0.001660, 0.000195, 0.000650,  0.000413, -0.000401, -0.001146),
\]
and the (true) variance-covariance matrix
{\small
\[
\Sigma=\left[
\begin{tabular}{rrrrrrrrrr}
0.000226 & 0.000174 & 0.000104 & 0.000066 & 0.000069 & 0.000019 & -0.000077 & -0.000135 \\
0.000174 & 0.000346 & 0.000135 & 0.000068 & 0.000091 & 0.000022 & -0.000082 & -0.000195 \\
0.000104 & 0.000135 & 0.000257 & 0.000065 & 0.000084 & 0.000034 & -0.000093 & -0.000111 \\
0.000066 & 0.000068 & 0.000065 & 0.000133 & 0.000048 & 0.000025 & -0.000058 & -0.000064 \\
0.000069 & 0.000091 & 0.000084 & 0.000048 & 0.000137 & 0.000034 & -0.000065 & -0.000081 \\
0.000019 & 0.000022 & 0.000034 & 0.000025 & 0.000034 & 0.000061 & -0.000022 & -0.000031 \\
-0.000077 & -0.000082 & -0.000093 & -0.000058 & -0.000065 & -0.000022 & 0.000149 & 0.000085 \\
 -0.000135 & -0.000195 & -0.000111 & -0.000064 & -0.000081 & -0.000031 & 0.000085 & 0.000202 \\
\end{tabular}
\right].
\]
}
For the purpose of obtaining the above mean vector and the variance-covariance  matrix we used values of daily  returns of eight stocks from S\&P 500 index, namely: AAPL, AMZN, BA, DIS, HD, KO, JPM, and MSFT; these data were taken for the period from January 2015 till December 2018. We will use this sample again in Example \ref{ex:4}. The first six stocks represent long positions in our portfolio and the last two represent short positions; this gives the negative entries in $\mu $ and $\Sigma$. The positions in each stock are equally weighted with nominal (absolute) value \$1.

We took the learning period of $n=500$ days, and the backtesting period of $m=5,\!000$ days. Below, we present the results  for the Gaussian plug-in estimator $\hat C$ and the non-parametric estimator $\hat D$; we omit results for estimators $\hat B$ and $\check D$, since, due to large size of the learning period, the results are almost identical to $\hat C$ and $\hat D$, respectively. Additionally, for comparison, we present results for the true allocations $a$; these allocations were obtained by plugging-in true mean and covariance matrix into \eqref{eq:norm.es.true}.

\begin{figure}[htp!]
\begin{center}
\scalebox{0.25}{\includegraphics{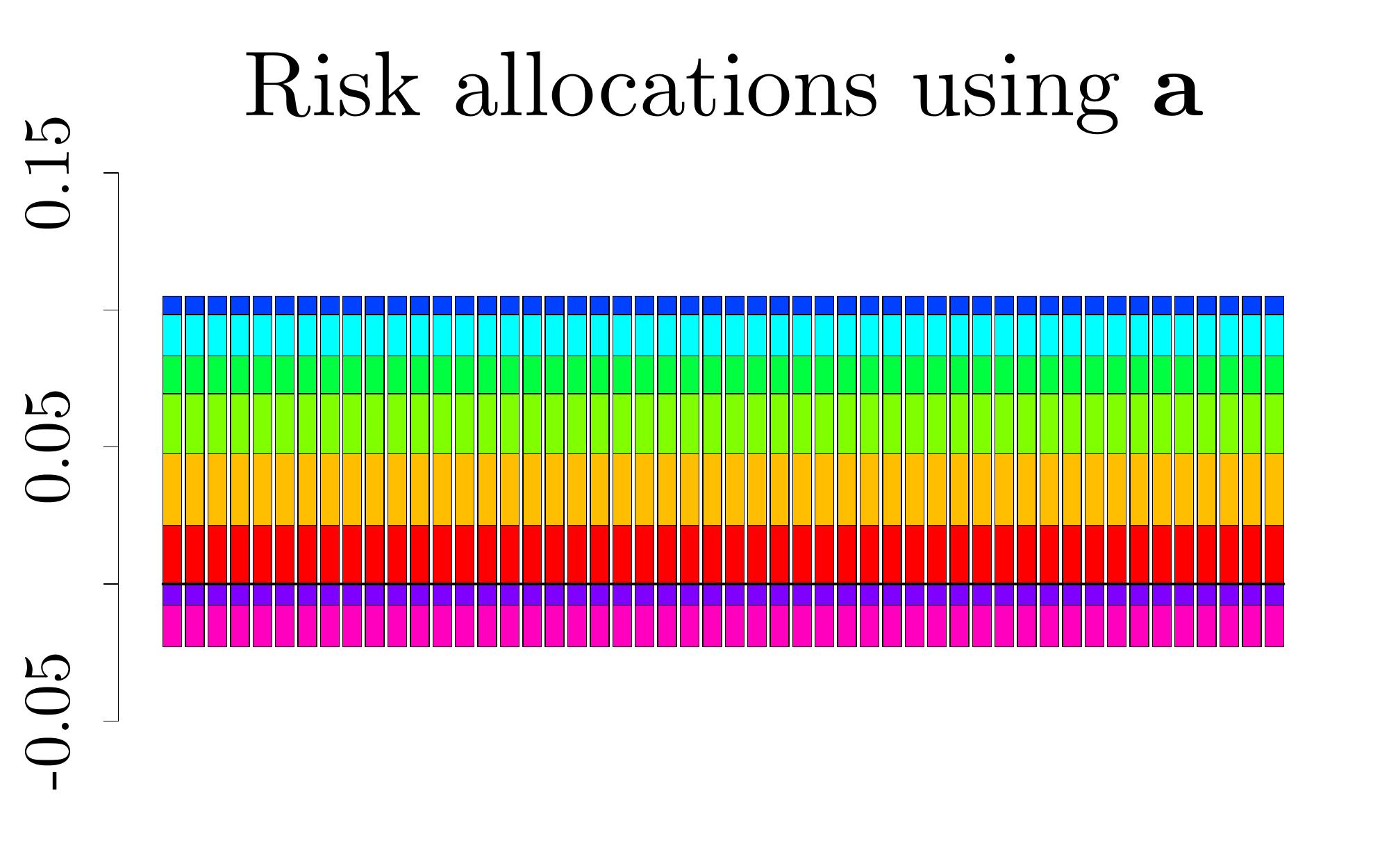}}
\scalebox{0.25}{\includegraphics{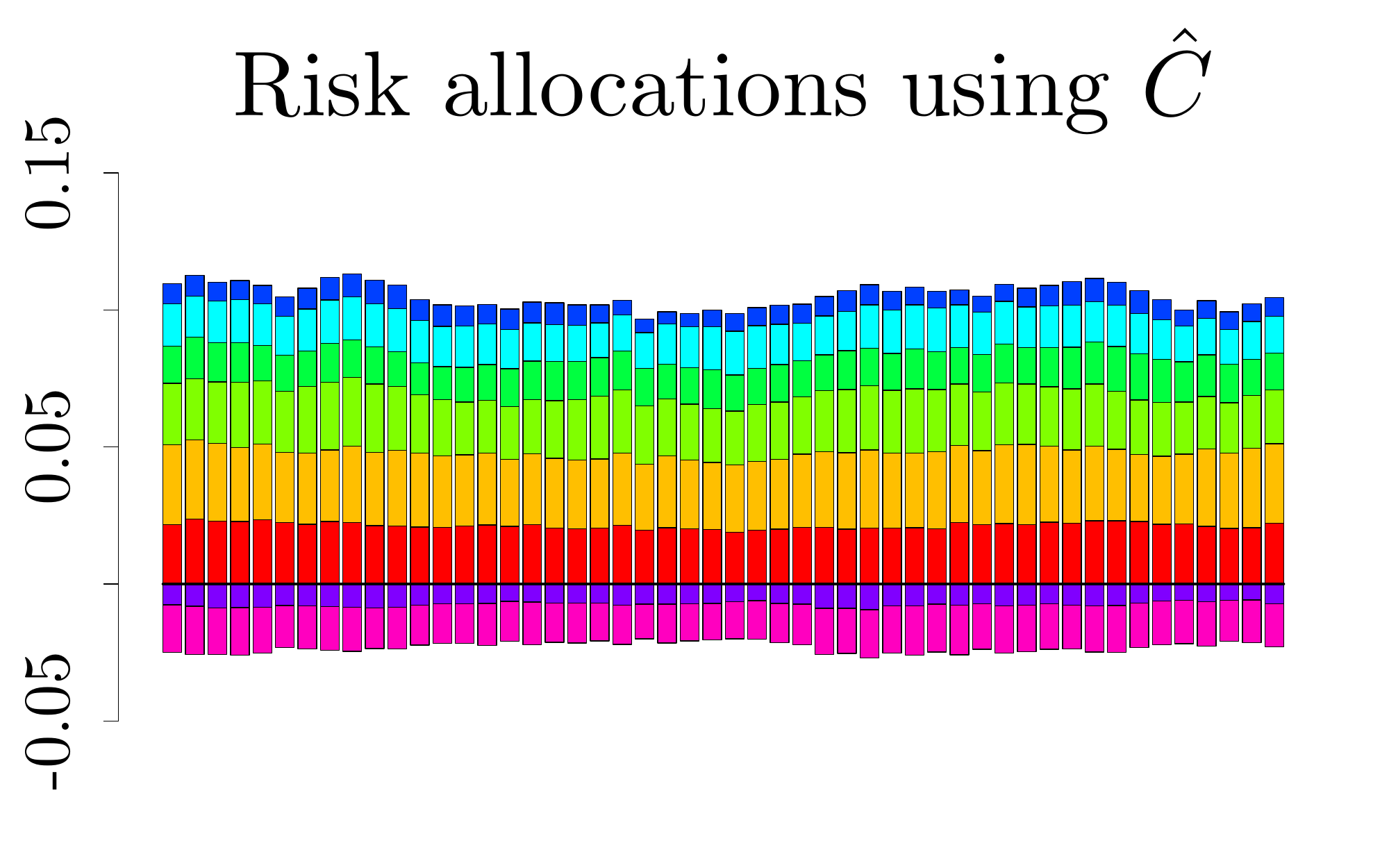}}
\scalebox{0.25}{\includegraphics{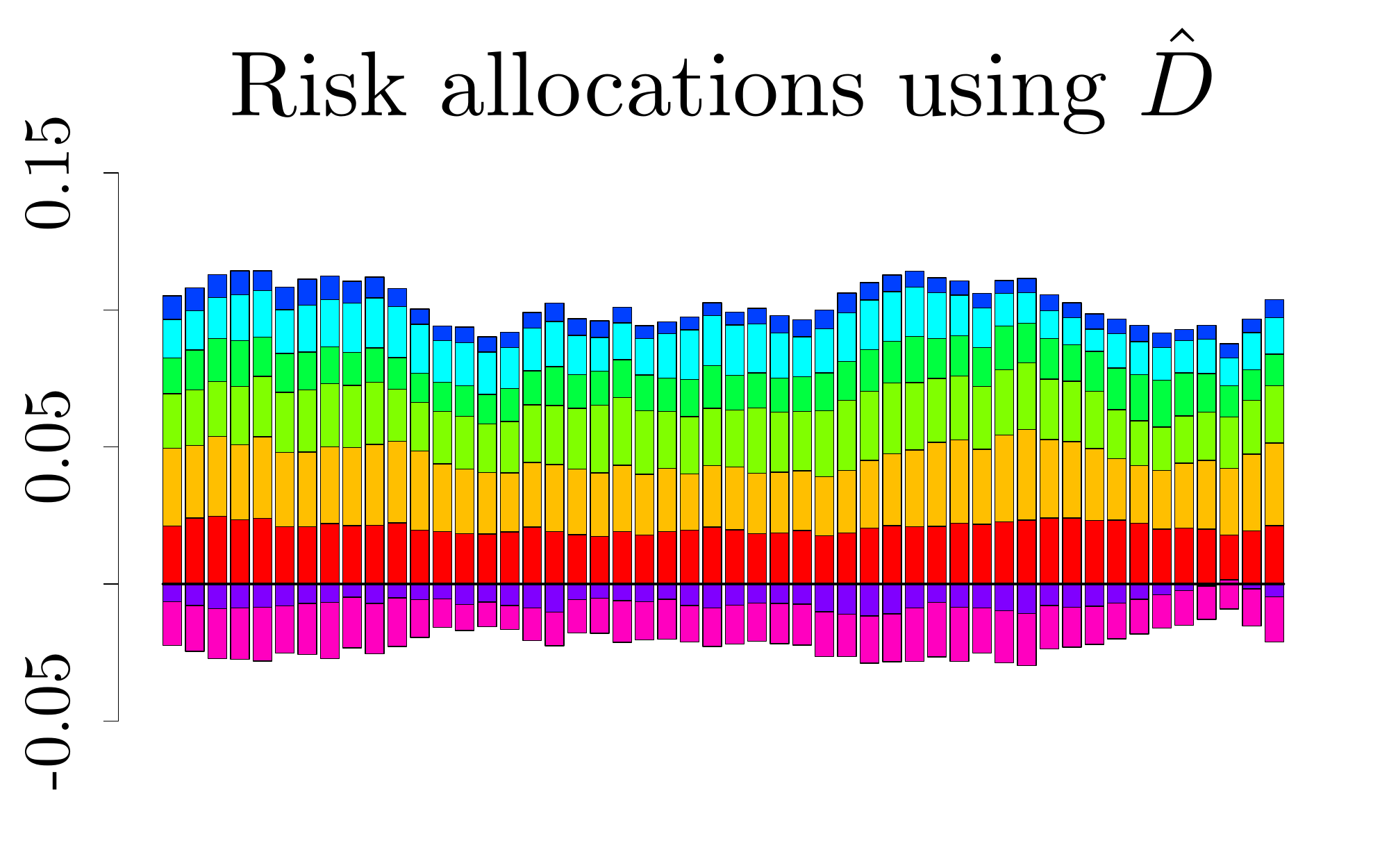}}\\
\scalebox{0.25}{\includegraphics{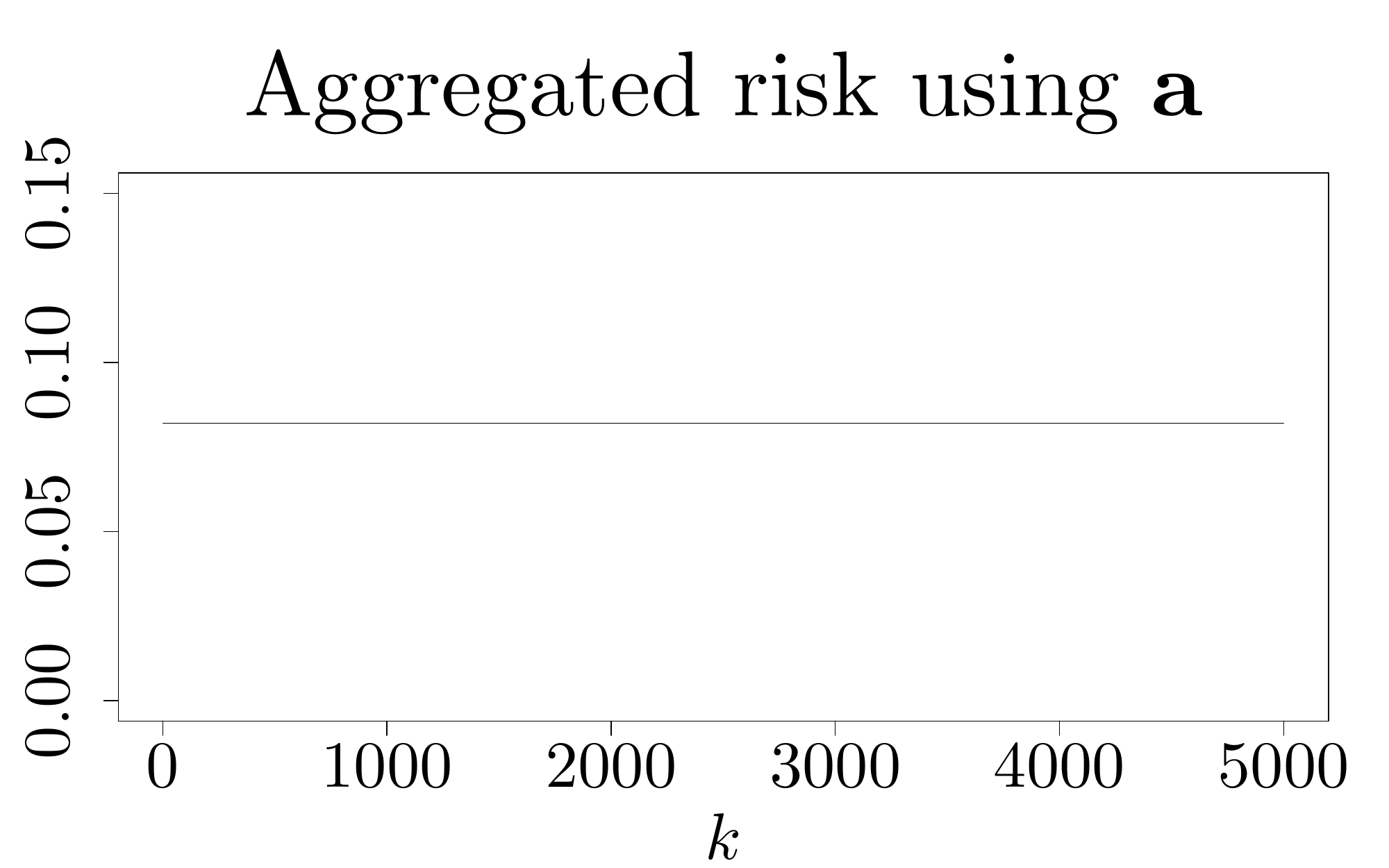}}
\scalebox{0.25}{\includegraphics{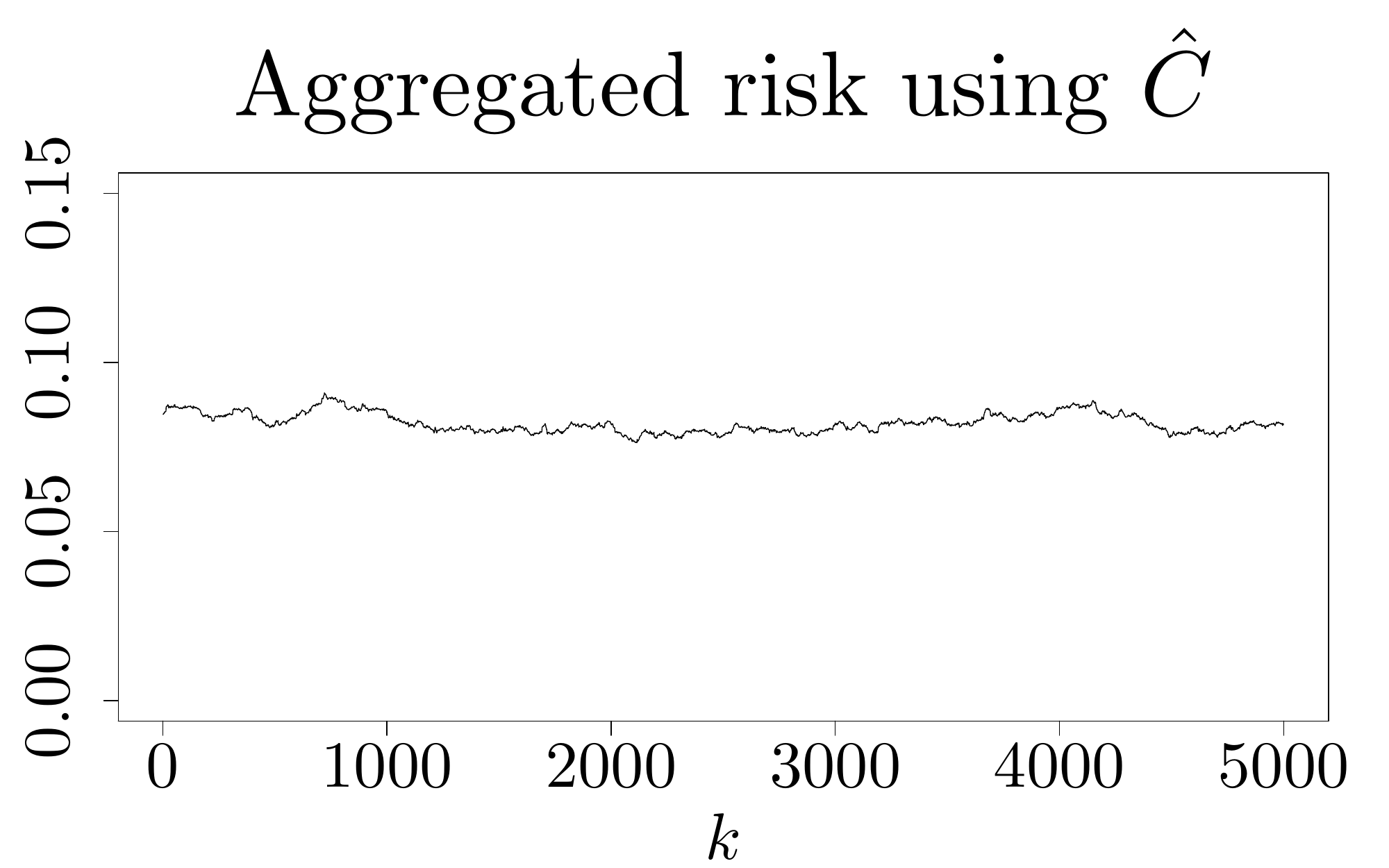}}
\scalebox{0.25}{\includegraphics{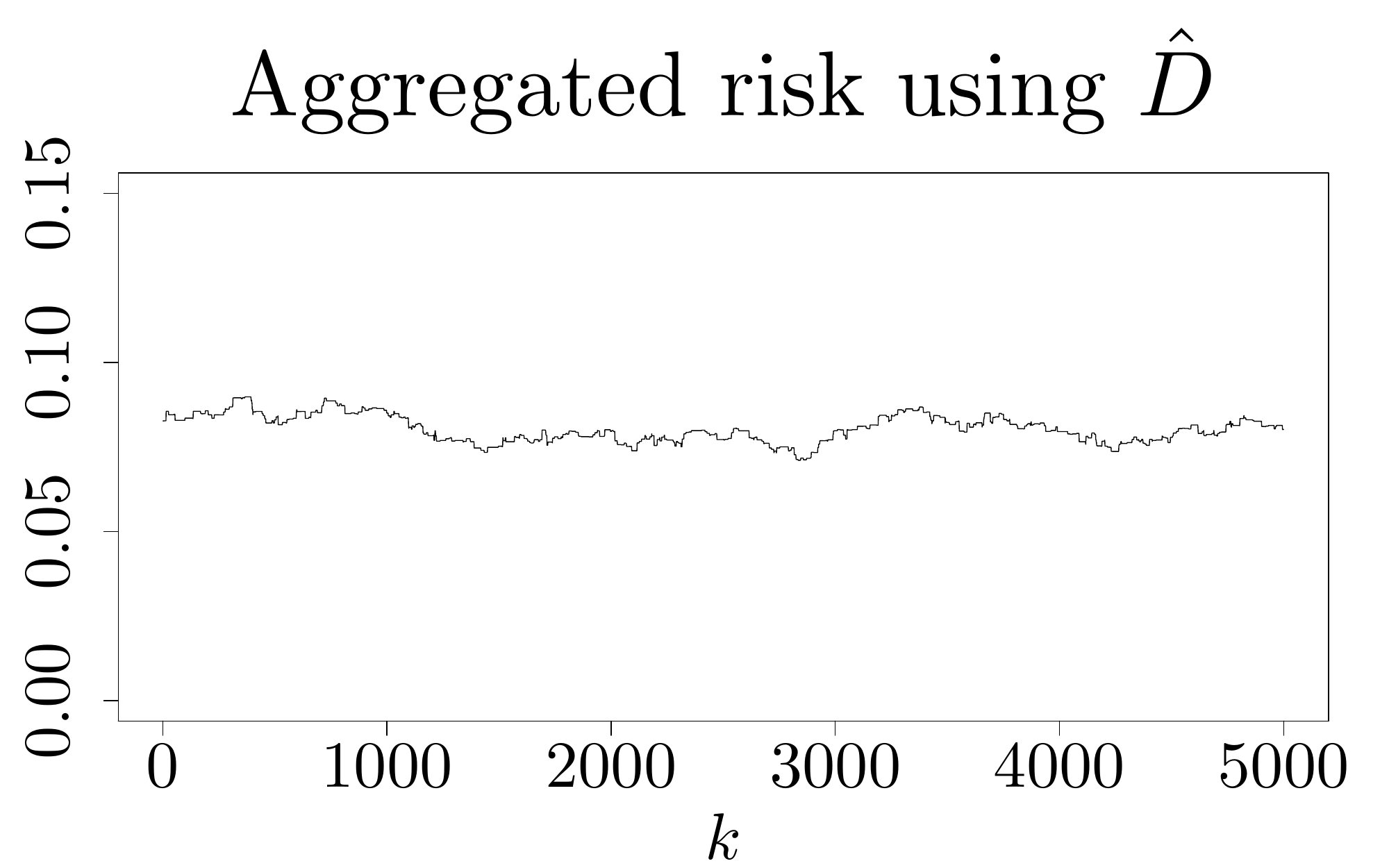}}
\end{center}
\caption{Example~\ref{ex:1}.
Top row: estimated risk allocations for the eight portfolio constituents (indexed by color) at each backtesting day, $k=1,\ldots,m$ for the true allocation $a$, and the estimated risk allocations $\hat C$ and $\hat D$; the height of each horizontal layer represents the risk allocated to one of the constituents.
Bottom row: estimated aggregated risk at each backtesting day. The estimated risk allocations are close to the reference allocations. The Gaussian plug-in estimator $\hat C$ slightly outperforms the non-parametric method $\hat D$.}
\label{fig:Ex1-1}
\end{figure}

The obtained results  validate, as expected,  the proposed methods.  In Figure~\ref{fig:Ex1-1} we present the values of the risk allocation to each constituent (top row), and the aggregated risk (bottom row). In this example, the fair risk allocation $a$ computed with the true underlying distribution can be considered as  reference for the  backtesting results. The estimated risk allocations using $\hat C$ and $\hat D$ are close to the reference allocations, and as expected, the results computed using the non-parametric method $\hat D$ are not as close to the reference results as those obtained using $\hat C$ that explicitly exploits the Gaussian distribution structure of the data.

Table~\ref{T:normal} contains the summary of the estimated backtesting measures $G_{0.05}, G_{0.05}^i, W^i$ and $\Upsilon$.  First, we note that the values of $G_{0.05}(a), G^i_{0.05}(a)$ and $W^i(a)$ corresponding to backtesting the fair allocation are, as expected, close to zero. In addition, $\Upsilon (a)$ is close to $\alpha=0.05$.  This indicates that the proposed backtesting methodologies are adequate. The obtained values give the benchmark for the following results produced by using $\hat C$ and $\hat D$. We note that indeed, the values of $G_{0.05}, G^i_{0.05}, W^i$ and $\Upsilon$ corresponding to $\hat C$ and $\hat D$ are in the same ballpark as for $a$, indicating that $\hat{C}$ and $\hat D$ are suitable risk allocation methodologies.  We also provide a graphical representation of $G^i_{0.05}$ in Figure~\ref{fig:Ex1-3} (top row), and in Figure~\ref{fig:Ex1-3} (bottom row) we $G_\beta^i(\hat{D}), i=1,\ldots,8$ as function of $\beta$.

\begin{table}[htp!]
\centering
\scalebox{0.8}{
\begin{tabular}{rrrrrrrrrrr} \toprule
& $X_1$ & $X_2$ & $X_3$ & $X_4$ & $X_5$ & $X_6$ & $X_7$ & $X_8$ & & $S$\\
\midrule
 $G^i_{0.05}(a)$ & -0.00038 &  0.00017 & -0.00054 & -0.00039 & -0.00021 & -0.00048 &  0.00076 & -0.00011 & $G_{0.05}(a)$ & -0.00118 \\
    $W^i(a)$ & -0.013 &  0.001 & -0.002 & -0.002 & -0.003 & -0.012 & -0.017 &  0.001 & $\Upsilon(a)$ & 0.047 \\
   \midrule
 $G^i_{0.05}(\hat{C})$ & -0.00090 & -0.00037 & -0.00014 & -0.00022 &  0.00043 & -0.00058 &  0.00088 &  0.00008 & $G_{0.05}(\hat{C})$ & -0.00081 \\
    $W^i(\hat{C})$ & -0.009 & -0.004 & -0.002 & -0.002 &  0.004 & -0.016 & -0.013 & -0.005 & $\Upsilon(\hat{C})$ & 0.048 \\
   \midrule
$G^i_{0.05}(\hat{D})$ &  0.00032 &  0.00110 &  0.00031 &  0.00014 & -0.00003 &  0.00010 & -0.00011 & -0.00088 & $G_{0.05}(\hat{D})$ & 0.00094 \\
    $W^i(\hat{D})$ &  0.003 &  0.005 &  0.002 &  0.005 & -0.001 & -0.001 &  0.002 &  0.008 & $\Upsilon(\hat{D})$ & 0.053 \\
   \bottomrule
\end{tabular}
}
\caption{Summary of the estimated backtesting measures for Example~\ref{ex:1}: In the first columns we show $G^i_{0.05}$ and $W^i$, $i=1,\dots,8$, for the true allocation $a$, and the estimated risk allocations $\hat C$ and $\hat D$, corresponding to backtesting the fair allocation. The values are close to zero, indicating that the proposed backtesting methodologies are adequate. The last column shows the aggregated quantities $G_{0.05}$ and the risk level shift $\Upsilon$. Here, $\Upsilon$ is close to $\alpha=0.05$, as expected.}
\label{T:normal}
\end{table}

For convenience, we additionally present several graphical representations of the backtesting metrics. In Figure~\ref{fig:Ex1-2} we plot  $G_\beta$ and $G_\beta^i$ as functions of $\beta$, for the three risk allocation methods $a, \hat{C}, \hat{D}$. All these functions should take zero value around $\beta=\alpha=0.05$, which is clearly the case. Finally, Figure~\ref{fig:Ex1-4} is dedicated to risk level shift backtesting. The top row shows the values of $W^i$ for risk allocations estimated  using $a, \hat{C}$, and $\hat{D}$. The blue dots in the bottom graphs in Figure~\ref{fig:Ex1-4} depict the values of $\alpha\pm W$, all of them being close to the reference risk value $\alpha=0.05$, which again indicates adequacy of risk allocation estimation procedure $\hat D$.

\begin{figure}[t!]
\begin{center}
\scalebox{0.25}{\includegraphics{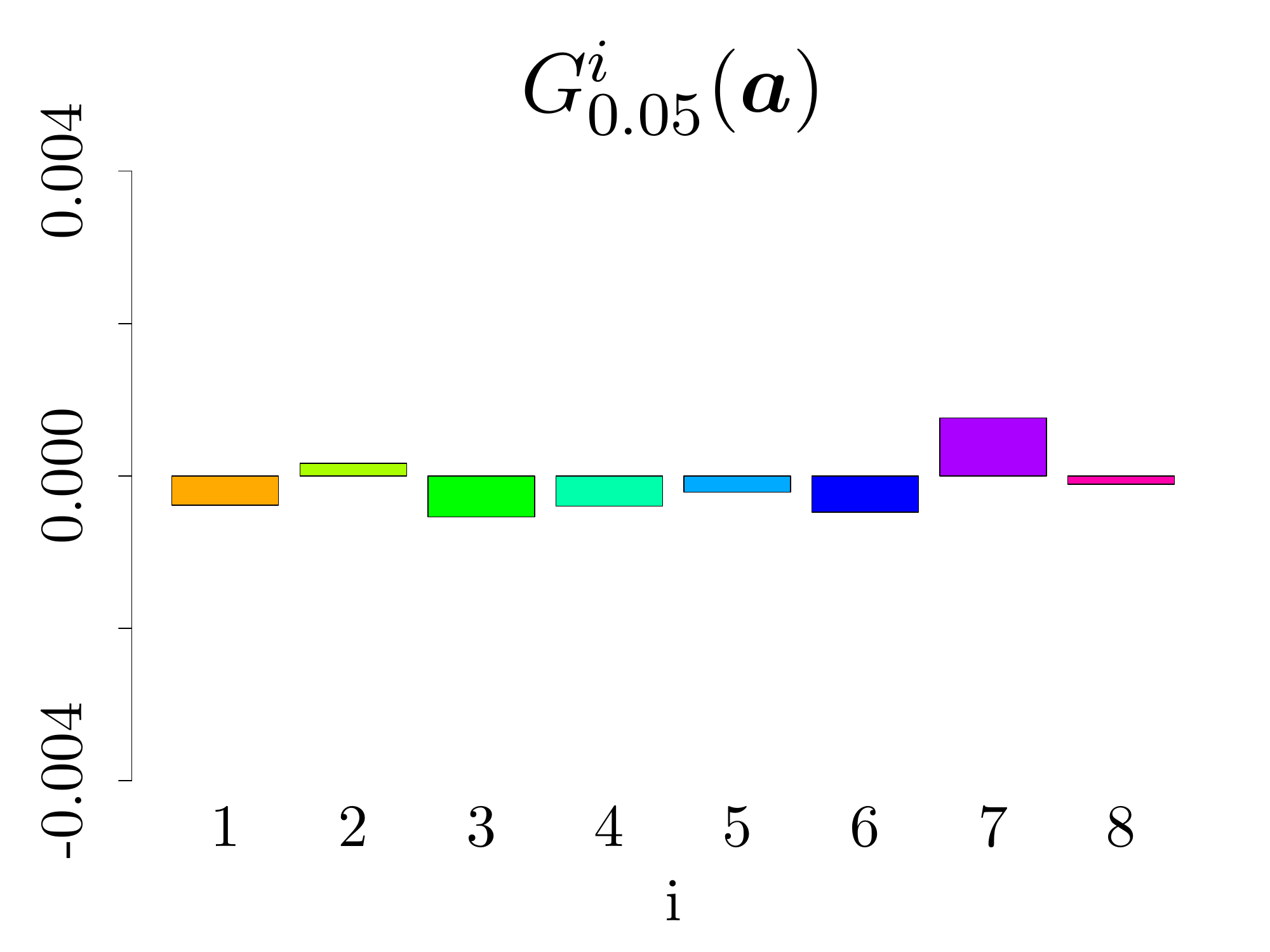}}
\scalebox{0.25}{\includegraphics{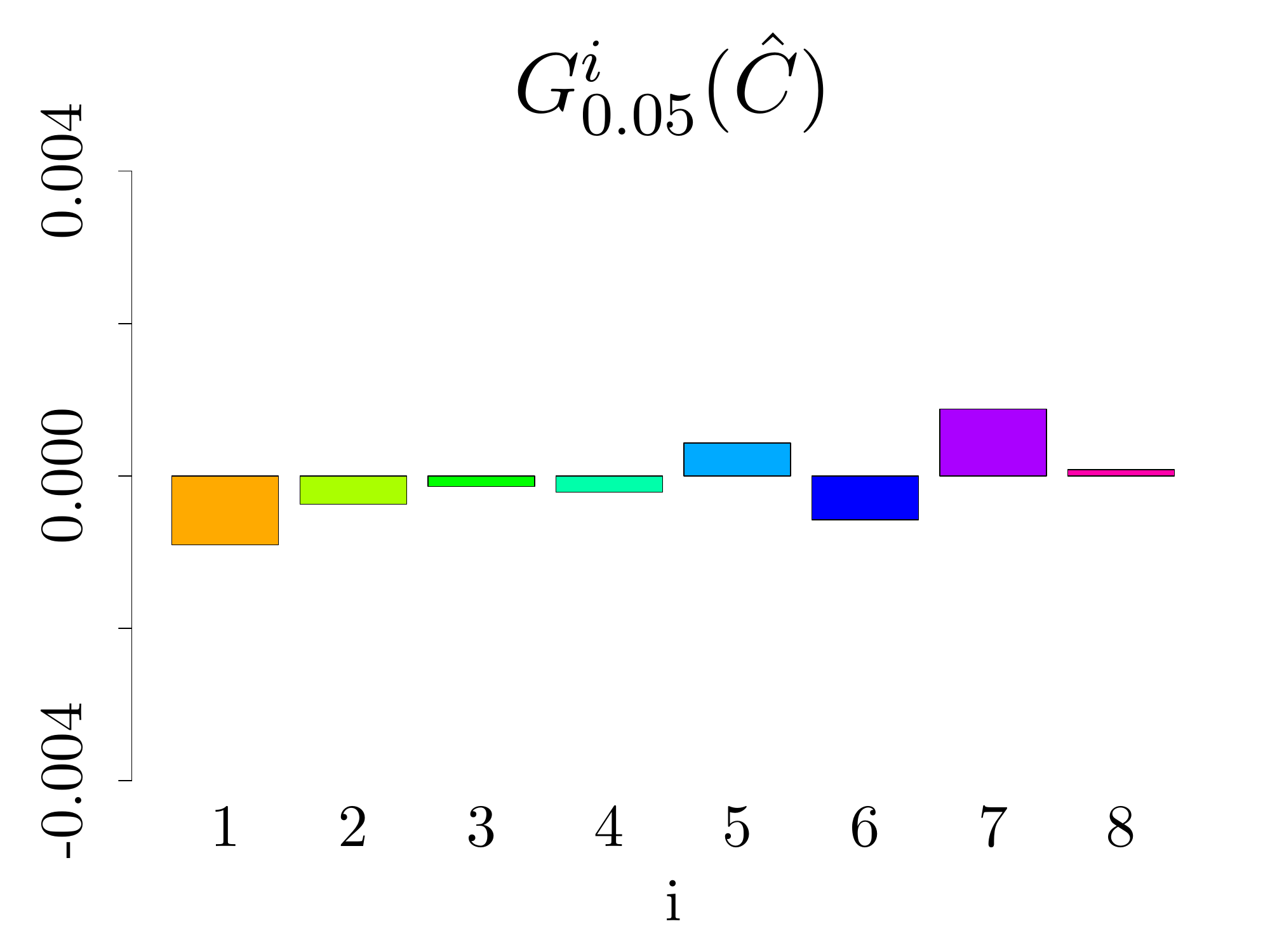}}
\scalebox{0.25}{\includegraphics{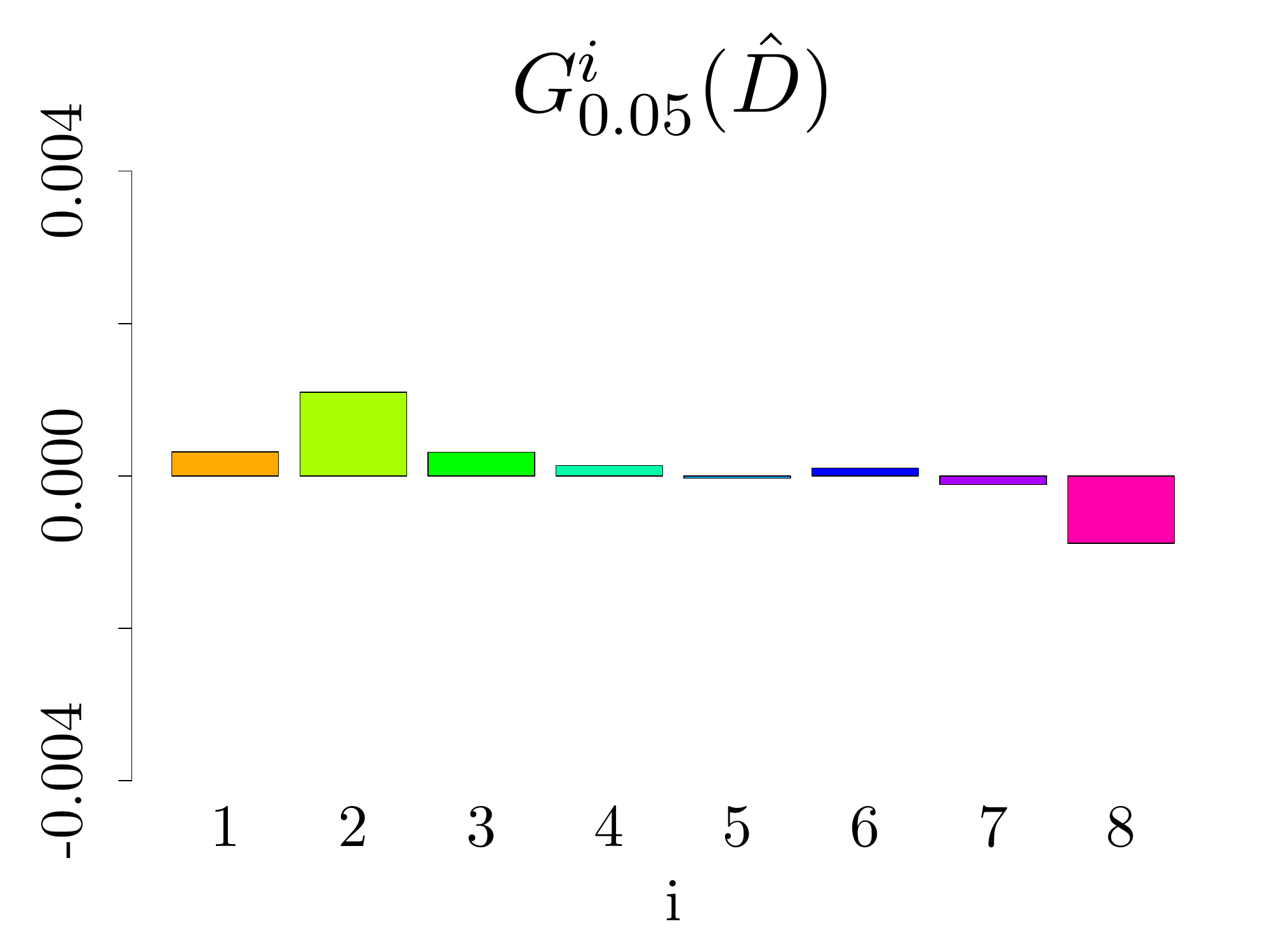}}
\vspace{0.6cm}
\scalebox{0.40}{\includegraphics{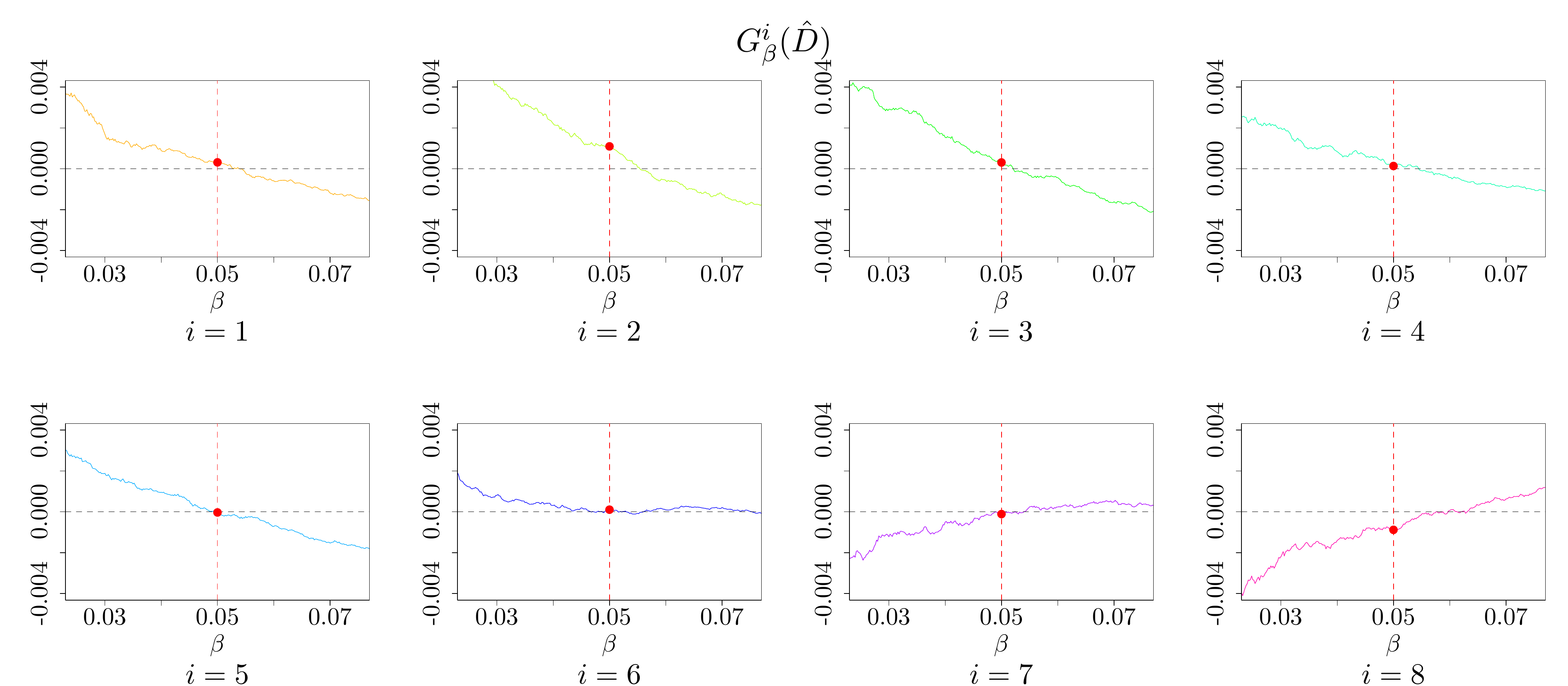}}
\end{center}
\caption{Example~\ref{ex:1}. Graphical representation of the deviation from fairness backtesting method, compare Table \ref{T:normal}: the first row shows  $G^i_{0.05}$, $i=1,\dots,8$, for the true allocation $a$, and the estimated risk allocations $\hat C$ and $\hat D$. The values are close to zero, indicating that the proposed backtesting methodologies are adequate. The second and third row shows
 $G^i_{\beta}$ as function of $\beta$ for each constituent. The red dots in the bottom rows represent the values of $G_{0.05}^i$ using $\hat D$. An aggregated plot together with $G_\beta$ is given in Figure \ref{fig:Ex1-2}.}
\label{fig:Ex1-3}
\end{figure}

\begin{figure}[htp!]
\begin{center}
\scalebox{0.24}{\includegraphics{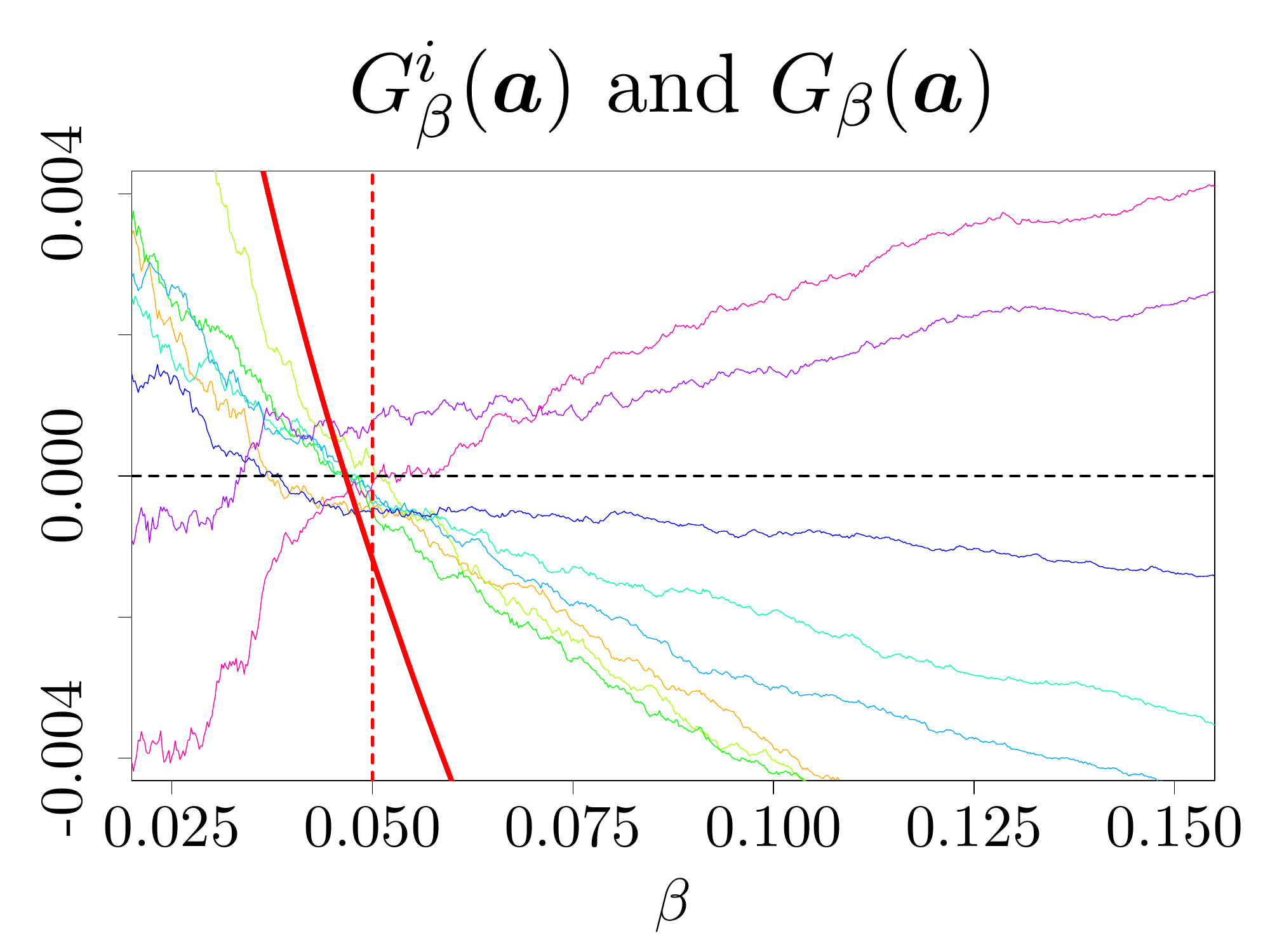}}
\scalebox{0.24}{\includegraphics{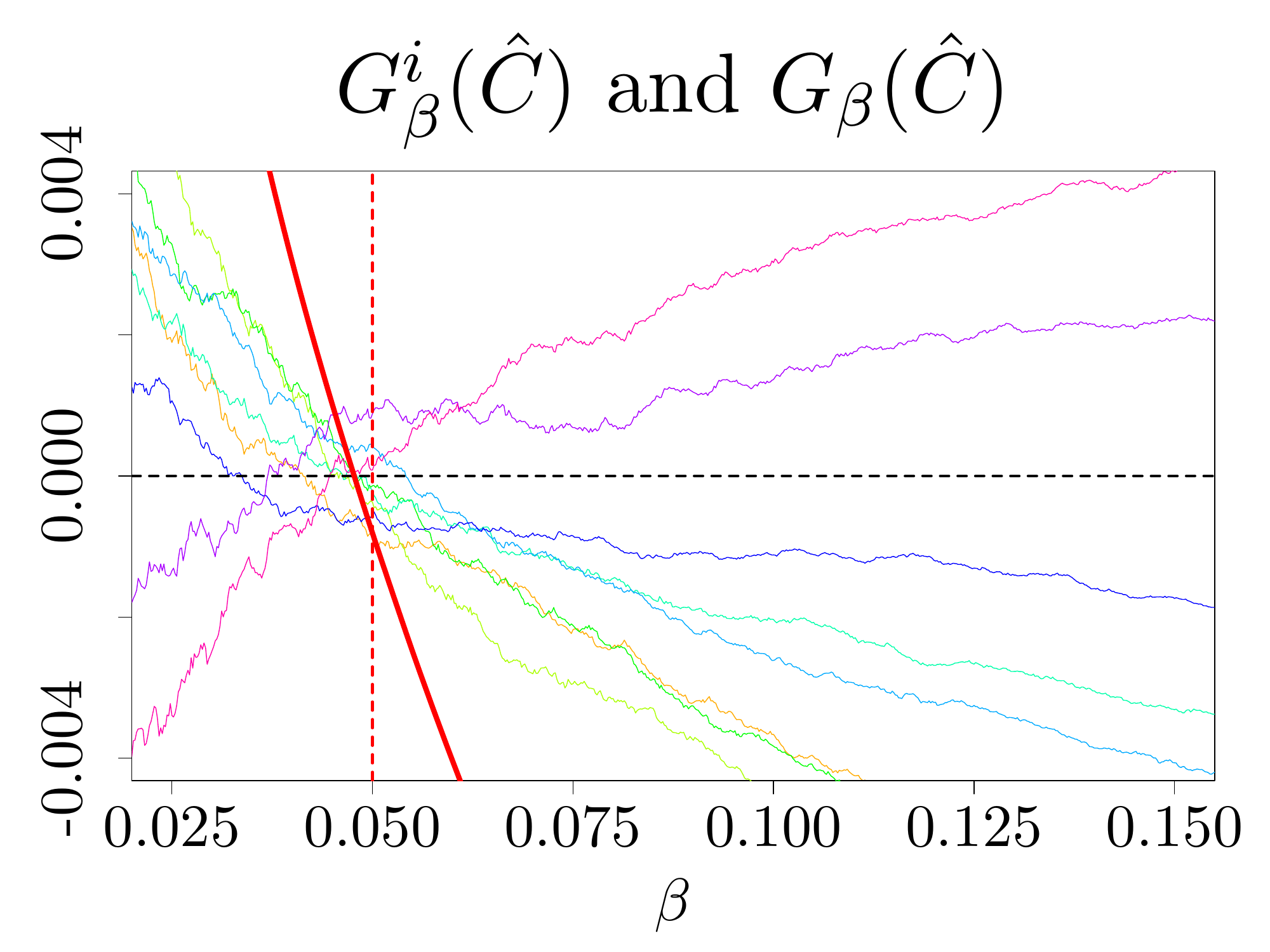}}
\scalebox{0.24}{\includegraphics{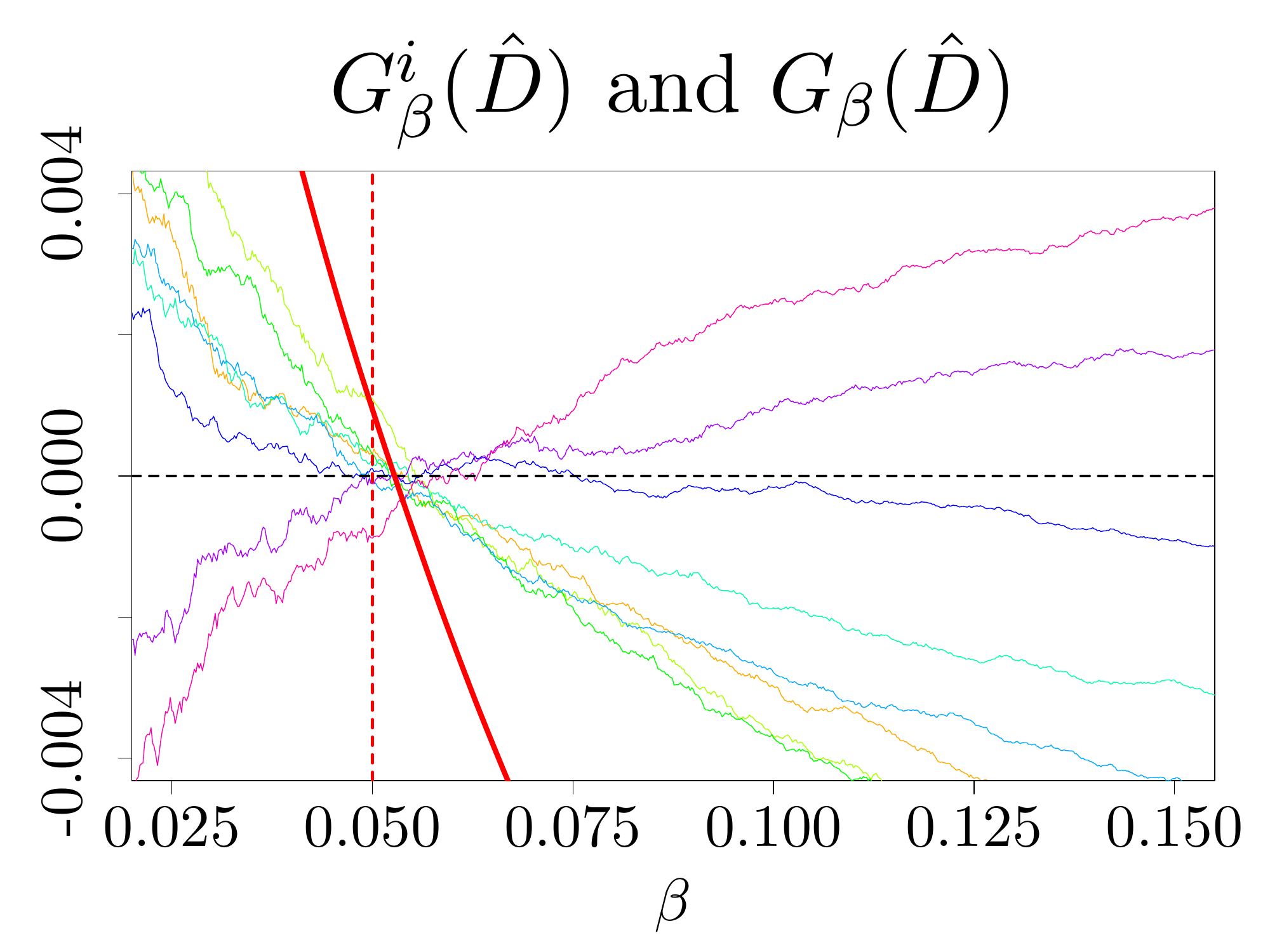}}
\end{center}
\caption{Example~\ref{ex:1}: the estimated backtesting measures as function of the risk level $\beta$ for the true allocation $a$, and the estimated risk allocations $\hat C$ and $\hat D$ (compare Table \ref{T:normal} for values corresponding to $\beta=0.05$).  The measures $G^i_{\beta}$, $i=1,2,\ldots,8$ for the different constituents are indicated by color while the  bold red line represents the backtesting measure $G_{\beta}$ at portfolio level. All these functions should be zero around $\beta=\alpha=0.05$, which is clearly the case.}
\label{fig:Ex1-2}
\end{figure}

 \begin{figure}[htp!]
\begin{center}
\scalebox{0.25}{\includegraphics{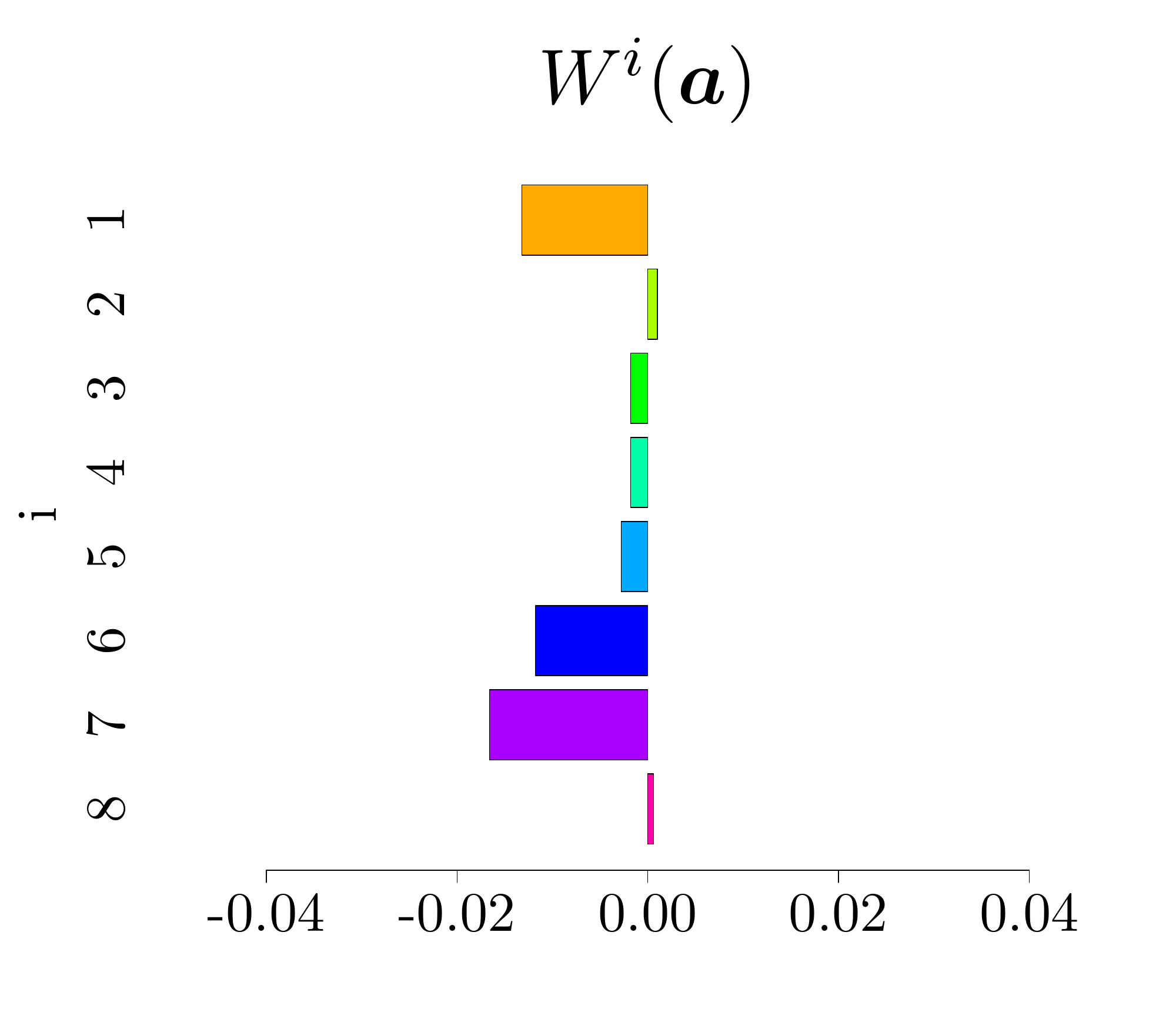}}
\scalebox{0.25}{\includegraphics{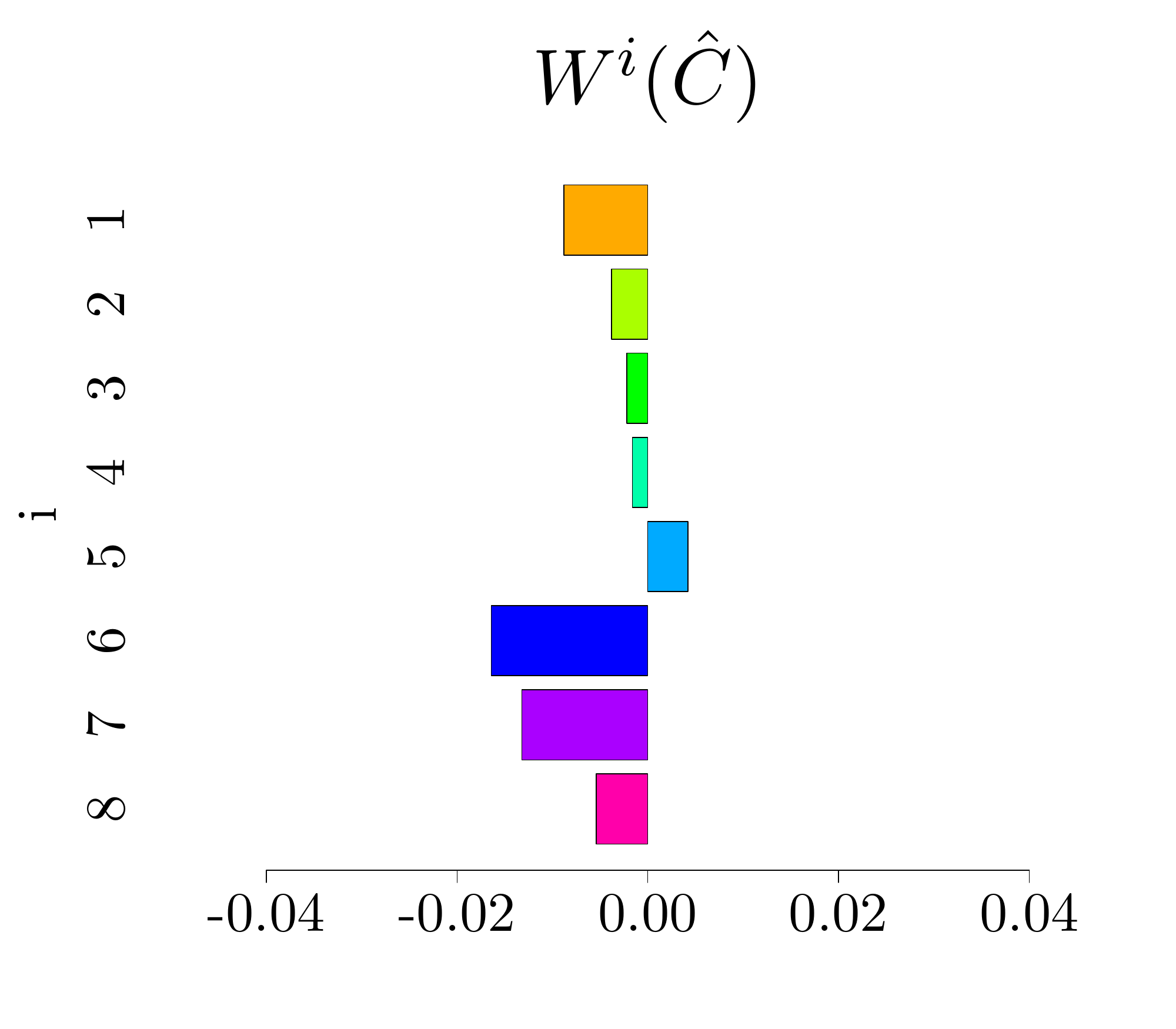}}
\scalebox{0.25}{\includegraphics{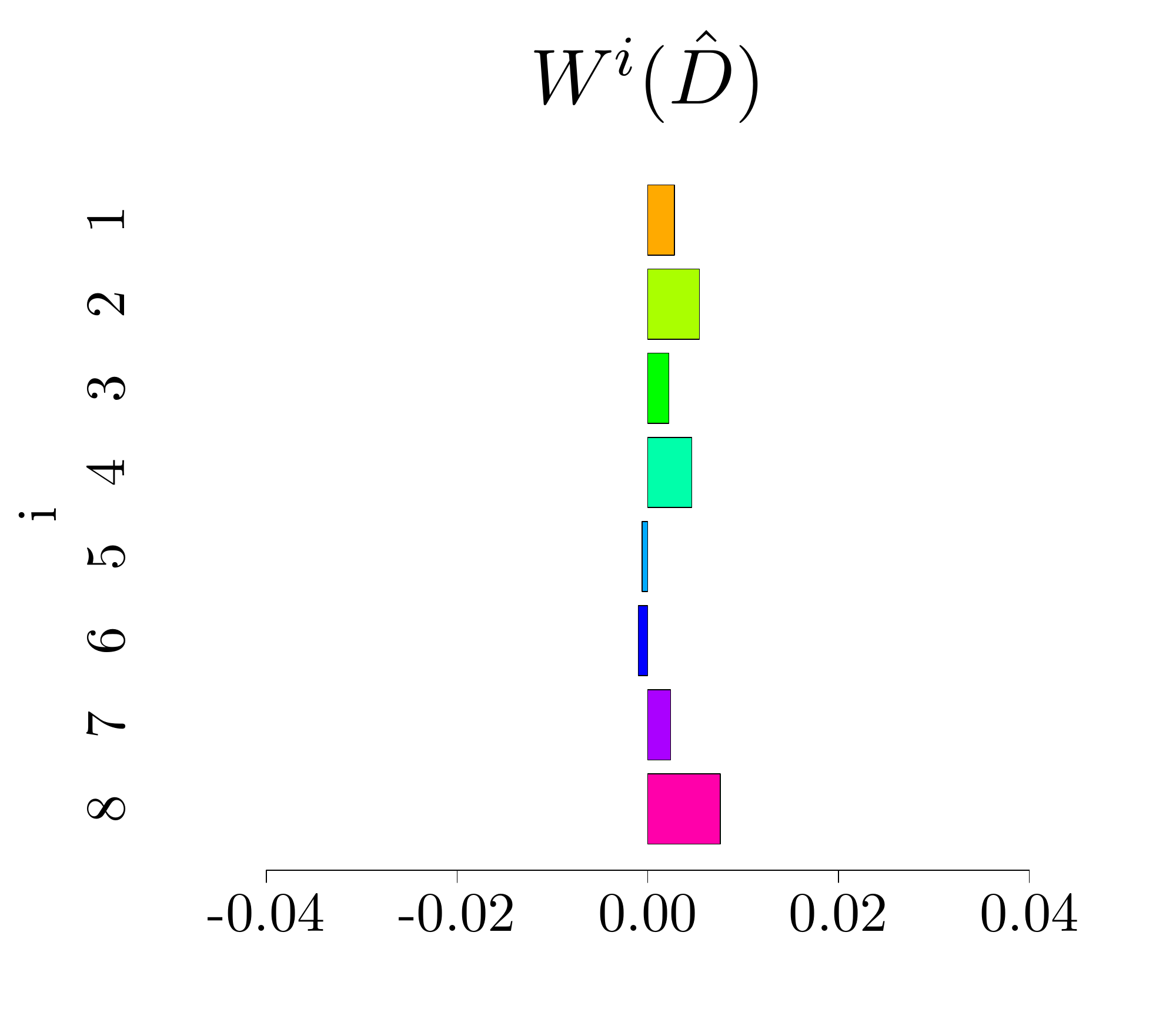}}\\
\vspace{0.6cm}
\scalebox{0.40}{\includegraphics{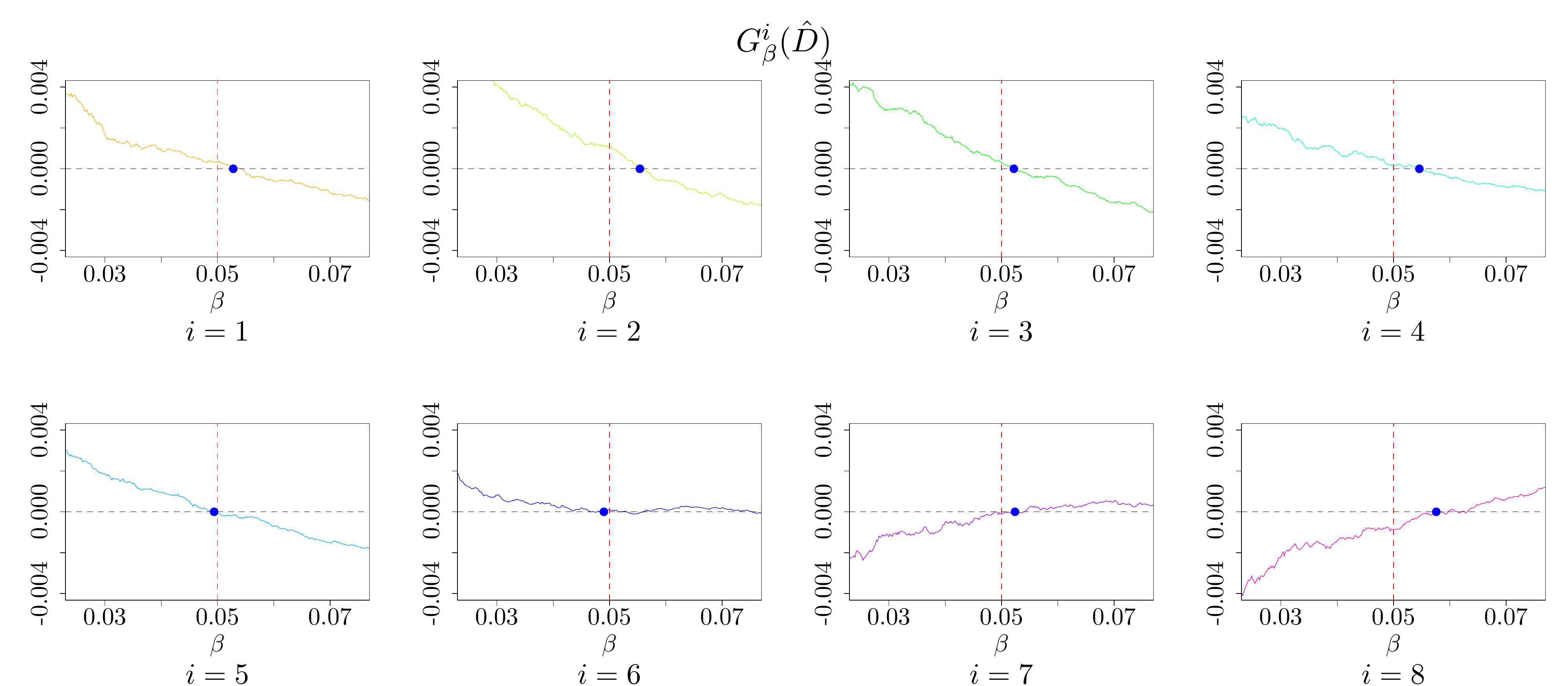}}
\end{center}
\caption{Example~\ref{ex:1}. Graphical representation of the risk level shift backtesting method, compare Table \ref{T:normal}: the first row shows  $W^i$, $i=1,\dots,8$, values being  close to zero. The second and third row shows
 $G^i_{\beta}(\hat D)$ as function of $\beta$ for each constituent, blue dots represent the values of $W^i(\hat{D})$.}
\label{fig:Ex1-4}
\end{figure}

\end{example}


\clearpage

\begin{figure}[htp!]
\begin{center}
\scalebox{0.25}{\includegraphics{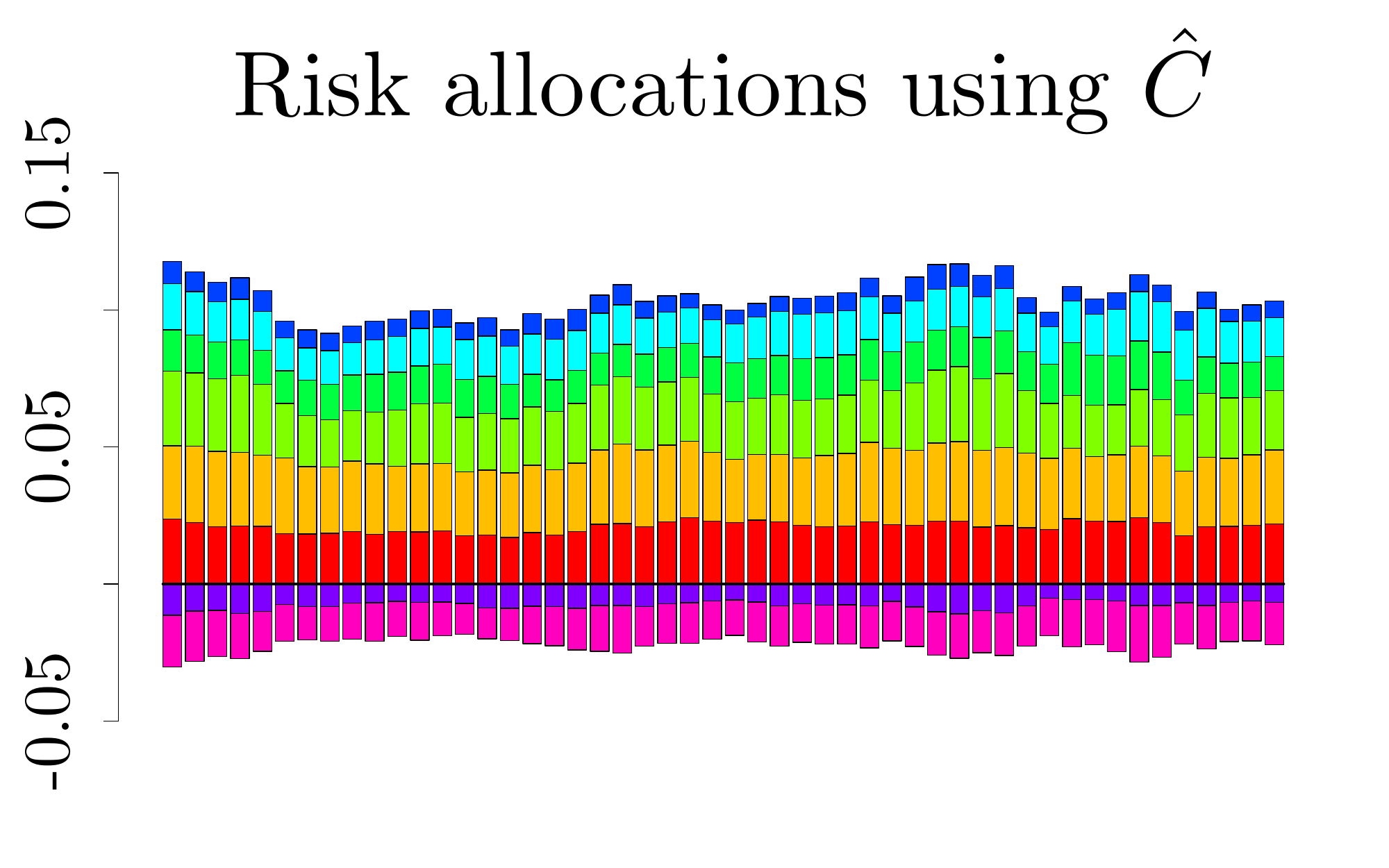}}
\scalebox{0.25}{\includegraphics{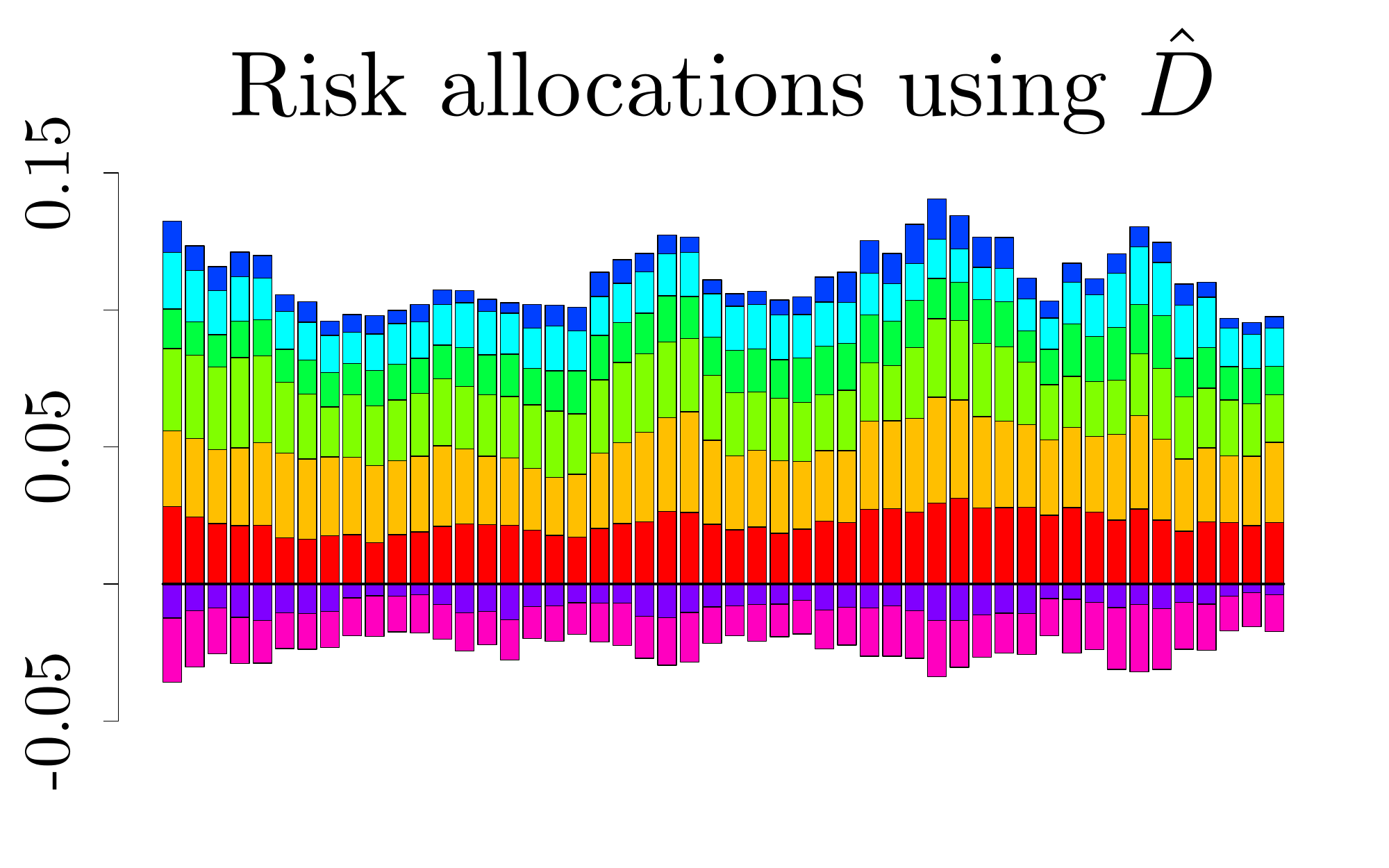}}
\end{center}
\caption{Example~\ref{ex:2}. Estimated risk allocations for the eight portfolio constituents (indexed by colour) at each backtesting day, $k=1,\ldots,m$for the estimated risk allocations $\hat C$ and $\hat D$;  the height of each coloured  horizontal layer represents the risk allocated to one of the constituents. It is apparent that the estimated risk allocation by these two methods are quite different.}
\label{fig:Ex2-1}
\end{figure}

\begin{table}[htp!]
\centering
\scalebox{0.8}{
\begin{tabular}{lllllllllll} \toprule
& $X_1$ & $X_2$ & $X_3$ & $X_4$ & $X_5$ & $X_6$ & $X_7$ & $X_8$ & & $S$\\
   \midrule
 $G^i_{0.05}(\hat{C})$ &  0.00209 &  0.00363 &  0.00152 &  0.00127 &  0.00093 &  0.00088 & -0.00084 & -0.00076 & $G_{0.05}(\hat{C})$ & 0.00872 \\
   $W^i(\hat{C})$ & 0.016 & 0.025 & 0.013 & 0.012 & 0.012 & 0.021 & 0.014 & 0.012 & $\Upsilon(\hat{C})$ & 0.069 \\
   \midrule
  $G^i_{0.05}(\hat{D})$ &  0.00074 &  0.00062 & -0.00030 &  0.00028 &  0.00049 &  0.00014 &  0.00001 &  0.00013 & $G_{0.05}(\hat{D})$ & 0.00212 \\
   $W^i(\hat{D})$ &  0.004 &  0.003 & -0.001 &  0.008 &  0.007 &  0.001 &  0.000 & -0.001 & $\Upsilon(\hat{D})$ & 0.054 \\
   \bottomrule
\end{tabular}
}
\caption{Summary of the estimated backtesting measures for Example~\ref{ex:2}: In the first columns we show $G^i_{0.05}$ and $W^i$, $i=1,\dots,8$, for the estimated risk allocations $\hat C$ and $\hat D$, corresponding to backtesting the fair allocation. The values of $G^i_{0.05}(\hat{C})$ are of one order of magnitude further away from zero than $G^i_{0.05}(\hat{D})$, indicating that indeed risk allocation methodology $\hat D$ is more adequate for this experiment. The last column shows the aggregated quantities $G_{0.05}$ and the risk level shift $\Upsilon$. Here, $\Upsilon(\hat{D})$ is close to $\alpha=0.05$, as expected.}
\label{T:student}
\end{table}

\begin{example}[Student $t$-distributed P\&Ls]\label{ex:2}
Similar to the previous example we consider a portfolio of eight constituents and with discounted P\&L following a $t$-distribution with five degrees of freedom. For comparison reasons, the distribution of $(X^1,\ldots,X^8)$ is modified so that it has the same mean and variance covariance structure as in Example~\ref{ex:1}.

First,  note that there is no available counterpart of $a$ for this setup. Second, as we will show below, since $X$ does not follow a Gaussian distribution, one should not use $\hat C$ to estimate the risk allocation, and only $\hat D$ is an appropriate methodology in estimating risk allocation. In Figure~\ref{fig:Ex2-1}, we present the estimated risk allocations computed using $\hat C$ and $\hat D$, over the entire backtesting period $k=1,\ldots,m$. It is apparent that the estimated risk allocation by these two methods are quite different. Table~\ref{T:student} contains the values of the estimated backtesting metrics, and for the reader's convenience $G^i_{0.05}$ and $W^i$ are represented graphically in Figure~\ref{fig:Ex2-2}. The values of $G^i_{0.05}(\hat{C})$ are of one order of magnitude further away from zero than $G^i_{0.05}(\hat{D})$, indicating that indeed risk allocation methodology $\hat D$ is more adequate for this experiment. We also note that magnitude of  $G^i_{0.05}(\hat{D})$ in this example aligns with the benchmark values from Example~\ref{ex:1}. Similar arguments hold true for $W^i$ and $\Upsilon$.

\clearpage

\begin{figure}[htp!]
\begin{center}
\scalebox{0.25}{\includegraphics{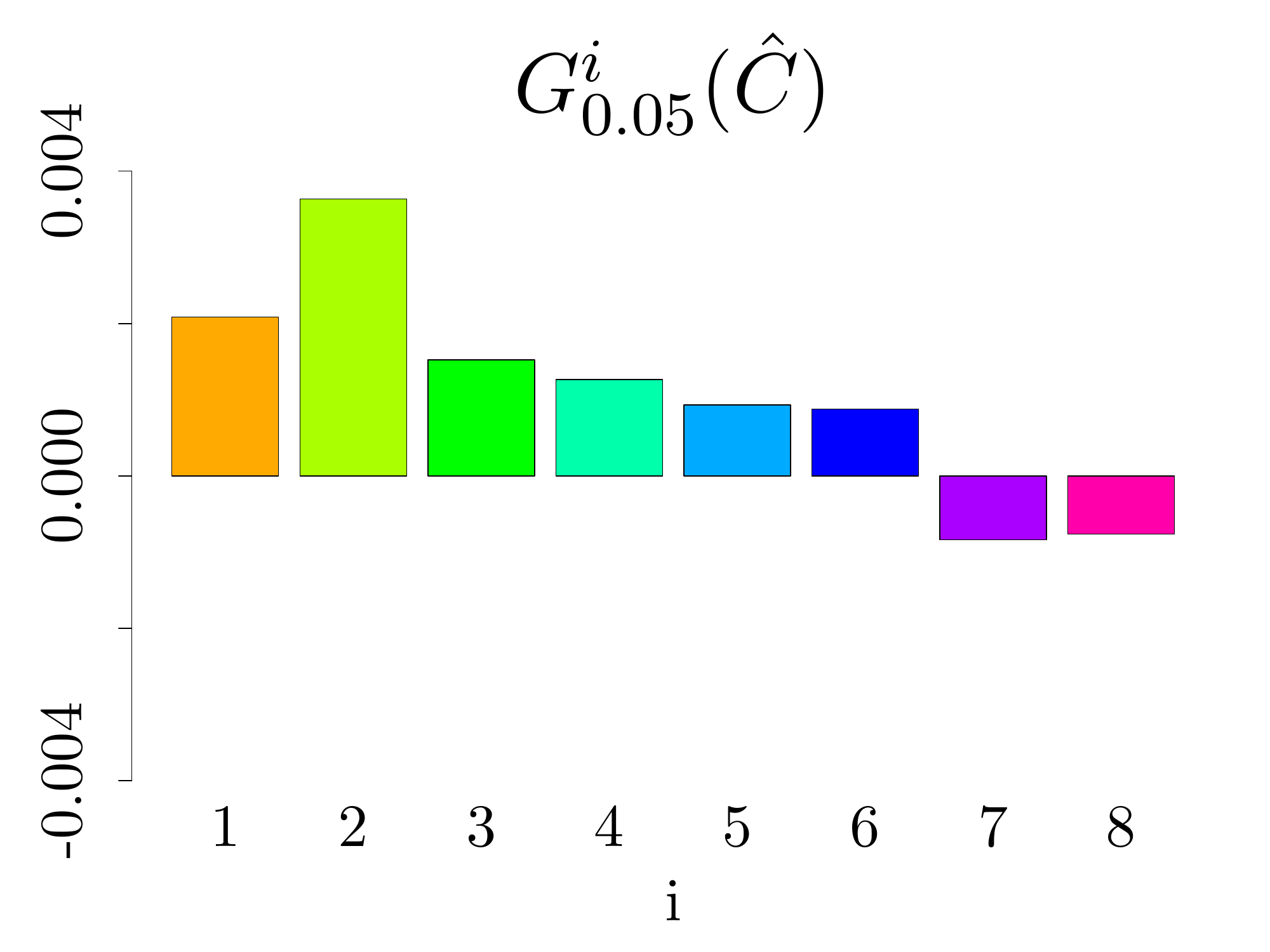}}
\scalebox{0.25}{\includegraphics{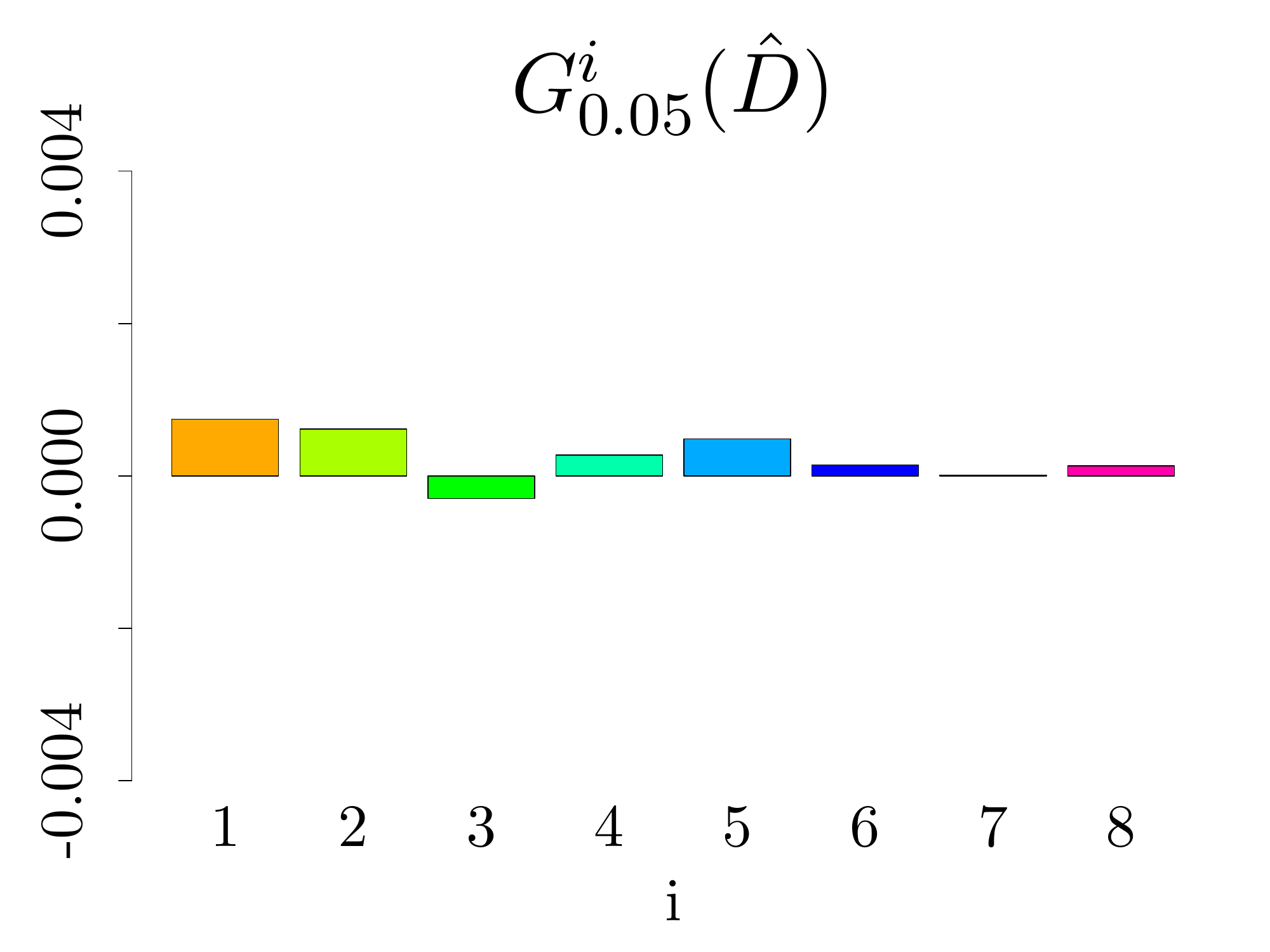}}\\
\scalebox{0.25}{\includegraphics{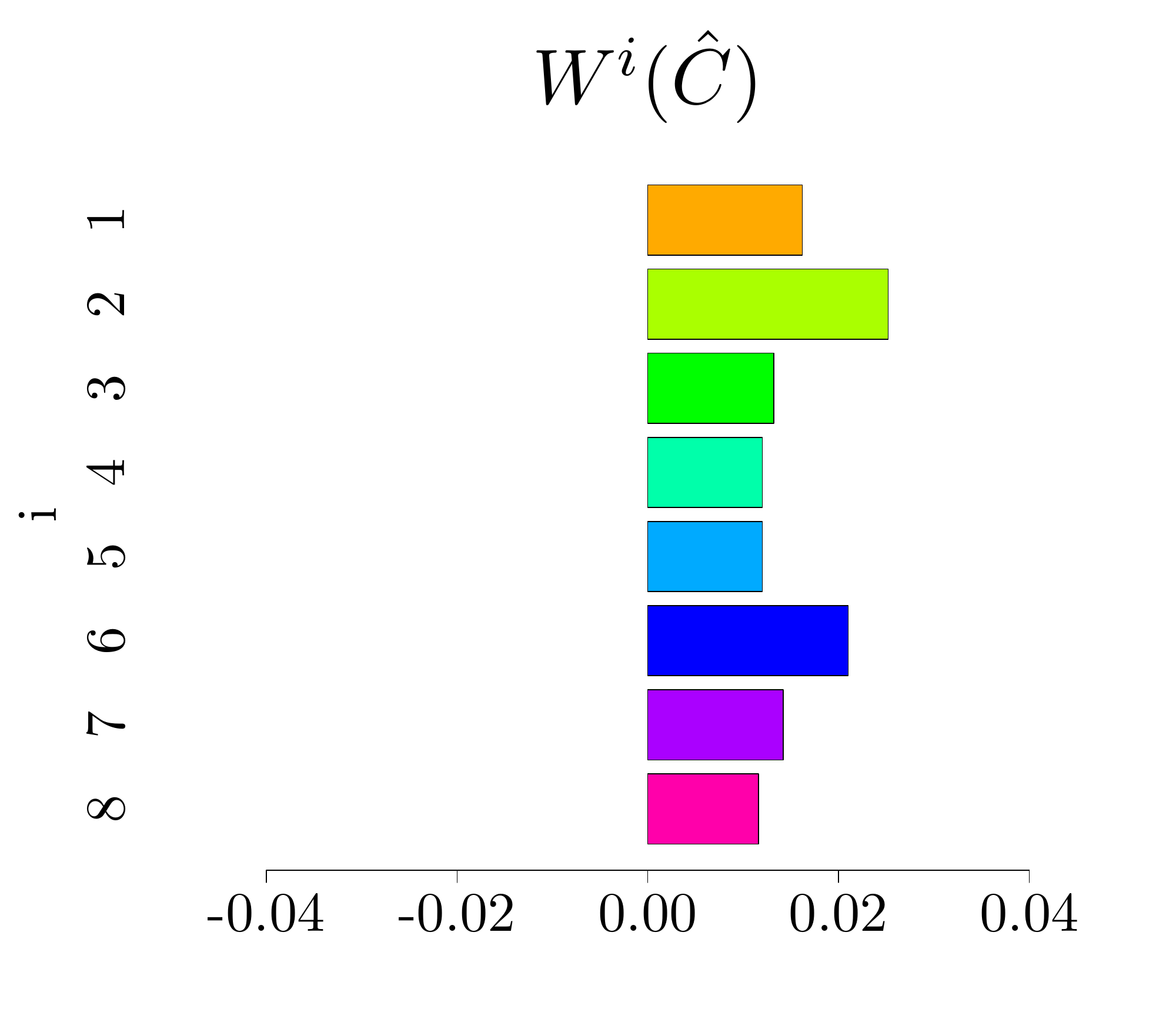}}
\scalebox{0.25}{\includegraphics{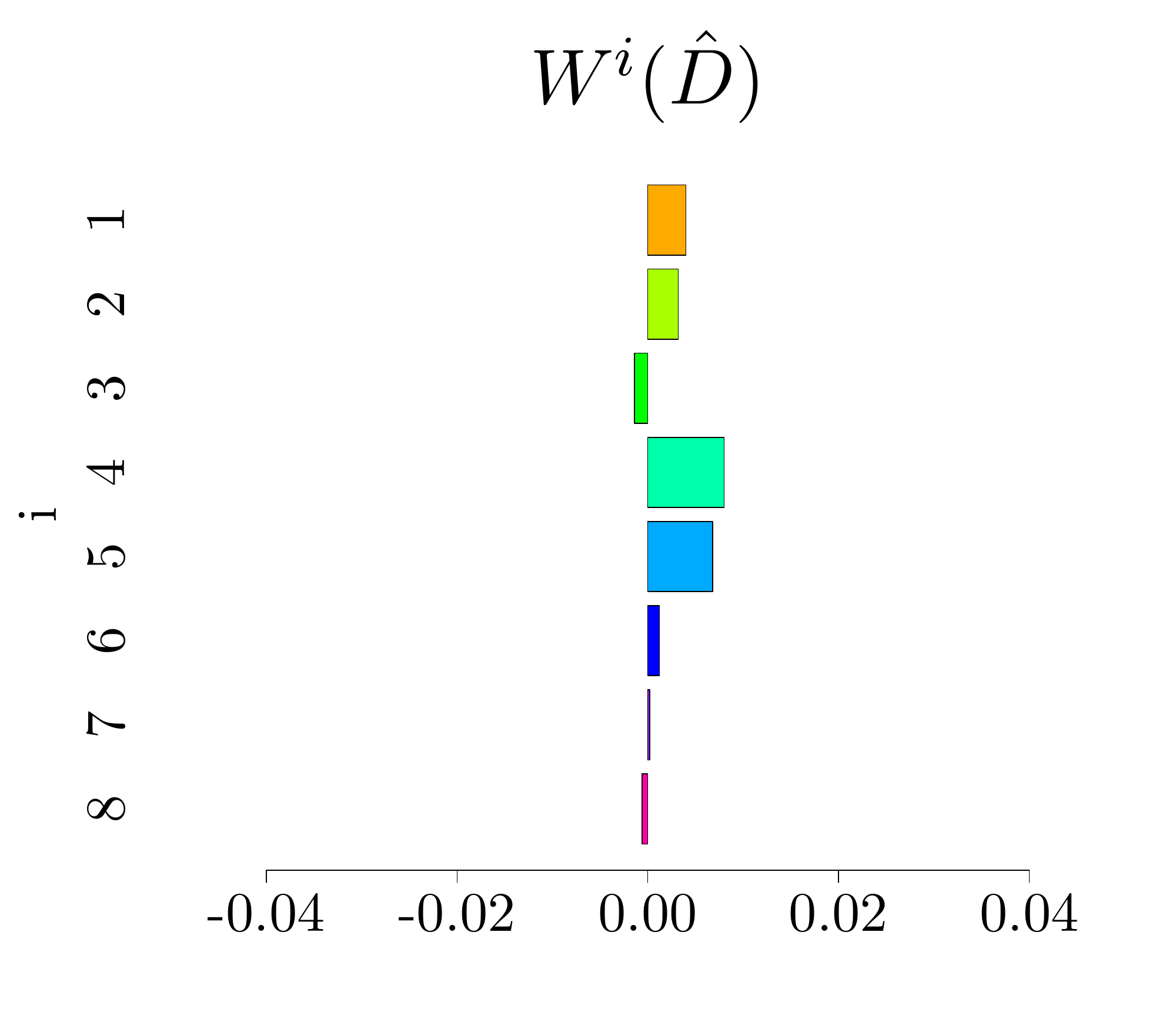}}
\end{center}
\caption{Estimated backtesting measures for Example~\ref{ex:2}, compare Figures \ref{fig:Ex1-1} and \ref{fig:Ex1-2}. The first row represents the deviation from fairness backtesting method and shows $G^i_{0.05}$ for each constituent $i=1,\dots,8$ for the risk allocations $\hat C$ and $\hat D$. The second row represents the risk level shift backtesting method and shows $W^i$, respectively. The values of $G^i_{0.05}(\hat{C})$ are of one order of magnitude further away from zero than $G^i_{0.05}(\hat{D})$, indicating that indeed risk allocation methodology $\hat D$ is more adequate for this experiment. We also note that magnitude of  $G^i_{0.05}(\hat{D})$ in this example aligns with the benchmark values from Example~\ref{ex:1}. Similar arguments hold true for $W^i$.}
\label{fig:Ex2-2}
\end{figure}
\end{example}

\begin{example}[Fairness and asymptotic fairness]\label{ex:3}
In this example we illustrate the fairness and the asymptotic fairness properties.
Again, for the sake of a reference statistic which eases the presentation,   we work under the normality assumption. Moreover,  we consider only the first three  constituents from Example~\ref{ex:1}, that is $(X^1,X^2,X^3)$, because the other constituents show similar behavior. The numerical results presented below confirm that allocations $a$ and $\hat B$ are fair. In addition, these results confirm that  the allocations $\hat C$ and $\hat D$ are asymptotically fair even though they are not fair in this example.

Figure \ref{F:asympt.G} deals with the issue of a short learning period, that is a small sample size, of $n=250$. We see that for allocations $a$ and $\hat B$  the $G^i_{0.05}$'s and  $W^i$'s are getting close to zero with increasing $m$, and that $\Upsilon$ gets close to $0.05$ with increasing $m$,  confirming that these are fair allocations. We also see that $G^i_{0.05}$'s and $W^i$'s stay away from zero, and $\Upsilon$ stays away from $0.05$ with increasing $m$ for allocations $\hat C$ and $\hat D$, indicating that these are not fair allocations.

Figure~\ref{F:consistency} illustrates the asymptotic fairness of $\hat D^n$ with $n\rightarrow \infty$. The left panel shows that $G^i_{0.05}(\hat{D})$ get closer to zero for large $m$ with increasing $n$. Similarly for the right panel, with regard to  $W^i$  and $\Upsilon$.

\begin{figure}[htp!]
\begin{center}
\scalebox{0.30}{\includegraphics{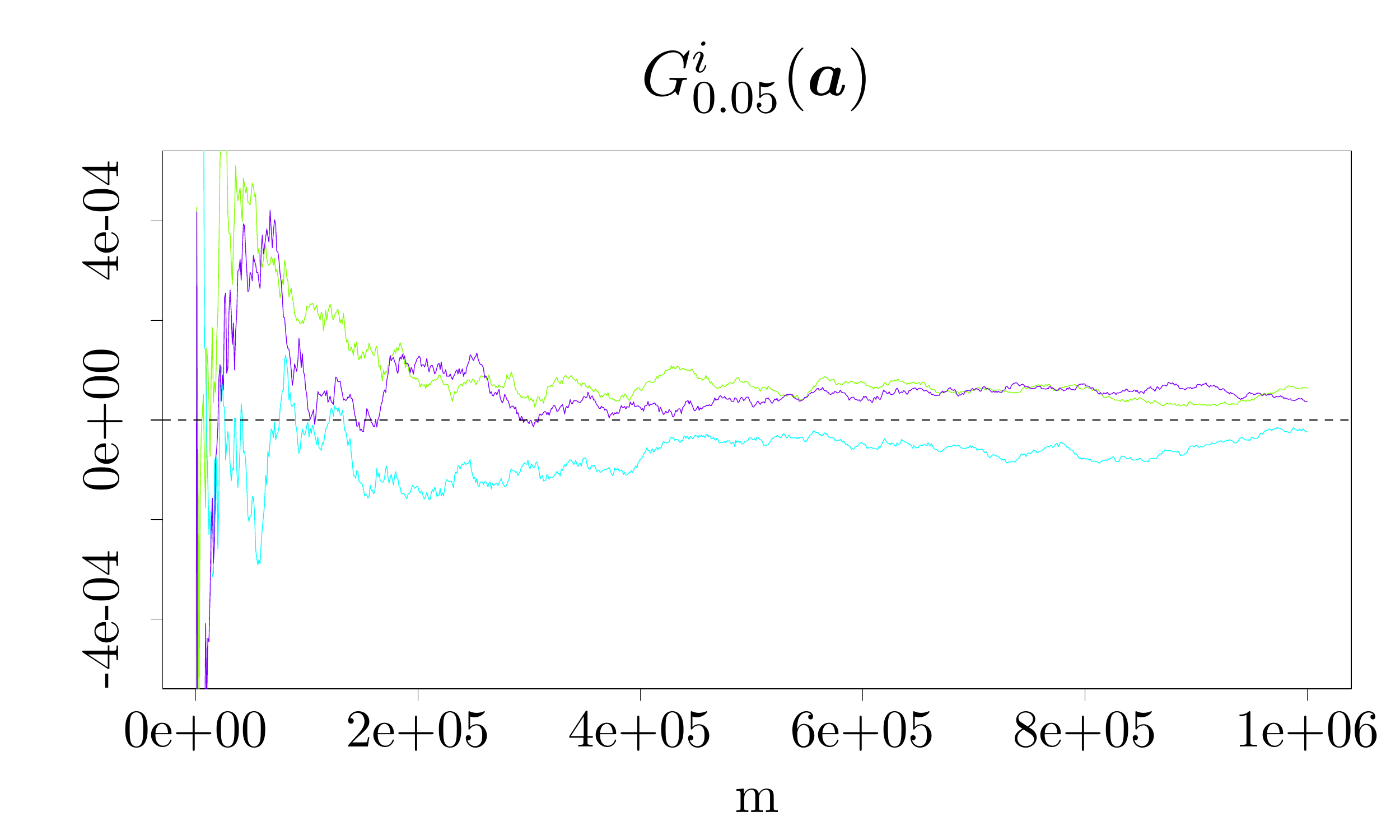}}
\scalebox{0.30}{\includegraphics{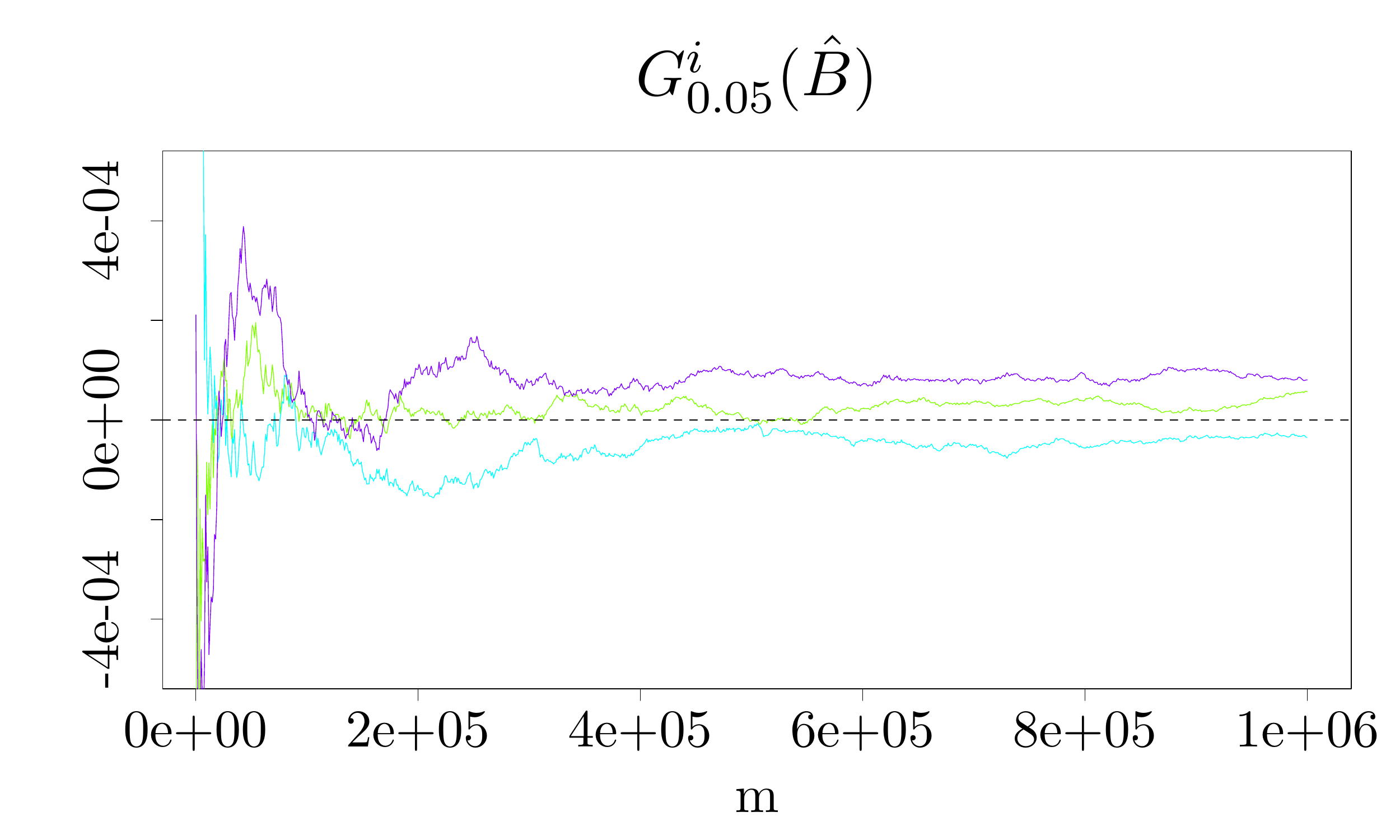}}\\
\scalebox{0.30}{\includegraphics{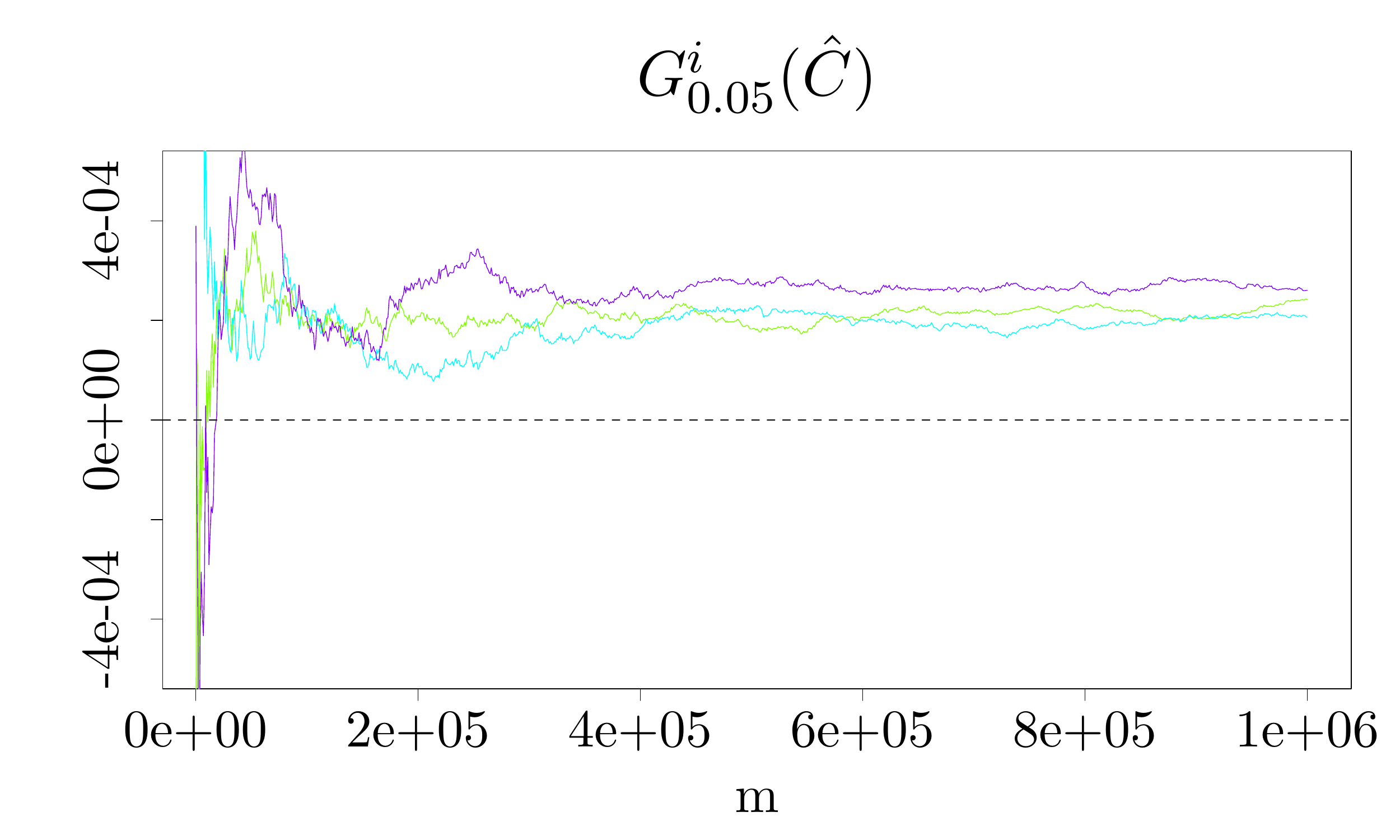}}
\scalebox{0.30}{\includegraphics{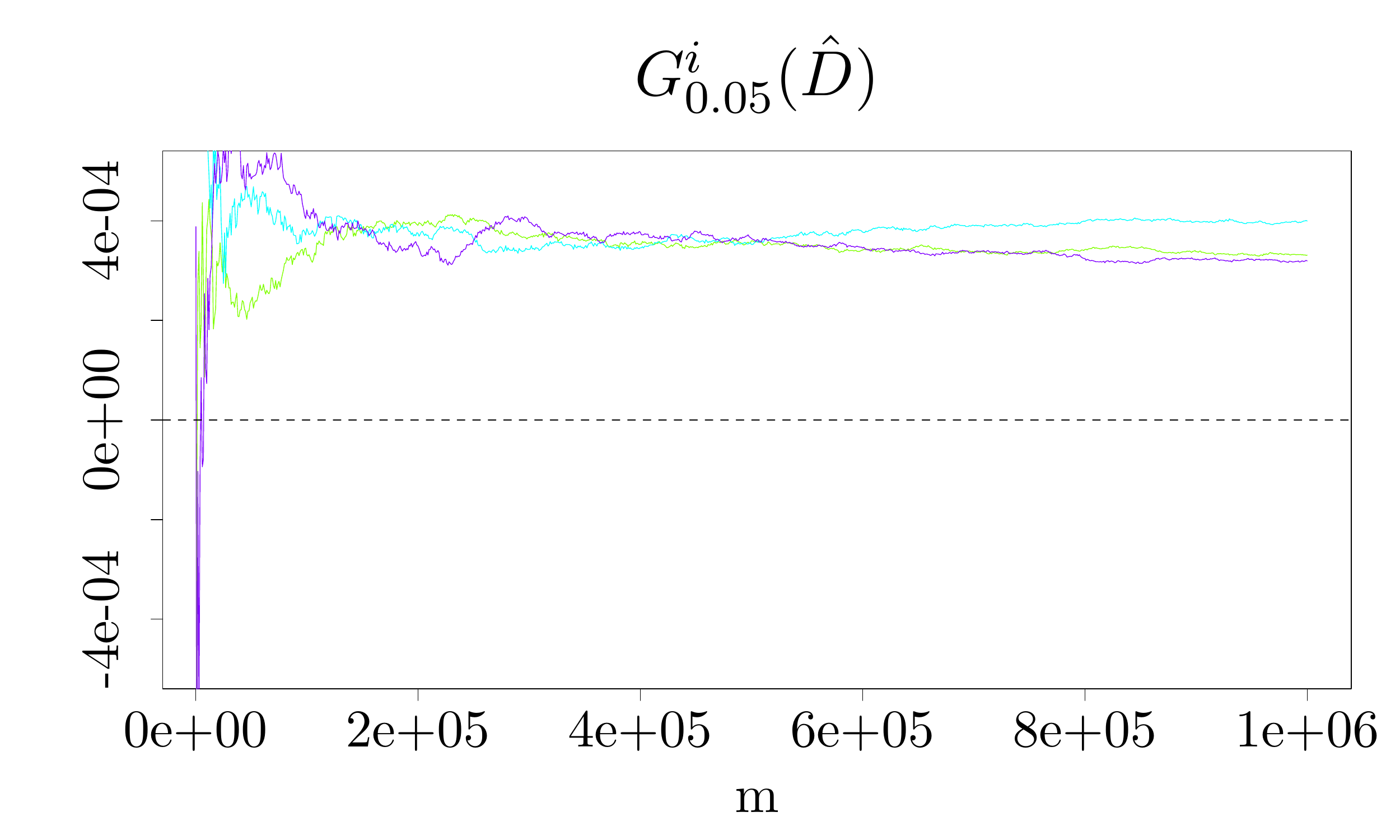}} \\
\scalebox{0.30}{\includegraphics{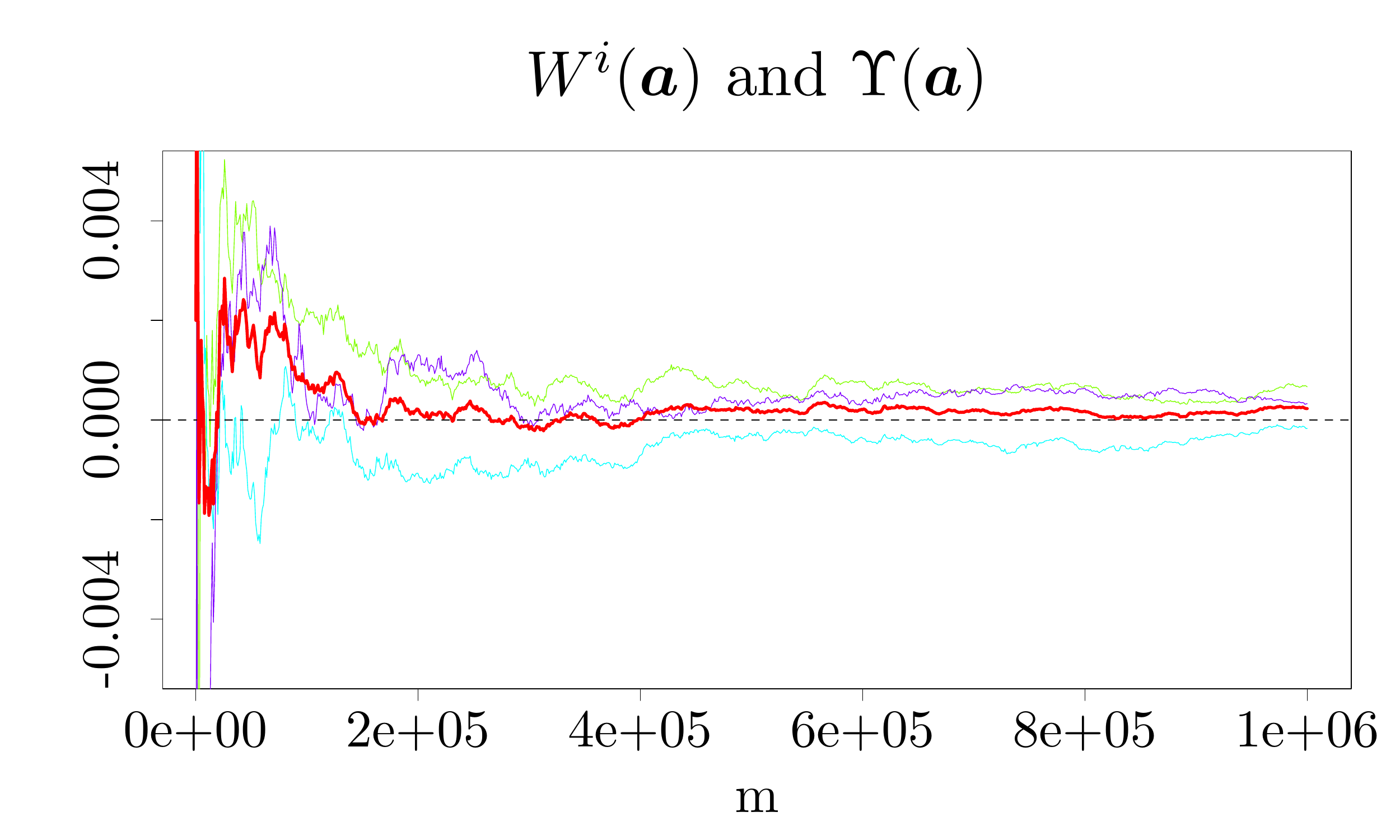}}
\scalebox{0.30}{\includegraphics{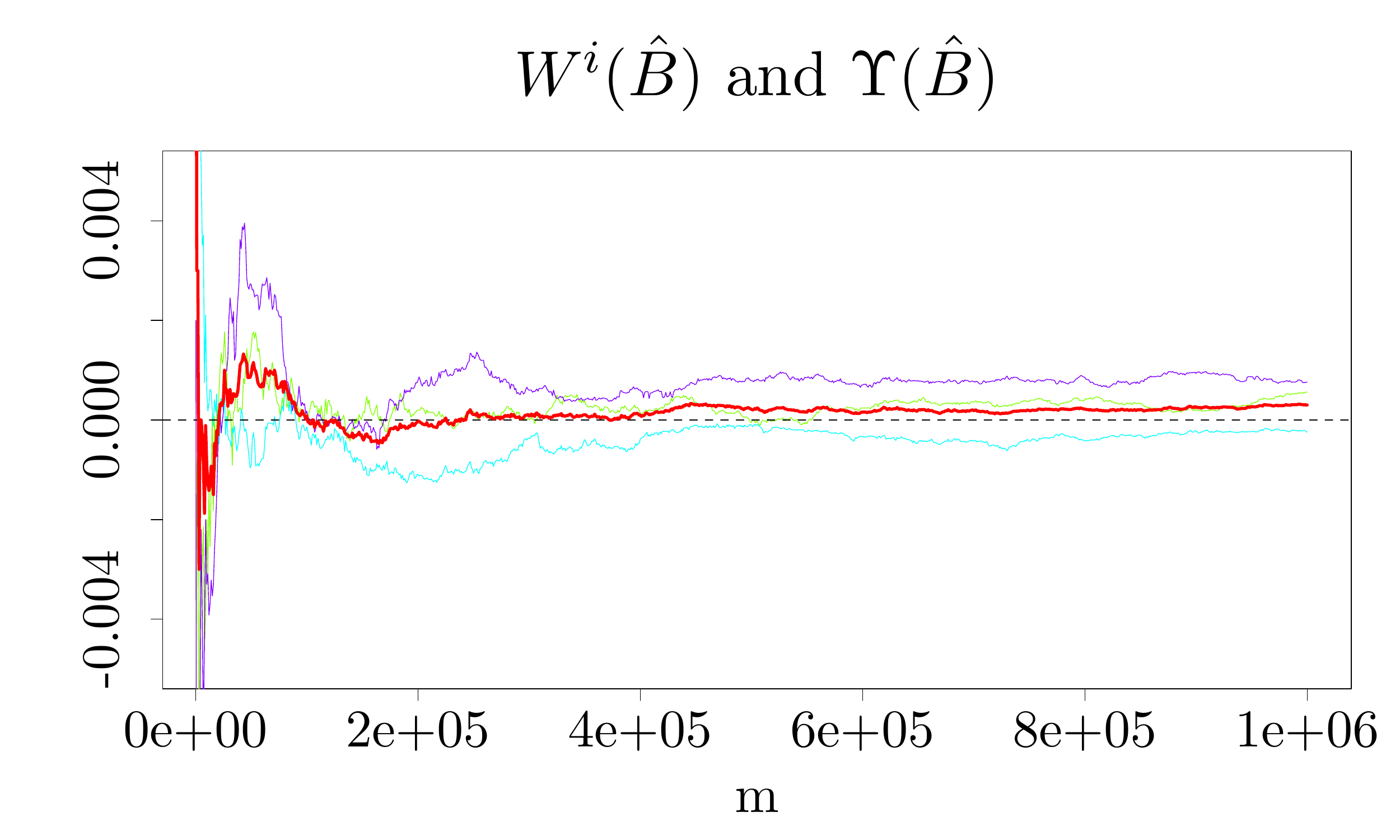}}\\
\scalebox{0.30}{\includegraphics{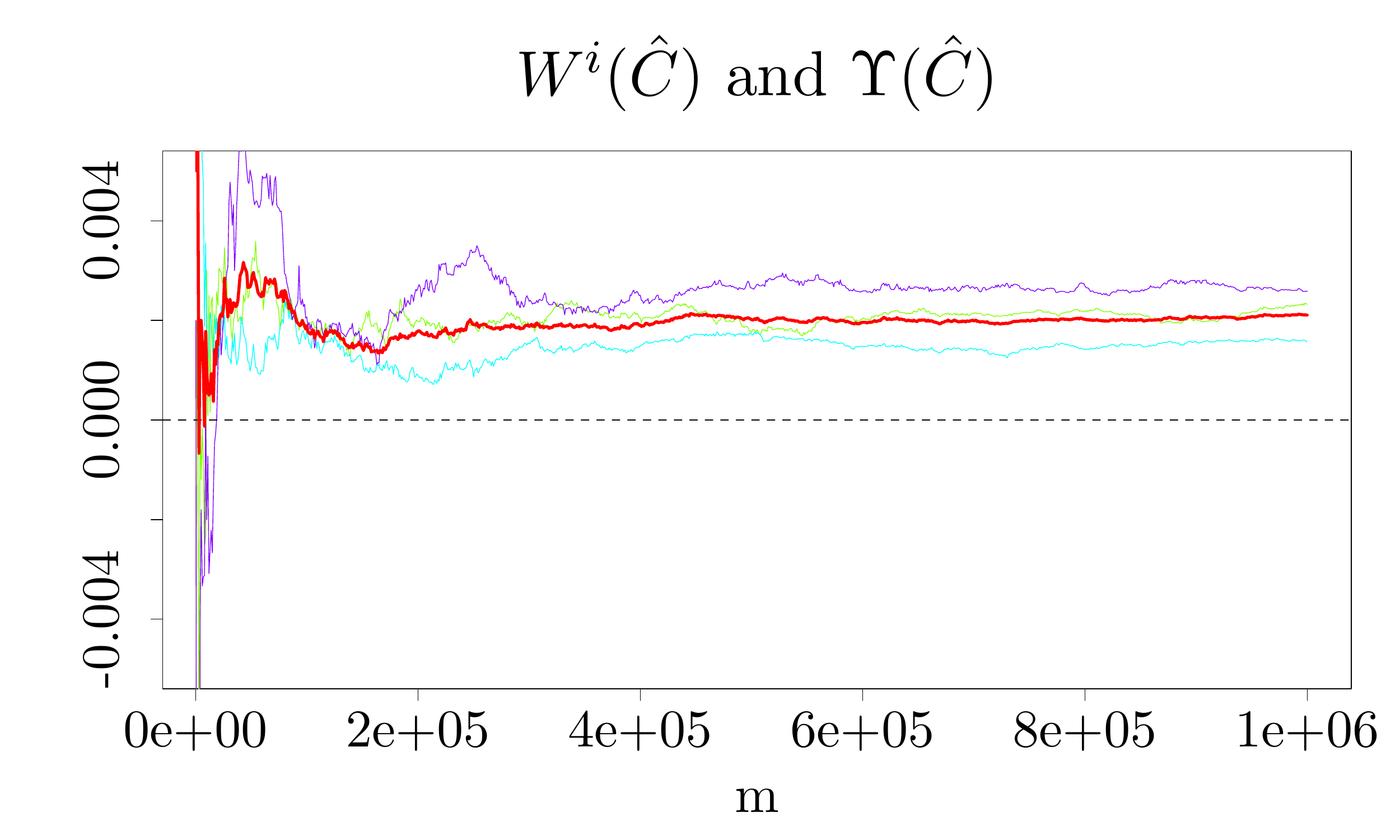}}
\scalebox{0.30}{\includegraphics{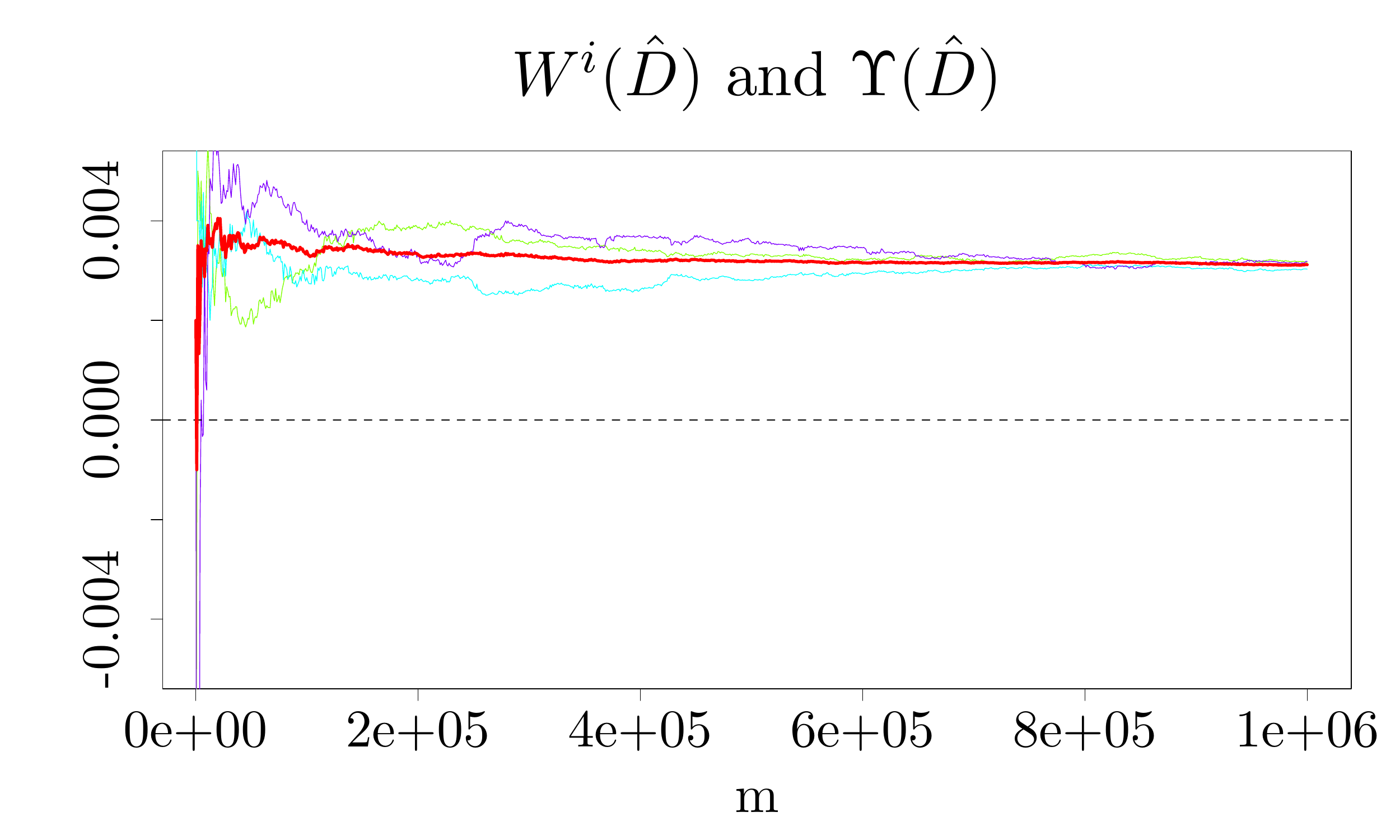}} \\
\end{center}

\caption{Example~\ref{ex:3} (small sample size): We fix the learning period of size $n=250$ and consider increasing lengths of backtesting intervals, i.e.~we let $m$ run. On the first two rows we plot  $G_{0.05}^i$, $i=1,2,3,$ for the true allocation $a$, and the estimated risk allocations $\hat B$, $\hat C$ and $\hat D$.
On the last two rows we plot $W^i,\ i=1,2,3,$. For allocations $a$ and $\hat B$  the measures are getting close to zero with increasing $m$, and $\Upsilon$ gets close to $0.05$,  confirming that these are fair allocations. For allocations $\hat C$ and $\hat D$ the opposite is true,  indicating that these are not fair allocations.
}
\label{F:asympt.G}
\end{figure}

\begin{figure}[htp!]
\begin{center}
\scalebox{0.30}{\includegraphics{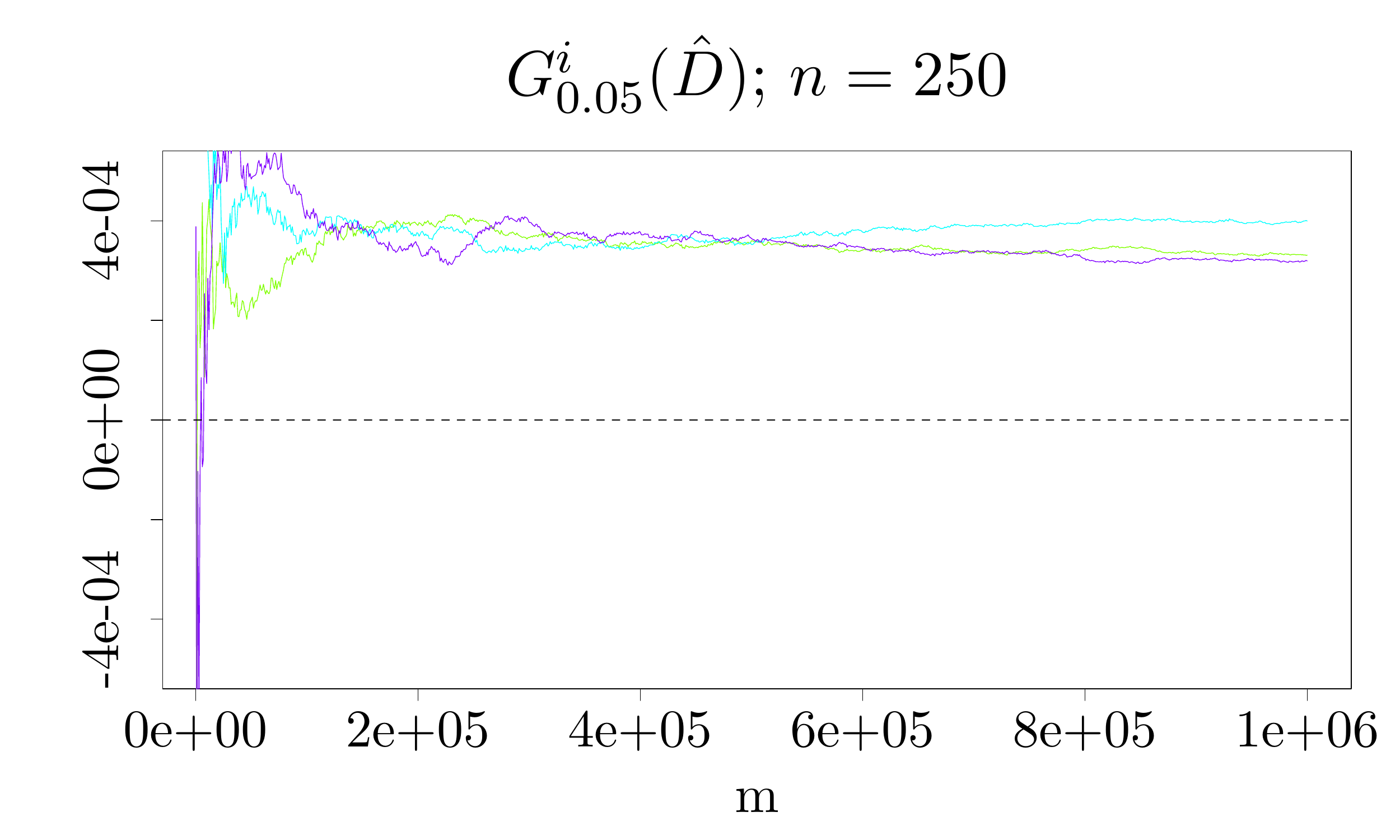}}
\scalebox{0.30}{\includegraphics{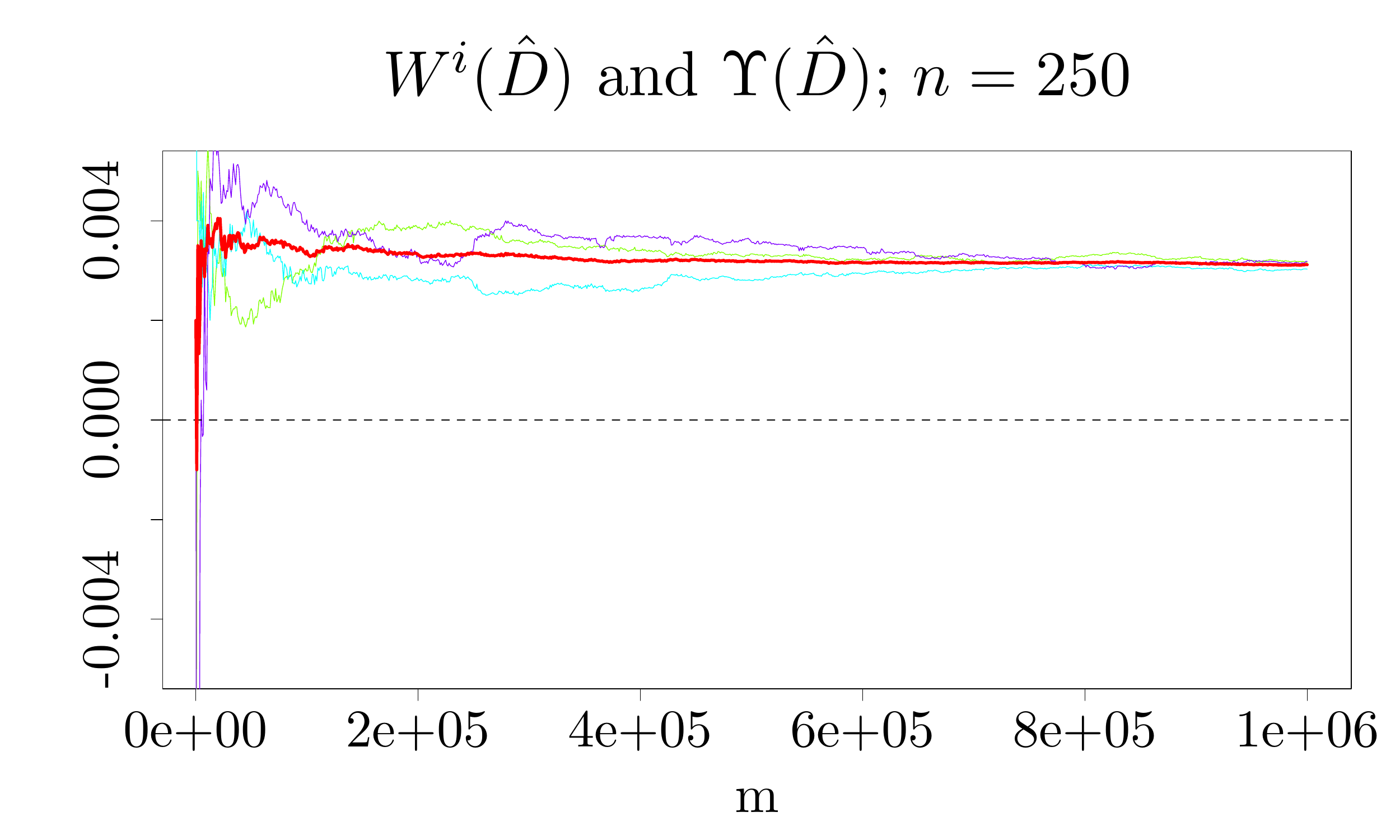}}\\
\scalebox{0.30}{\includegraphics{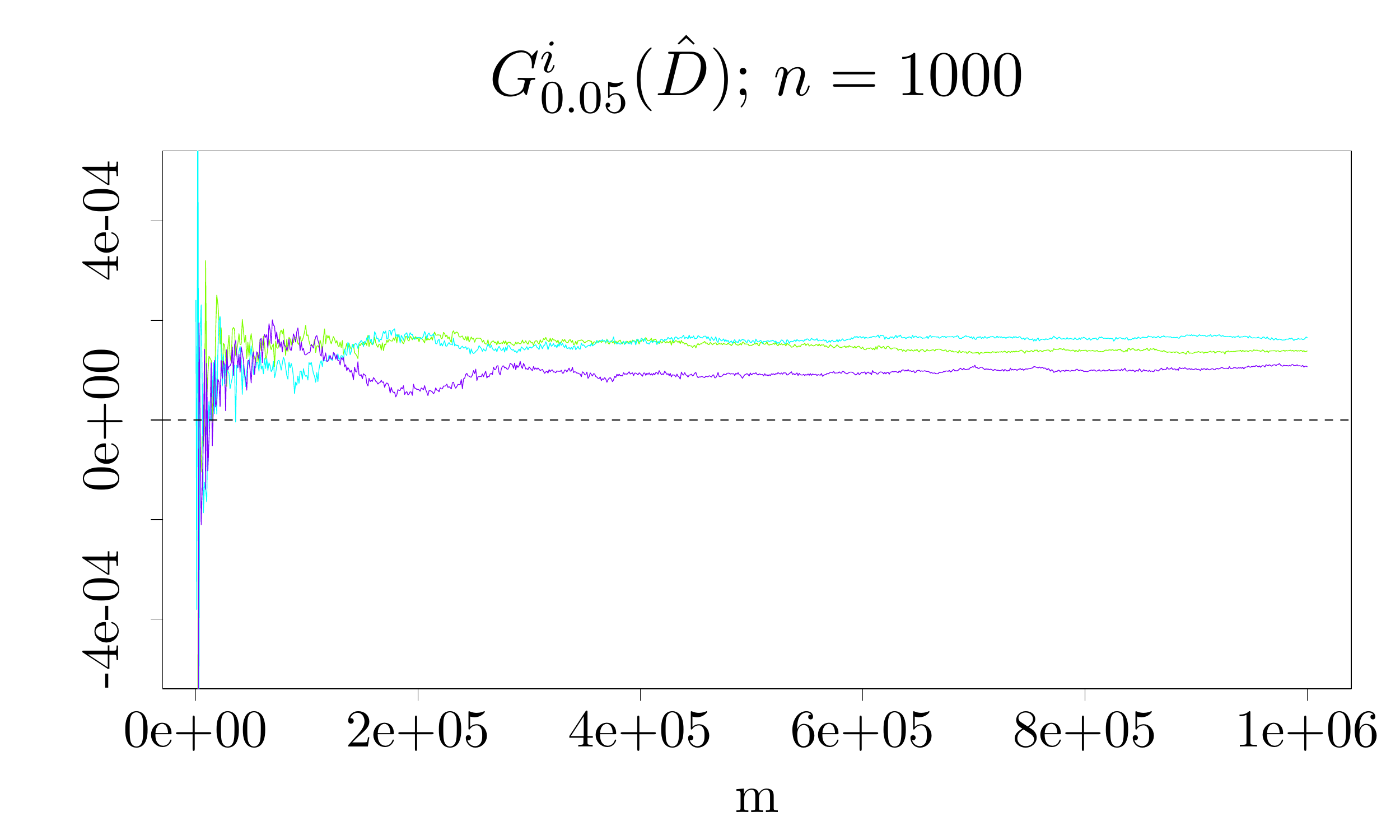}}
\scalebox{0.30}{\includegraphics{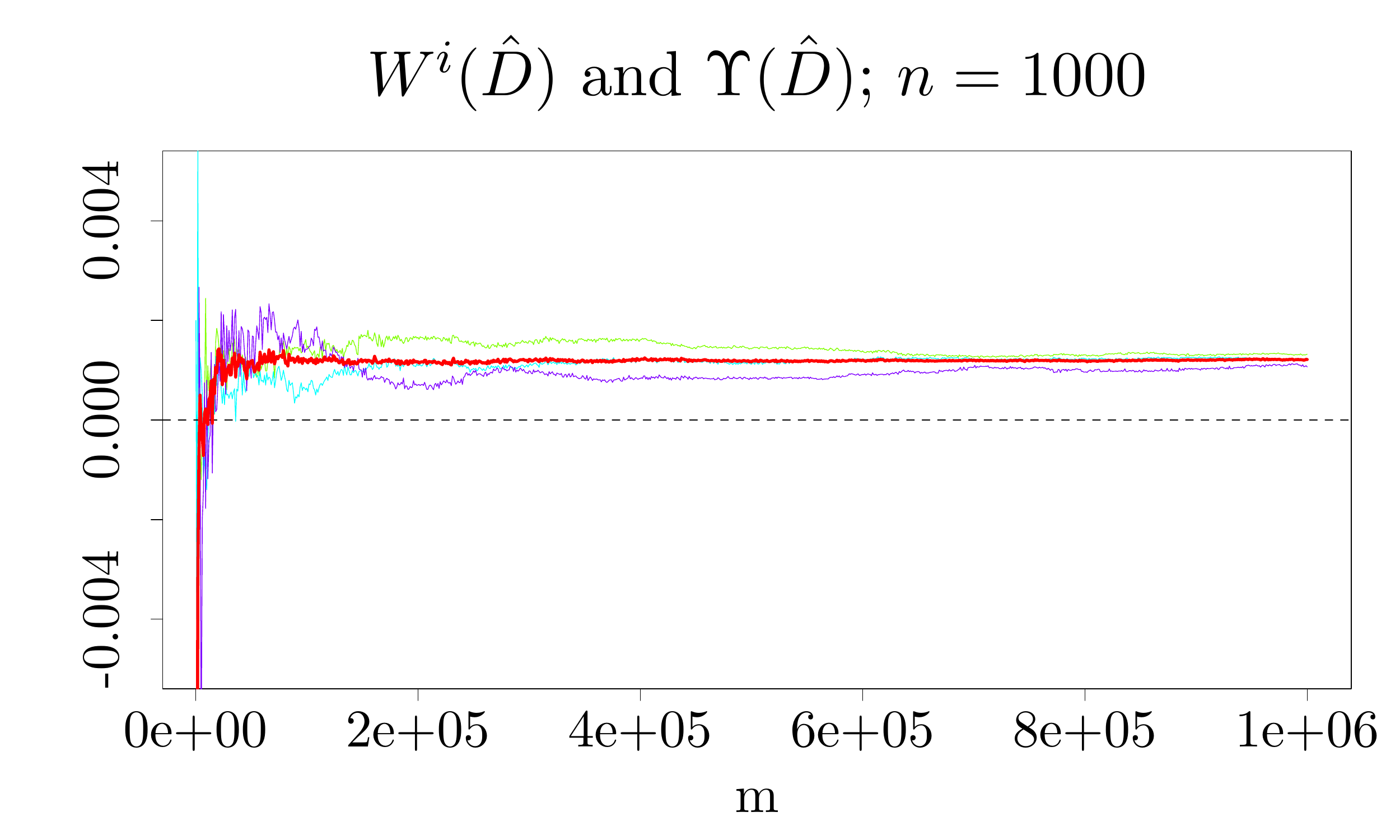}}\\
\scalebox{0.30}{\includegraphics{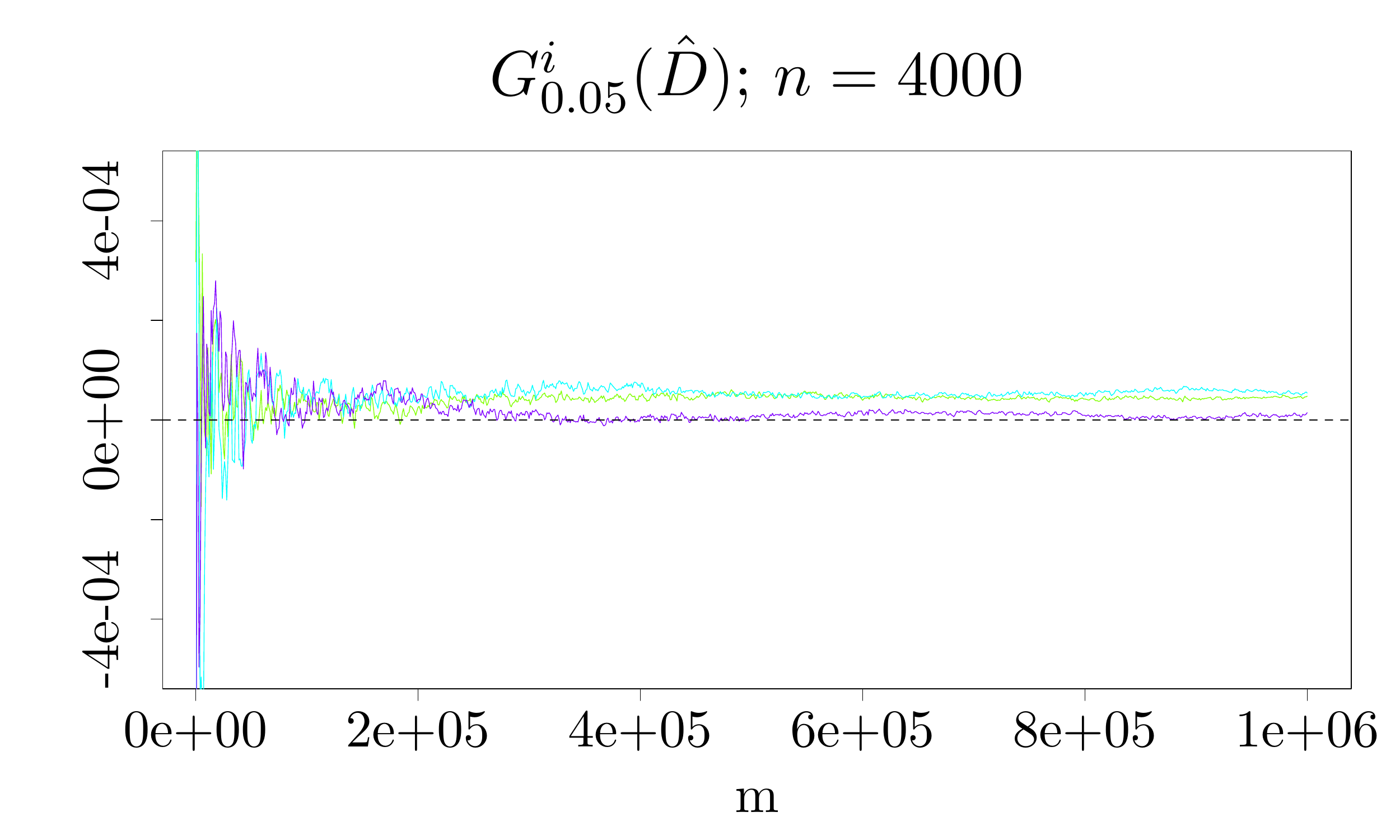}}
\scalebox{0.30}{\includegraphics{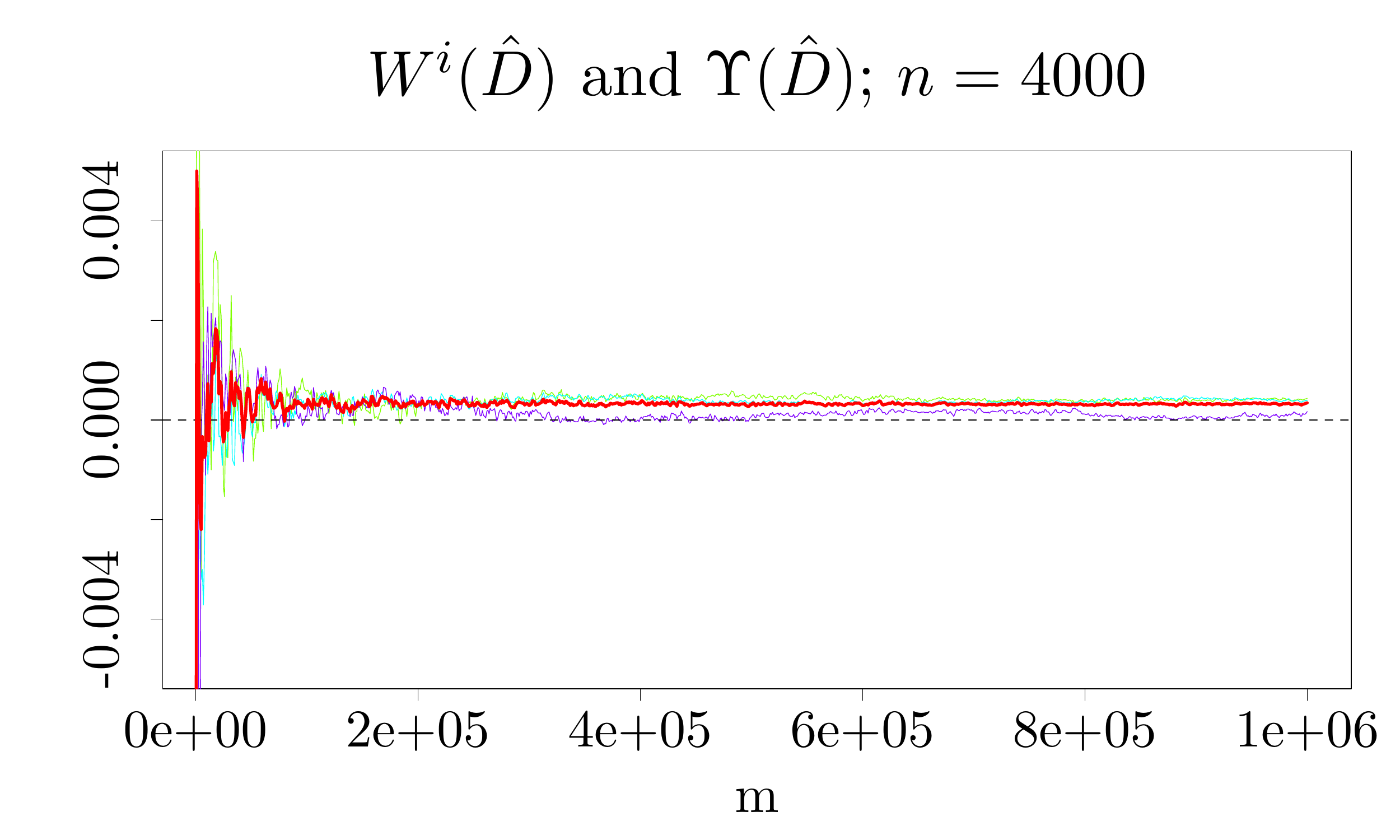}}
\end{center}
\caption{Example~\ref{ex:3} (asymptotic fairness): in the left panel we plot $G^i_{0.05}(\hat{D})$ as function of $m$, for different values of learning period, $n=250$ (top) to  $n=4000$ (bottom). These plots confirm that $G^i_{0.05}(\hat{D})$ get closer to zero with increasing $n$, which yields asymptotic fairness.  The pictures in the right panel contain values of $W^i(\hat{D})$ and $\Upsilon(\hat{D})$ as functions of $m$, and for $n=250, 1000$ and $4000$ (from top to bottom). We obtain similar results here, showing that  $W^i$  and $\Upsilon$ get closer to zero with increasing $n$.}
\label{F:consistency}
\end{figure}

\end{example}

\clearpage

\begin{figure}[htp!]
\begin{center}
\scalebox{0.20}{\includegraphics{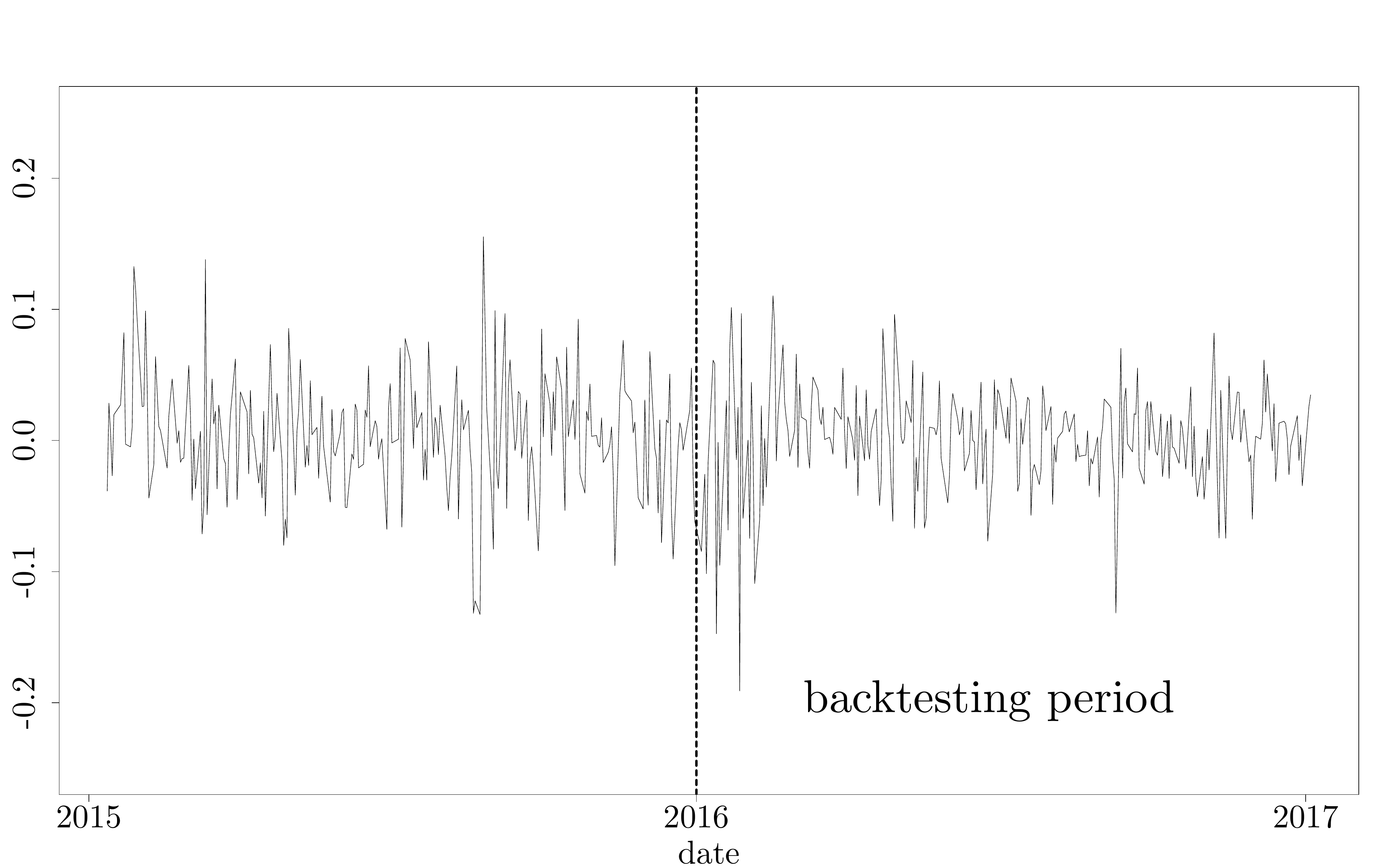}} \scalebox{0.20}{\includegraphics{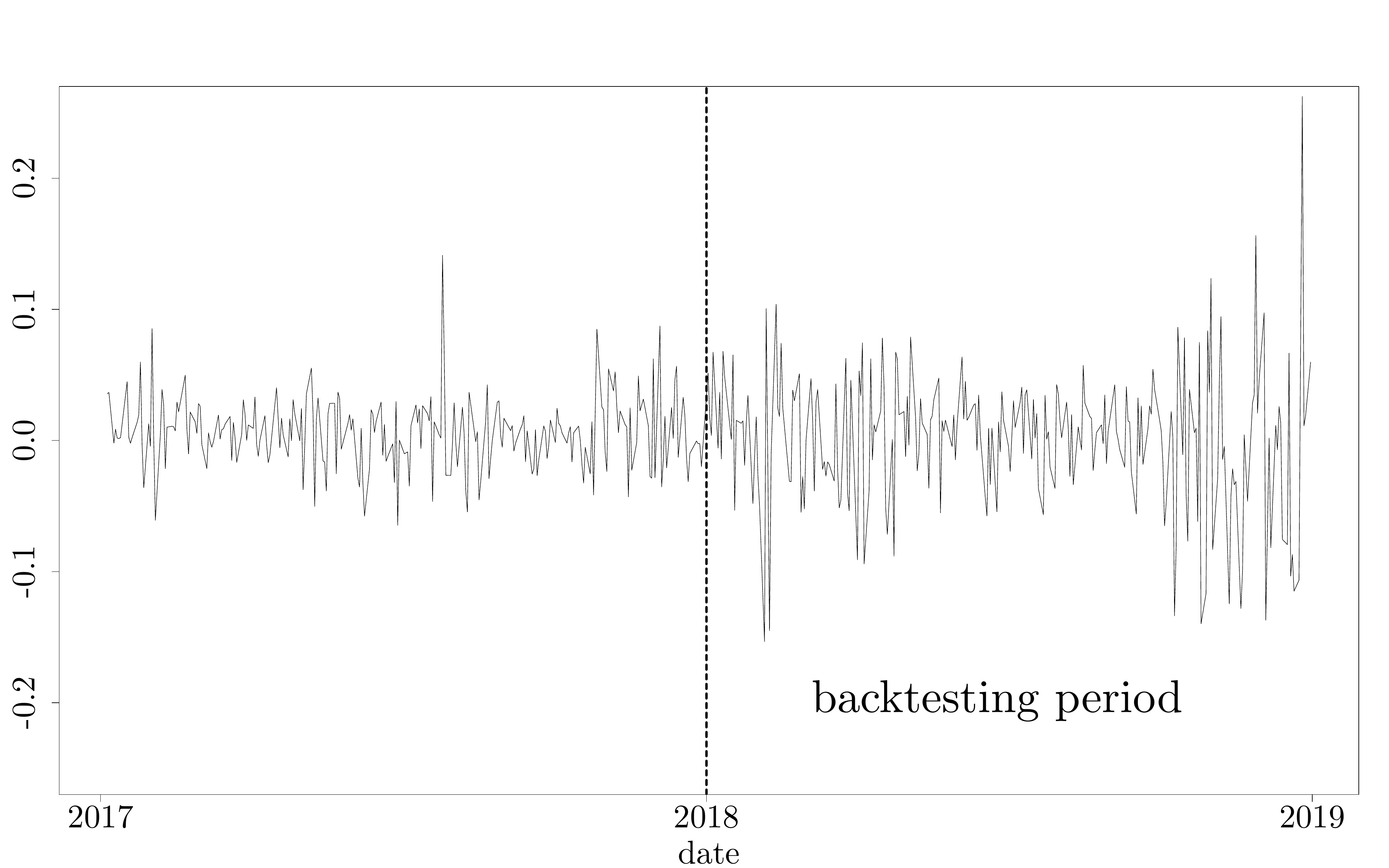}}
\end{center}
\caption{Aggregated portfolio P\&Ls split into \emph{Dataset~1} (January 2015 - December 2016, left panel) and \emph{Dataset~2} (January 2017 - December 2018, right panel).
The estimated volatilities are $0.0425\ (0.0456 / 0.0392) (first half / second half)$ in Dataset~1 and  in the right panel $0.0415 \ (0.0267 / 0.0522)$.
}
\label{fig:Ex4-0}
\end{figure}

\begin{example}[Market data example]\label{ex:4}
In this example we analyze the performance of the backtesting methodologies on  market data. We consider the same portfolio formation as in Example~\ref{ex:1}, by taking eight stocks (AAPL, AMZN, BA, DIS, HD, KO, JPM, and MSFT) from the S\&P 500 index, and form an equally weighted long-short portfolio. Namely, we hold a long position in the first six stocks, and a short position in the last two stocks, with nominal (absolute) value \$1 in each stock.
For this study, we use the daily stock returns, for the period January 2015 - December 2018. Throughout we set both, learning period ($n$) and backtesting period ($m$), equal to 250 days. We split the dataset into two subsets: January 2015 - December 2016 (\emph{Dataset~1}), and January 2017 - December 2018 (\emph{Dataset~2}).
As before, for each dataset, we use the standard 1-day rolling window and compare forecasted capital allocations with realized portfolio values.

One reason to split the data into these two time frames stems from the distinctively different patterns of the the aggregated P\&L of the portfolio; see Figure~\ref{fig:Ex4-0}.
Dataset~1 is more homogeneous, with slightly larger volatility in the first half. Specifically, the sample standard deviation of the aggregated P\&L portfolio for Dataset~1 is equal to 0.0425;  the sample standard deviation for the first half is 0.0456, and for the second half is 0.0392. Dataset~2 exhibits a higher volatility in the second half compared to its first half and compared to Dataset~1; the standard deviation for the first half is  0.0267, and for the second half is  0.0522. As  we will show later, these differences will propagate into the capital risk allocation and they will be picked up by the backtesting procedure. Similar to the previous examples, for both datasets we will use the risk allocation estimators $\widehat{C}$ and $\widehat{D}$, and we will use both backtesting procedures proposed in Section~\ref{sec:backtesting}. We also performed the Jarque-Bera normality test for the aggregated portfolio P\&Ls for both datasets, which was rejected at significance level 0.01.

In the following, we will analyze each dataset separately. A first overview is presented in Figure~\ref{fig:Ex4-1}, where the first two columns (left panel) correspond to Dataset~1, and the rightmost two columns (right panel) to Dataset~2.

\smallskip \noindent
\textbf{Dataset~1}, January 2015 - December 2016, Figure~\ref{fig:Ex4-1}, left panel, and Table~\ref{table:ex4-2}.
The aggregated (total) risk of the portfolio is displayed in the first row of Figure~\ref{fig:Ex4-1}, which was computed by using estimators $\widehat{C}$ and $\widehat{D}$.  The aggregated portfolio risk seems to be well estimated by both $\widehat C$ and $\widehat D$. The noticeable slight decrease in time of the aggregated risk is partially due to the lower volatility of the returns in the second part of the Dataset~1. The estimated risk capital allocations are presented in the second row, and the backtesting statistics $G,G^i$, and $W,W^i$ are graphically displayed in rows 3-5 of Figure~\ref{fig:Ex4-1} and the numerical values are presented in Table~\ref{table:ex4-2}.

\begin{table}[t!]
\centering
\scalebox{0.8}{
\begin{tabular}{lllllllllll} \toprule
Dataset 1    & $X_1$ & $X_2$ & $X_3$ & $X_4$ & $X_5$ & $X_6$ & $X_7$ & $X_8$ &  & $S$ \\ \midrule
 $G^i_{0.05}(\hat C)$ & -0.00107 &  0.00221 &  0.00110 & -0.00036 & -0.00052 &  0.00255 &  0.00653 & -0.00797 & $G_{0.05}$ & 0.00247 \\
 $W^i(\hat C)$ & -0.002 &  0.010 &  0.010 & -0.002 & -0.026 &  0.014 & \textbf{-0.050} &  0.018 & $\Upsilon$ & 0.056 \\
   \midrule
 $G^i_{0.05}(\hat D)$ & -0.00653 & -0.00855 & -0.00692 & -0.00525 & -0.00419 &  0.00102 &  0.01707 &  0.00339 & $G_{0.05}$ & -0.00996 \\
 $W^i (\hat D)$  & -0.022 & -0.022 & -0.026 & -0.030 & -0.026 &  0.006 & \textbf{-0.050} & -0.010 & $\Upsilon$ & 0.044 \\
   \bottomrule
\end{tabular}
}
\caption{Summary of backtesting statistics for Example~\ref{ex:4}, Dataset~1, split into the first period (first two rows) and the second period (last two rows): in the first columns we show $G^i_{0.05}$ and $W^i$, $i=1,\dots,8$, for the estimated risk allocations $\hat C$ and $\hat D$, corresponding to backtesting the fair allocation.
Overall, the capital allocations are well estimated by both $\widehat{C}$ and $\widehat{D}$, with exception of the seventh constituent, for which $W^7_{0.05}(\hat C)=-0.05$ and $W^7_{0.05}(\hat D)=-0.05$. For $i=7$ the correlation difference is noticeably higher than for the rest of the sample which might be a result of a structural change.
The last column shows the aggregated quantities $G_{0.05}$ and the risk level shift $\Upsilon$. Here, $\Upsilon$ is close to $\alpha=0.05$, as expected.}
\label{table:ex4-2}
\end{table}

Overall, the capital allocations are well estimated by both\footnote{We note that while data is not normally distributed, the estimator $\widehat{C}$ performed similarly well as the nonparametric estimator $\widehat{D}$.} $\widehat{C}$ and $\widehat{D}$, with exception of the seventh constituent, for which $W^7_{0.05}(\hat C)=-0.05$ and $W^7_{0.05}(\hat D)=-0.05$. To see whether this is a problem with the estimator or a result of time-correlation structure change we checked the sample correlations between each constituent and the portfolio for two disjoint subsets. The results are presented in Table~\ref{table:ex4-3}. One could see that for $i=7$ the correlation difference is noticeably higher than for the rest which might be a result of a structural change.
Consequently, we believe that the proposed  backtesting procedures correctly identified a wrong allocation in this particular case.

\begin{table}[tp!]
\centering
\scalebox{0.8}{
\begin{tabular}{rrrrrrrrr} \toprule
& $X_1$ & $X_2$ & $X_3$ & $X_4$ & $X_5$ & $X_6$ & $X_7$ & $X_8$ \\ \midrule
01/2015 -- 12/2015 &  0.70 & 0.66 & 0.75 & 0.66 & 0.68 & 0.58 & -0.54 & -0.38\\
01/2016 -- 12/2016&  0.59 &0.65 & 0.56 & 0.6 & 0.59 & 0.52 & -0.18 & -0.33\\
\midrule
Difference &  0.11 & 0.01 & 0.18 & 0.06 & 0.09 & 0.06 & {\bf -0.36} & -0.04\\ \bottomrule
\end{tabular}
}
\caption{Estimated correlations between each portfolio constituent ($X_i$) and aggregated portfolio ($S$), for two separate time periods for Dataset~1 in Example~\ref{ex:4}.
One can that the biggest difference is observed for $i=7$; this might indicate a structural change, a possible explanation of the results in Table \ref{table:ex4-2}.
}
\label{table:ex4-3}
\end{table}

\smallskip \noindent
\textbf{Dataset~2}, January 2017 - December 2018, Figure~\ref{fig:Ex4-1}, right panel, and Table~\ref{table:ex4-4}. Due to the increase of the volatility in the second half of the Dataset~2, the aggregated portfolio risk increases throughout the backtesting period; see Figure~\ref{fig:Ex4-1}, first row. In the second row of the same figure we present the nominal value of the allocated risk among constituents computed by using risk allocation estimators $\widehat{C}$ and $\widehat{D}$. In contrast to Dataset~1, the backtesting results for Dataset~2 reveal a significant underestimation of the aggregated  risk. This can be seen by noticing that the values of $G_{0.05}$ and $\Upsilon$, for both $\widehat{C}$ and $\widehat{D}$, are far from zero; see  last column in Table~\ref{table:ex4-4}, or the third and fourth rows of Figure~\ref{fig:Ex4-1}, right panel.
The graph of function $\beta\to G_\beta$ is plotted in the third row of Figure~\ref{fig:Ex4-1}, solid red line, and the value of $\Upsilon$ corresponds to the red vertical line in the last row. Comparing these plots with the corresponding plots from previous examples and datasets, we also conclude that the aggregated risk is significantly underestimated. Inevitably, this error propagates to the risk allocation estimation, as shown in the plots from rows 3-5. Clearly, the values of $G^i_{0.05}$ and $\Upsilon$ are significantly different from zero (see also Table~\ref{table:ex4-4}), in comparison to those from Dataset~1 and the previous examples. On the other hand, arguably, the risk allocation using the nonparametric estimators $\widehat{D}$ performs better than that one using $\widehat{C}$; see for instance the values of $W^i(\widehat{D})$ and $G^i_{0.05}(\widehat{D})$ versus $W^i(\widehat{C})$ and $G^i_{0.05}(\widehat{C})$. Finally, we note that, for the estimator $\widehat{D}$, the zeros of functions $\beta\to G^i_\beta(\widehat{D}),\ i=1,\ldots,8,$ are essentially the same as the zero of the function $\beta\to G_\beta(\widehat{D})$, indicating that the risk allocation itself (as proportion of the total risk) is done properly, and failure of the backtesting procedure is due to underestimation of the total risk.

\begin{table}[t]
\centering
\scalebox{0.8}{
\begin{tabular}{lllllllllll} \toprule
Dataset 2    & $X_1$ & $X_2$ & $X_3$ & $X_4$ & $X_5$ & $X_6$ & $X_7$ & $X_8$ &  &  \\
  $G^i_{0.05}(\hat C)$ &  0.01187 &  0.02618 &  0.01512 &  0.00562 &  0.01229 &  0.00404 & -0.01356 & -0.01785 & $G_{0.05}$ & 0.0437 \\
  $W^i(\hat C)$ & 0.234 & 0.210 & 0.146 & 0.166 & 0.222 & 0.110 & 0.278 & 0.274 & $\Upsilon$ & 0.204 \\
   \midrule
  $G^i_{0.05}(\hat C)$ &  0.00680 &  0.01669 &  0.01633 &  0.00301 &  0.00876 &  0.00181 & -0.00809 & -0.01106 & $G_{0.05}$ & 0.03425 \\
  $W^i(\hat D)$ & 0.158 & 0.130 & 0.102 & 0.078 & 0.074 & 0.042 & 0.118 & 0.086 & $\Upsilon$ & 0.156 \\
   \bottomrule
\end{tabular}
}
\caption{Summary of backtesting statistics for Example~\ref{ex:4}, Dataset~2, split into the first period (first two rows) and the second period (last two rows):  in the first columns we show $G^i_{0.05}$ and $W^i$, $i=1,\dots,8$, for the estimated risk allocations $\hat C$ and $\hat D$, corresponding to backtesting the fair allocation.
In contrast to Dataset~1, the backtesting results for Dataset~2 reveal a significant underestimation of the aggregated  risk.
Inevitably, this error propagates to the risk allocation estimation. Clearly, the values of $G^i_{0.05}$ and $\Upsilon$ are significantly different from zero. The risk allocation using the nonparametric estimators $\widehat{D}$ performs better than that one using $\widehat{C}$. }
\label{table:ex4-4}
\end{table}

\setlength{\columnsep}{1.5cm}
\setlength{\columnseprule}{0.2pt}
\begin{figure*}
\begin{multicols}{2}
{Dataset 1}\par\medskip
\scalebox{0.18}{\includegraphics{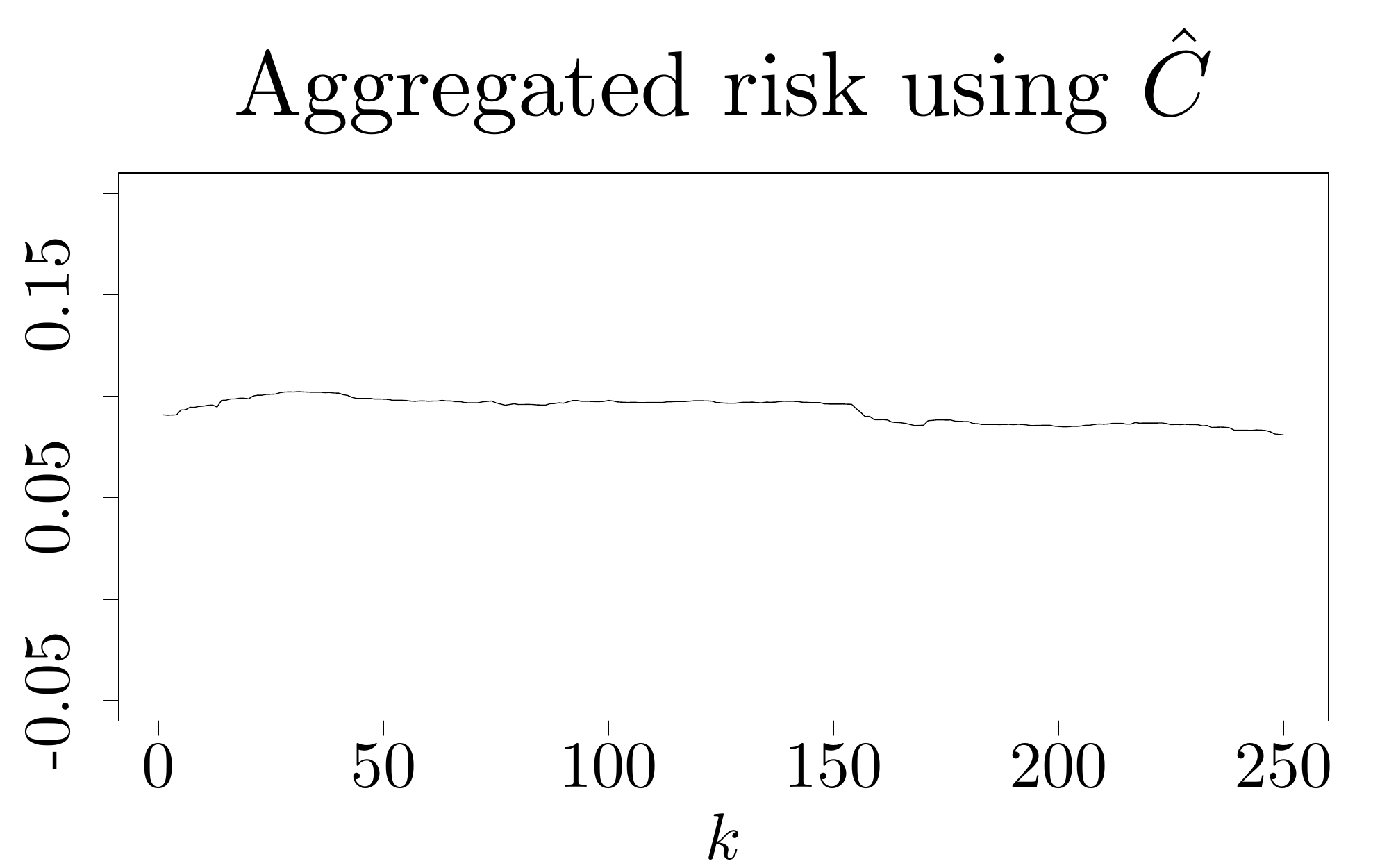}}
\scalebox{0.18}{\includegraphics{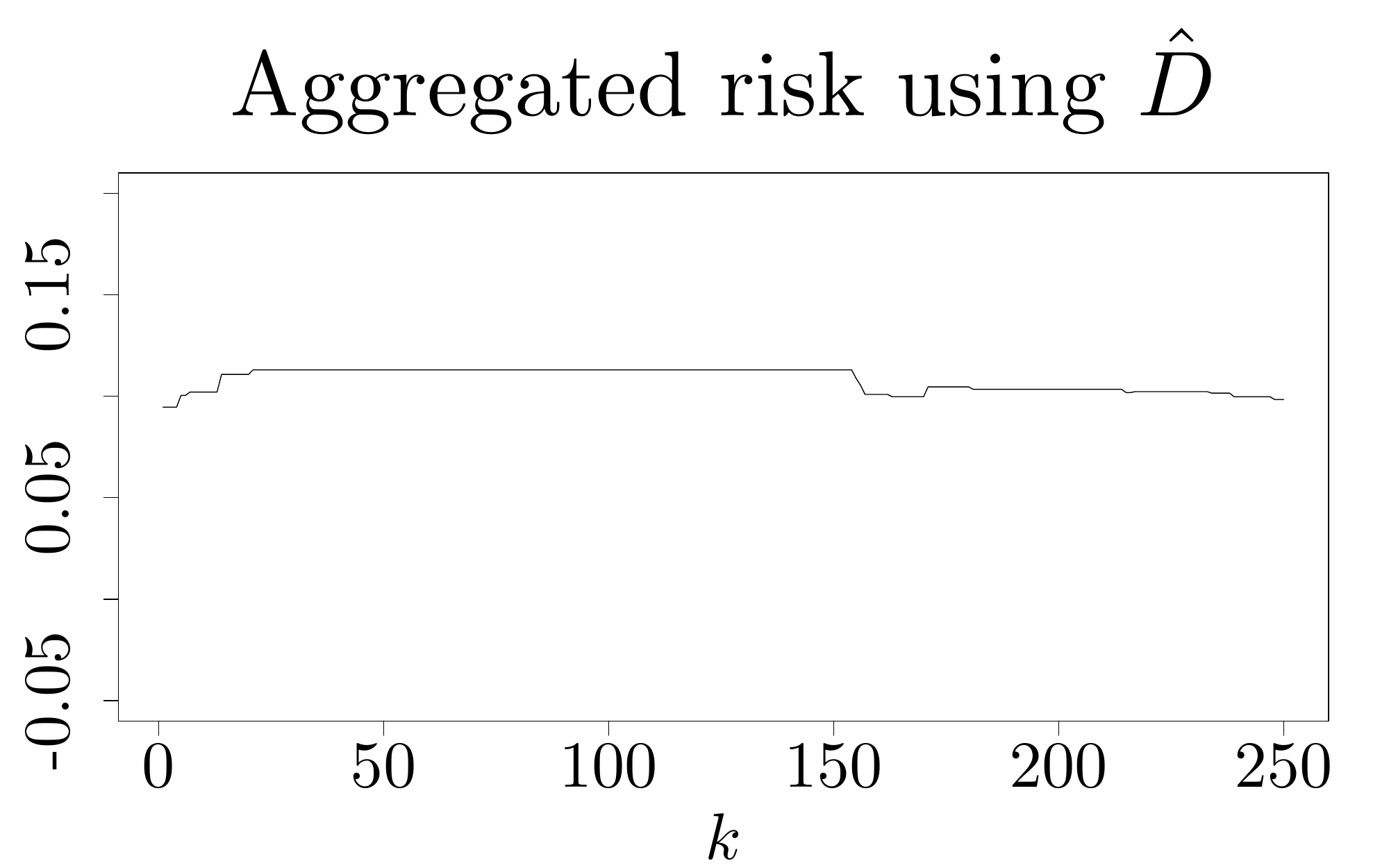}}\\
\scalebox{0.18}{\includegraphics{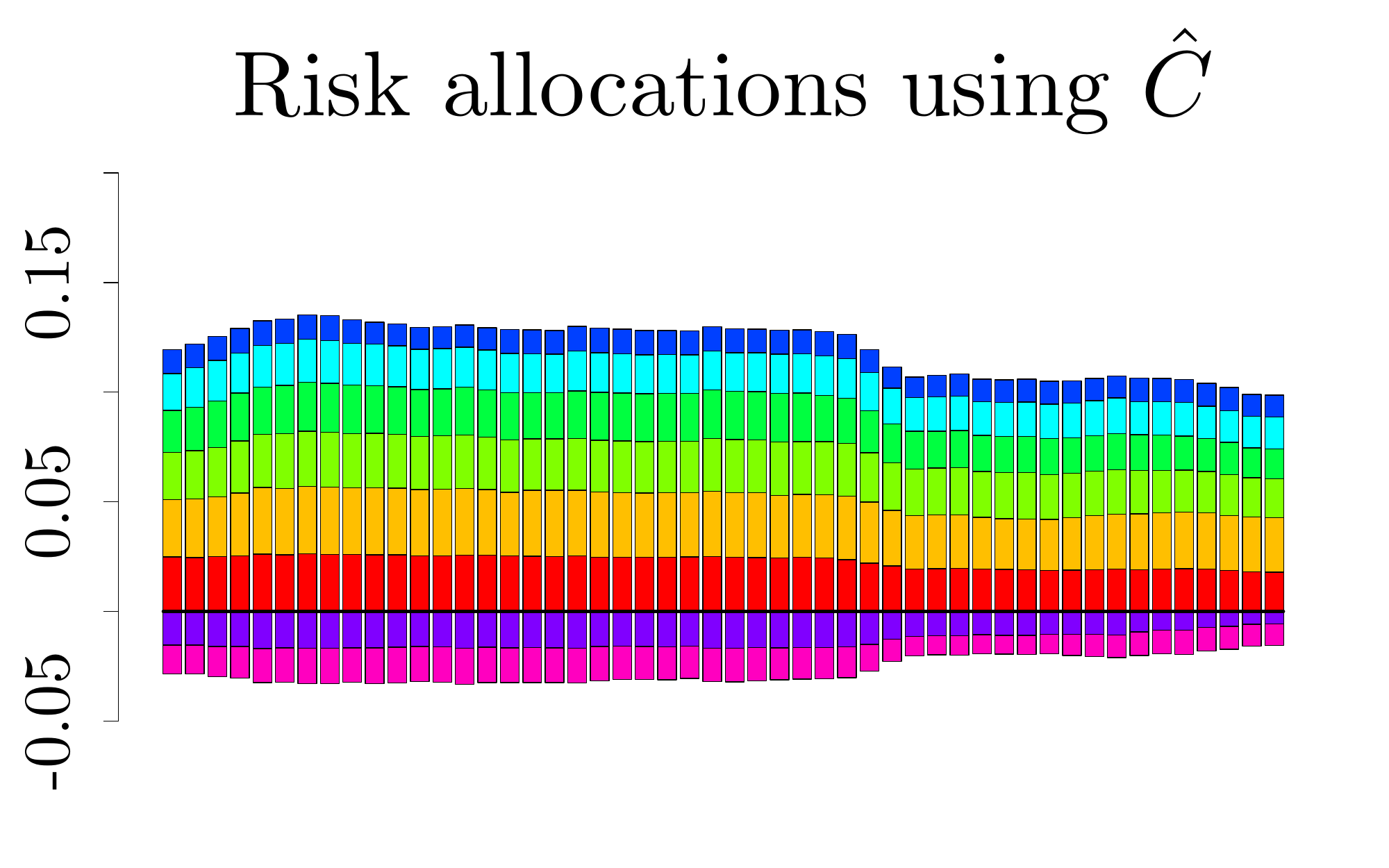}}
\scalebox{0.18}{\includegraphics{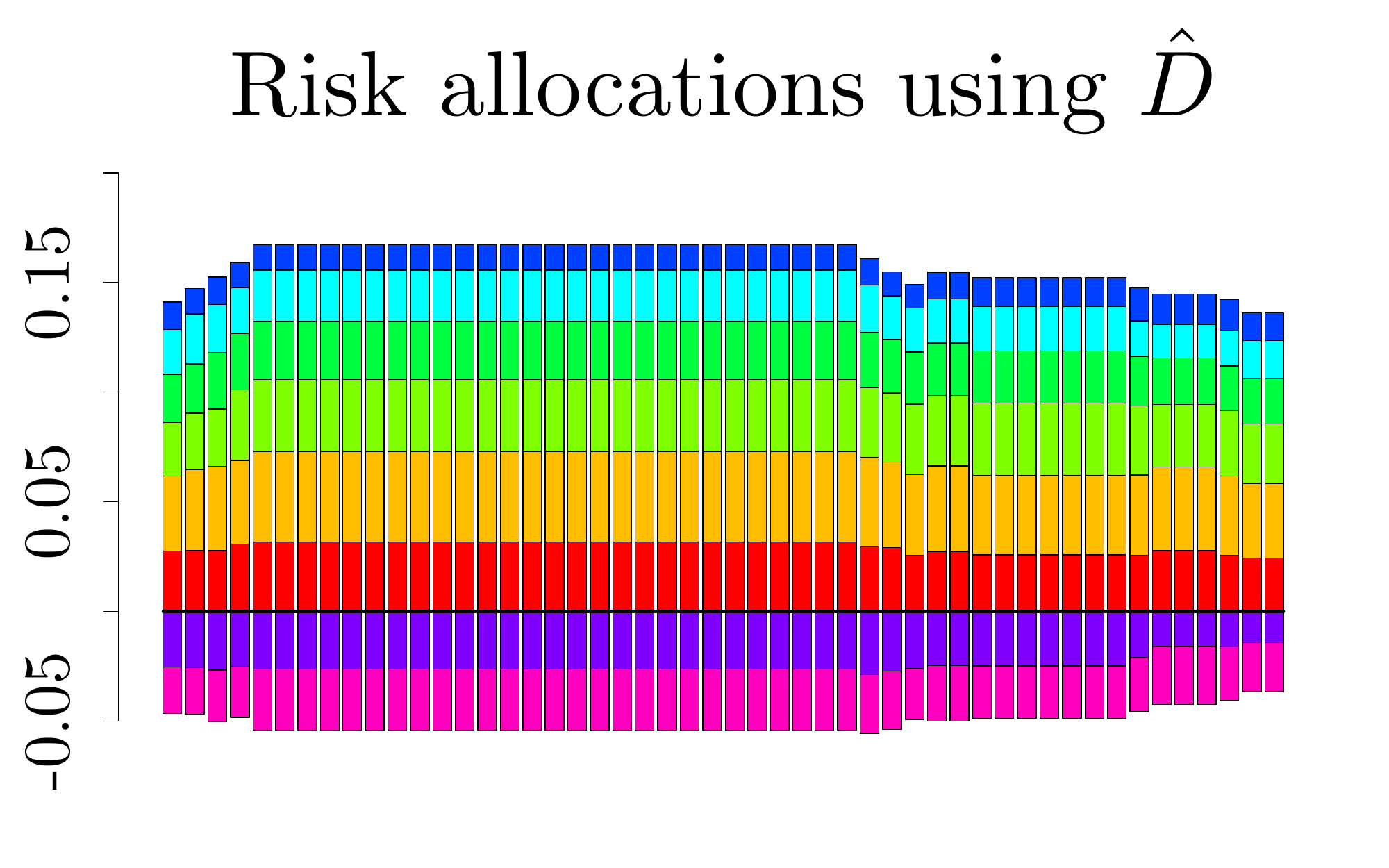}}\\
\scalebox{0.18}{\includegraphics{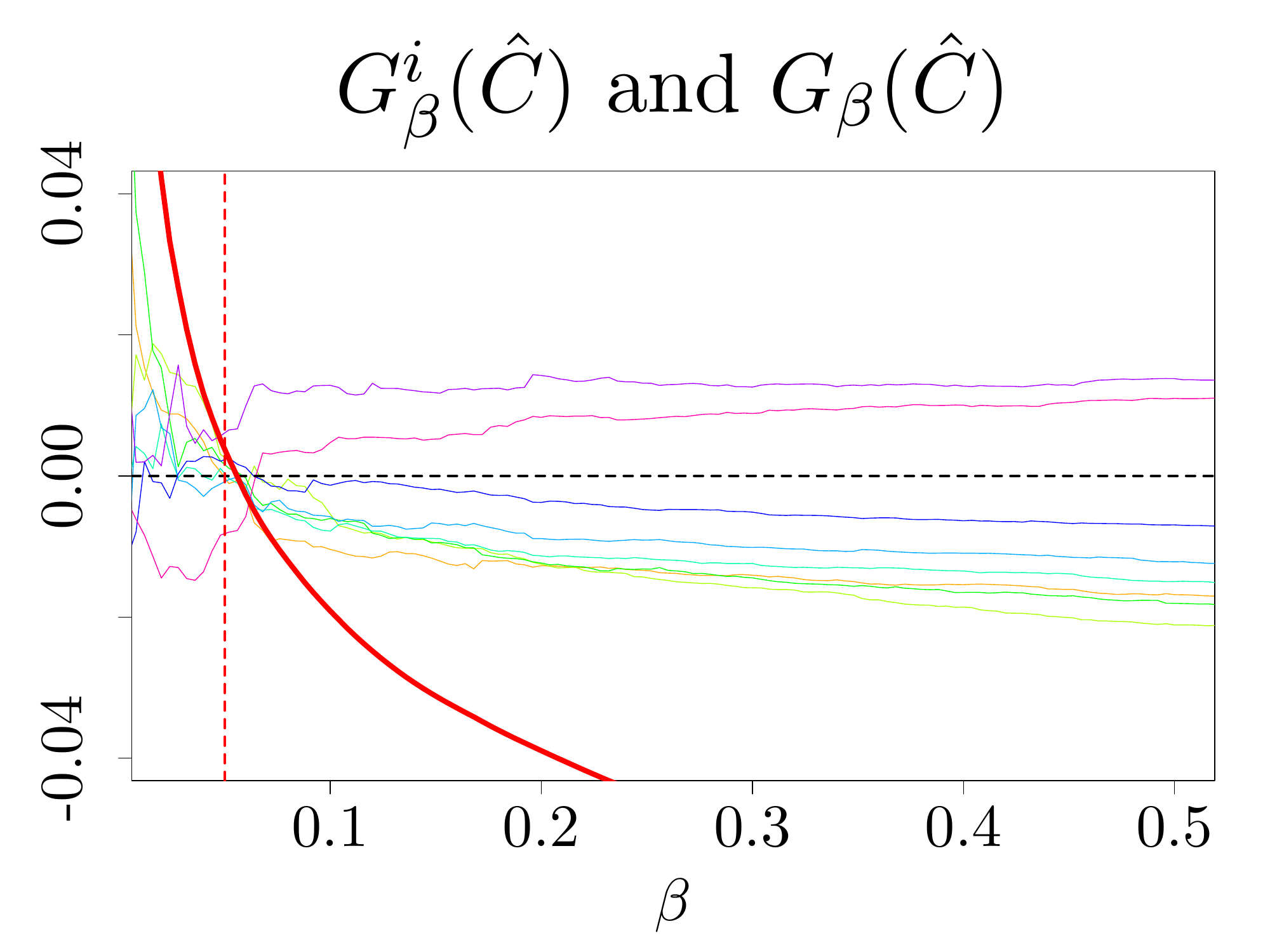}}
\scalebox{0.18}{\includegraphics{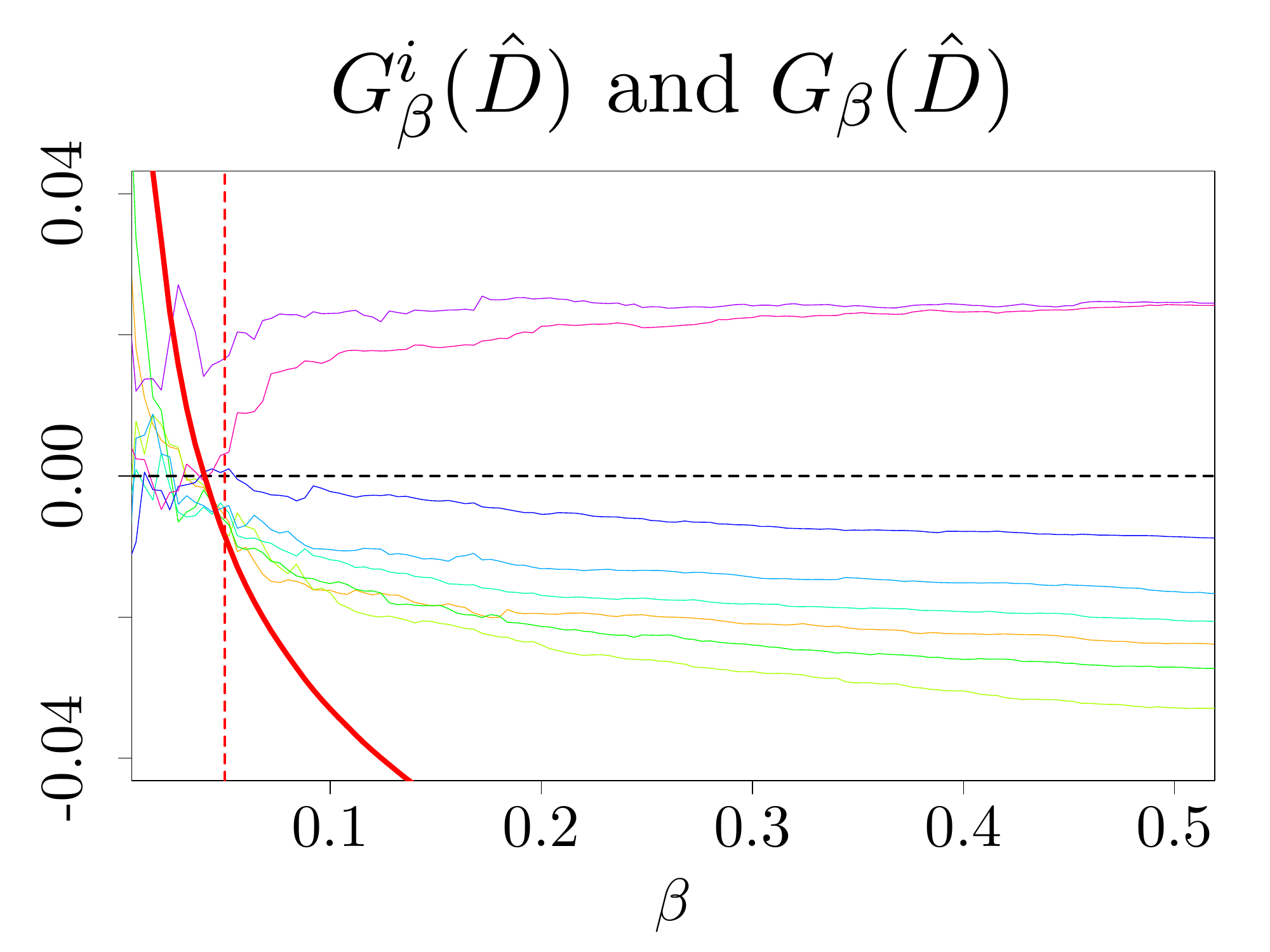}}\\
\scalebox{0.18}{\includegraphics{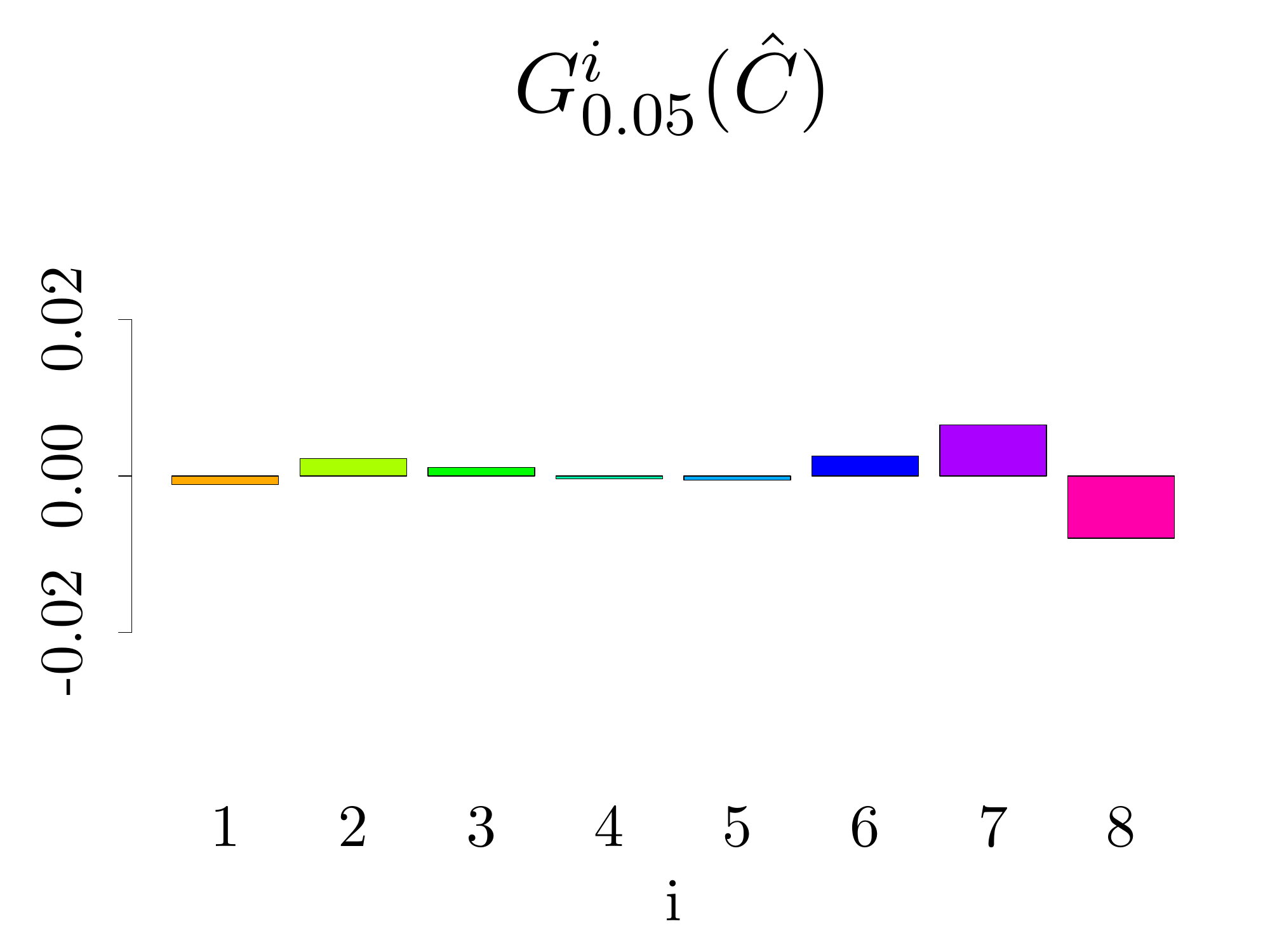}}
\scalebox{0.18}{\includegraphics{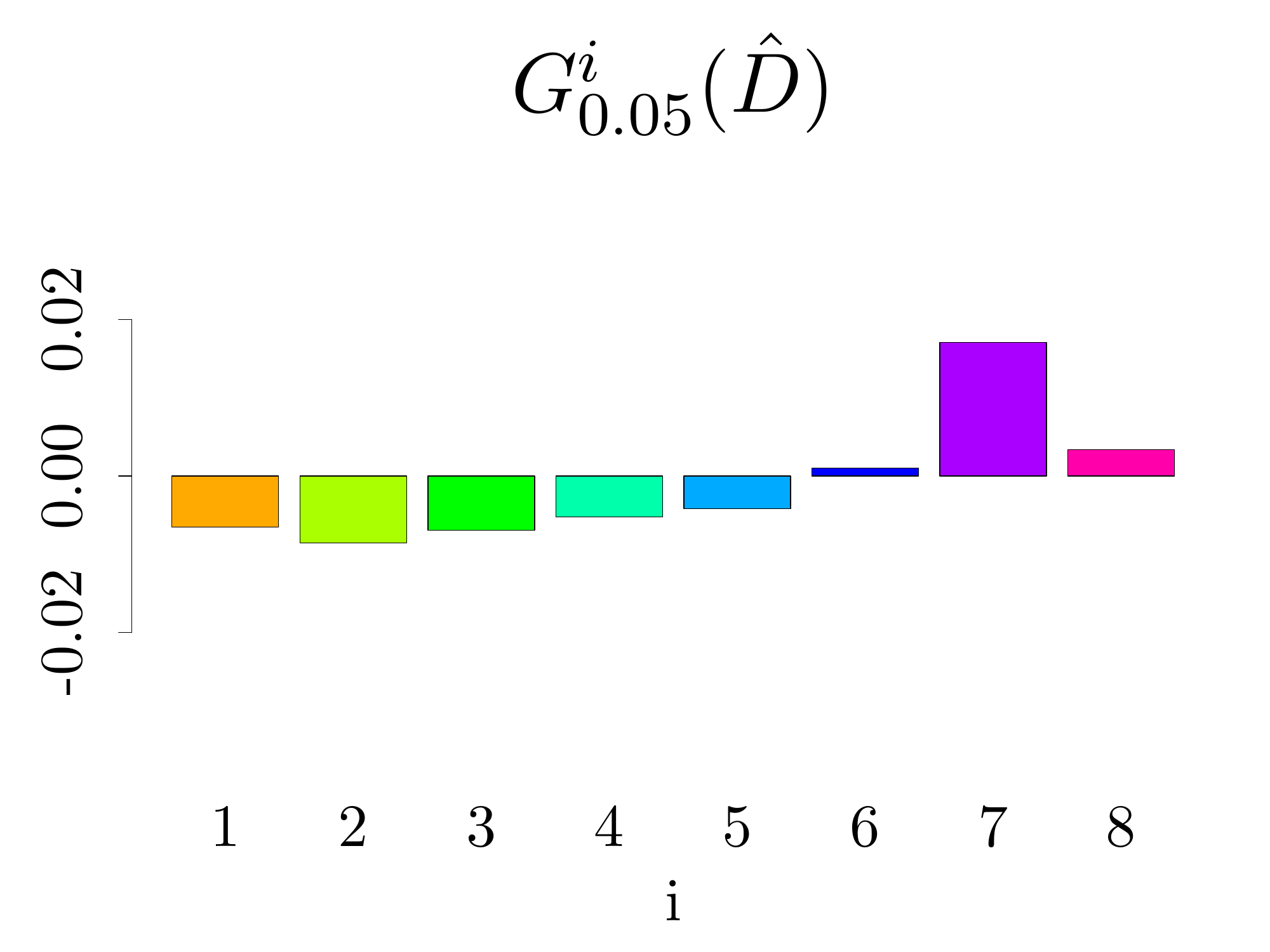}}\\
\scalebox{0.18}{\includegraphics{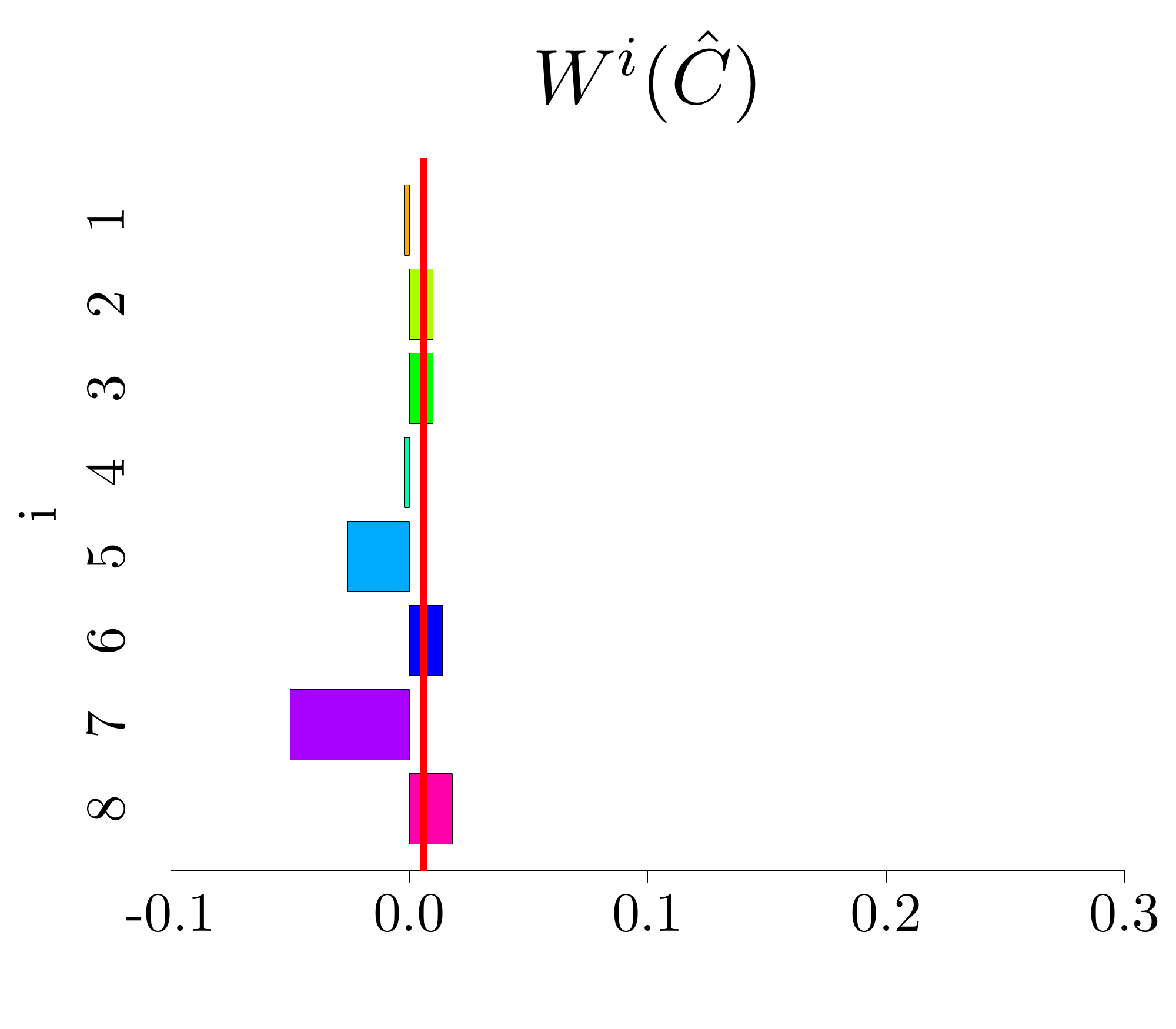}}
\scalebox{0.18}{\includegraphics{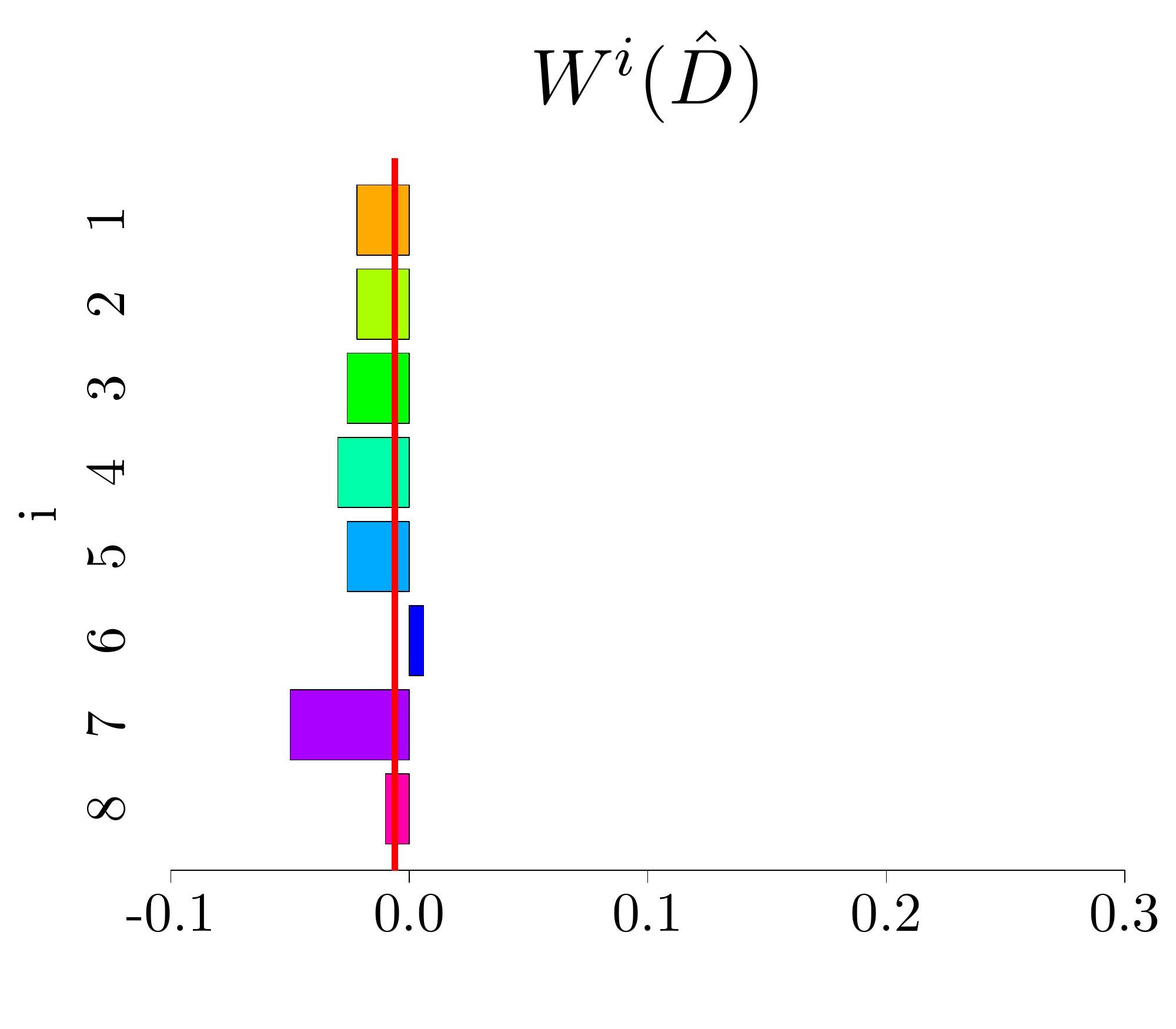}}\\
{Dataset 2}\par\medskip
\scalebox{0.18}{\includegraphics{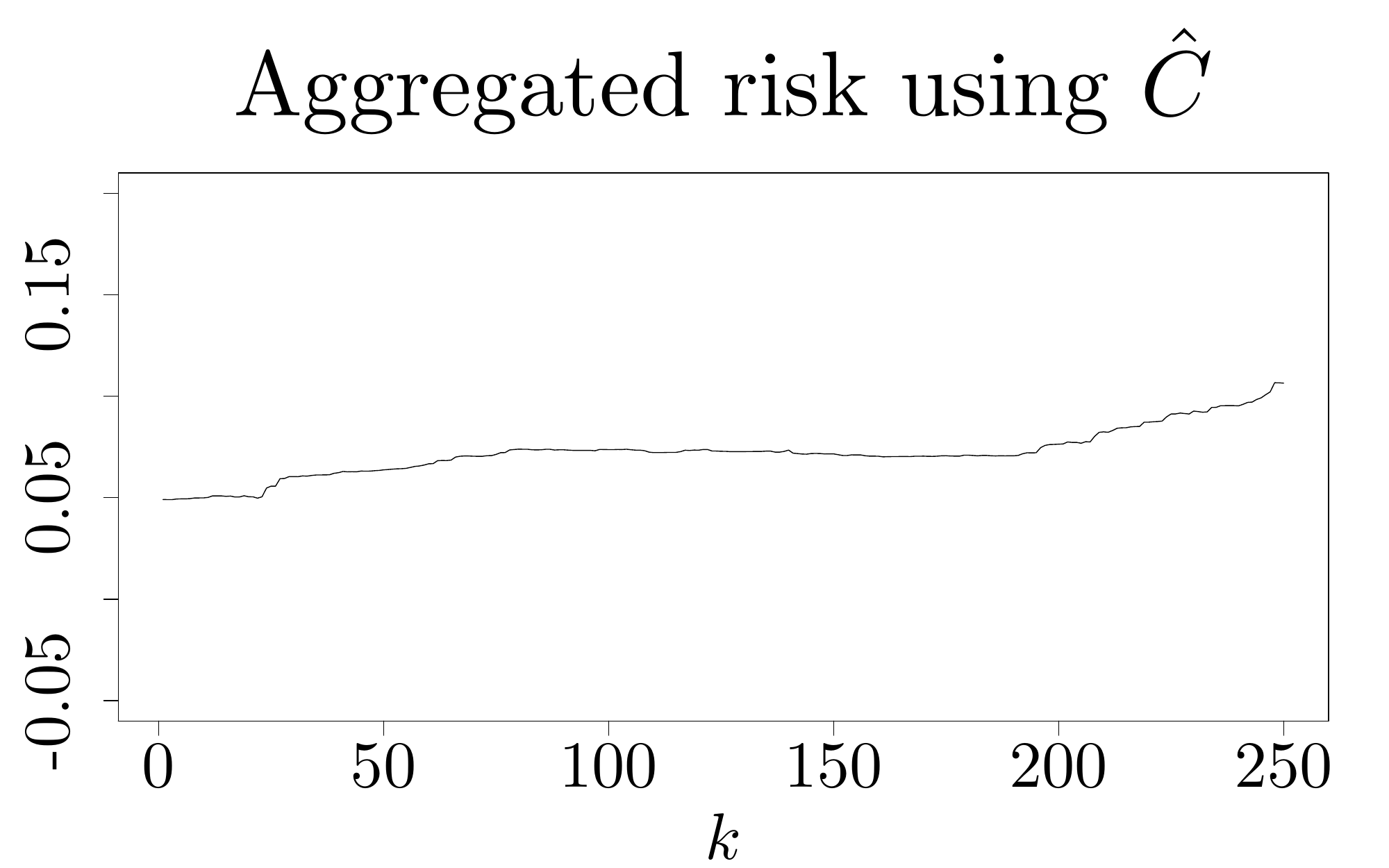}}
\scalebox{0.18}{\includegraphics{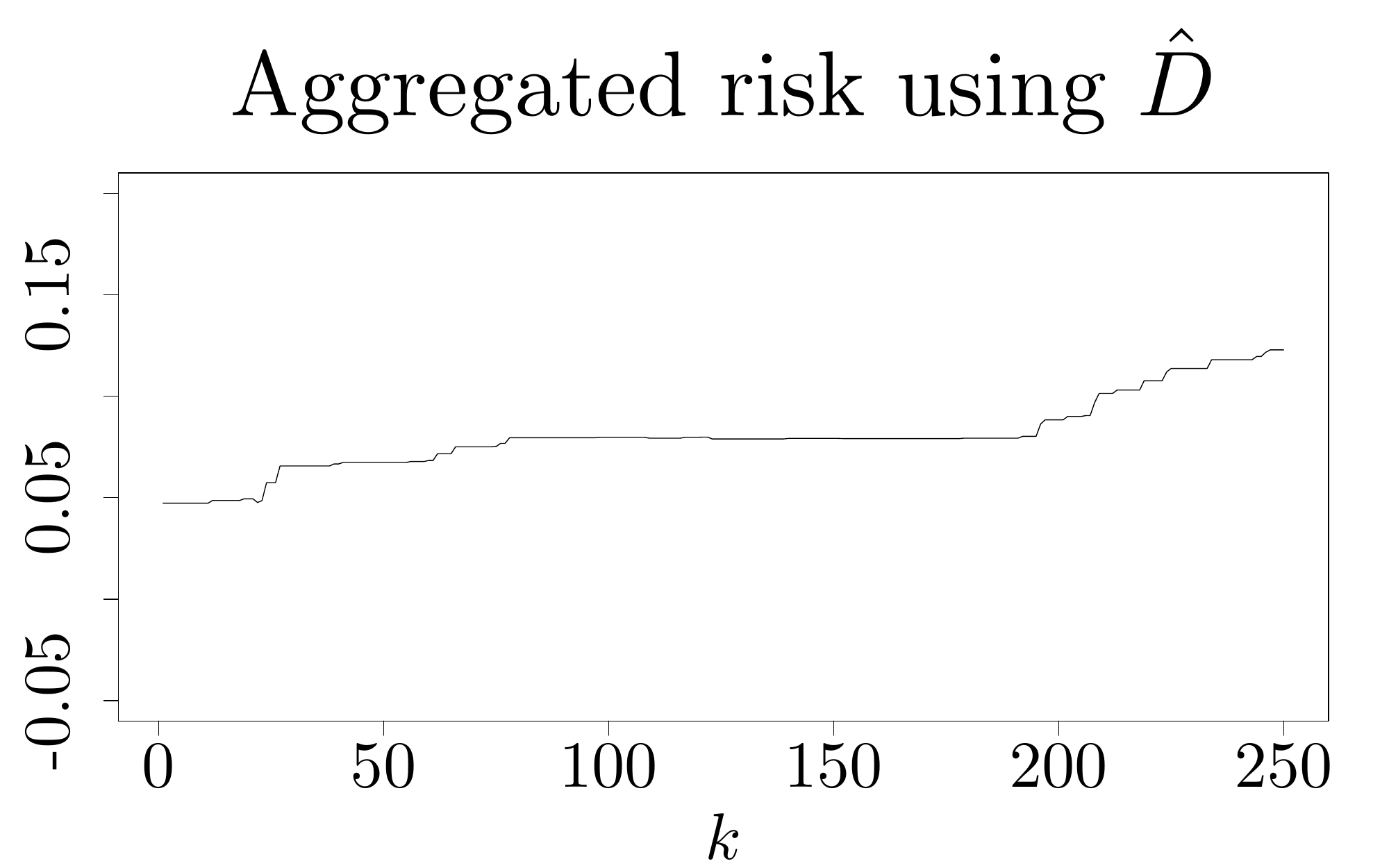}}\\
\scalebox{0.18}{\includegraphics{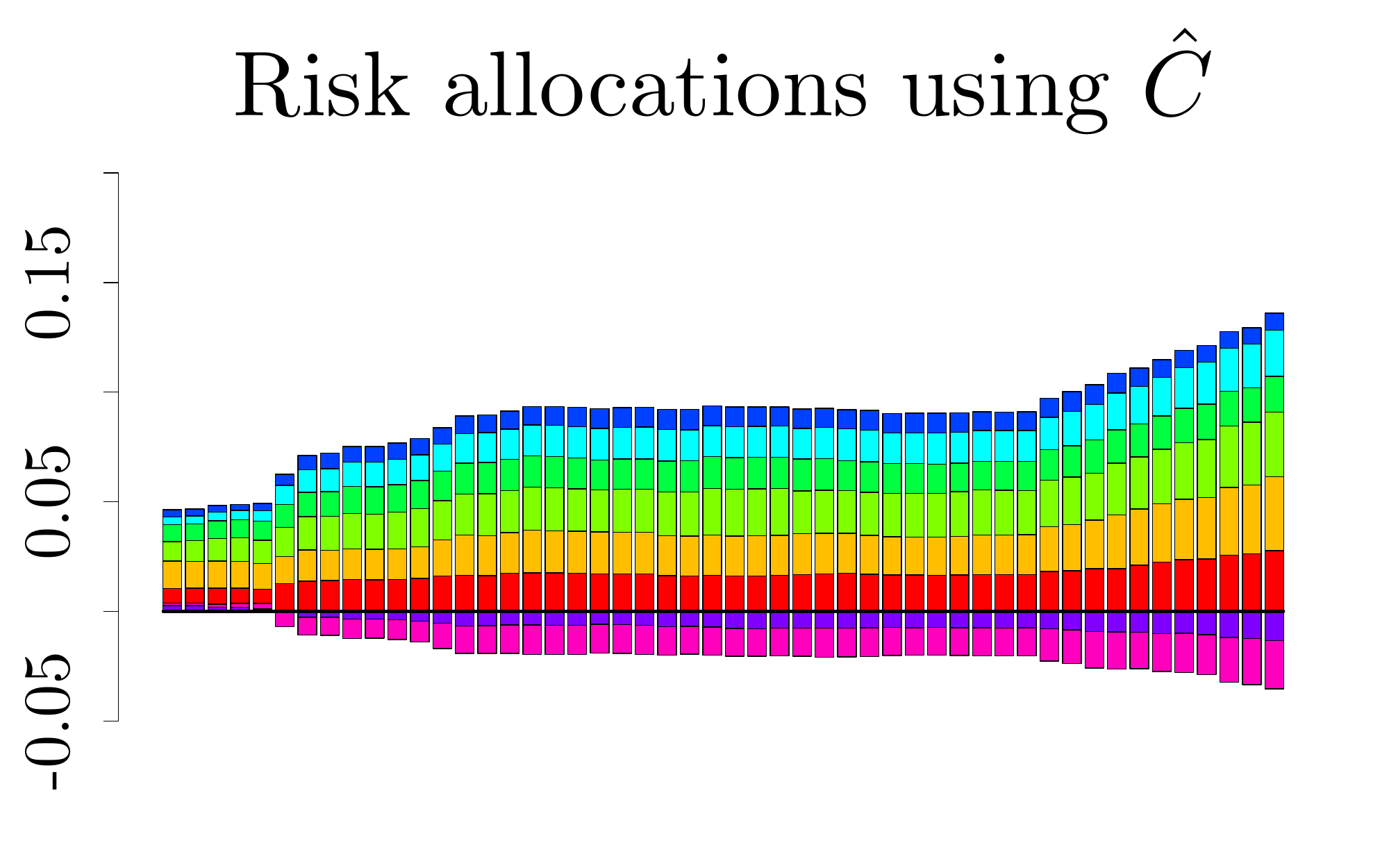}}
\scalebox{0.18}{\includegraphics{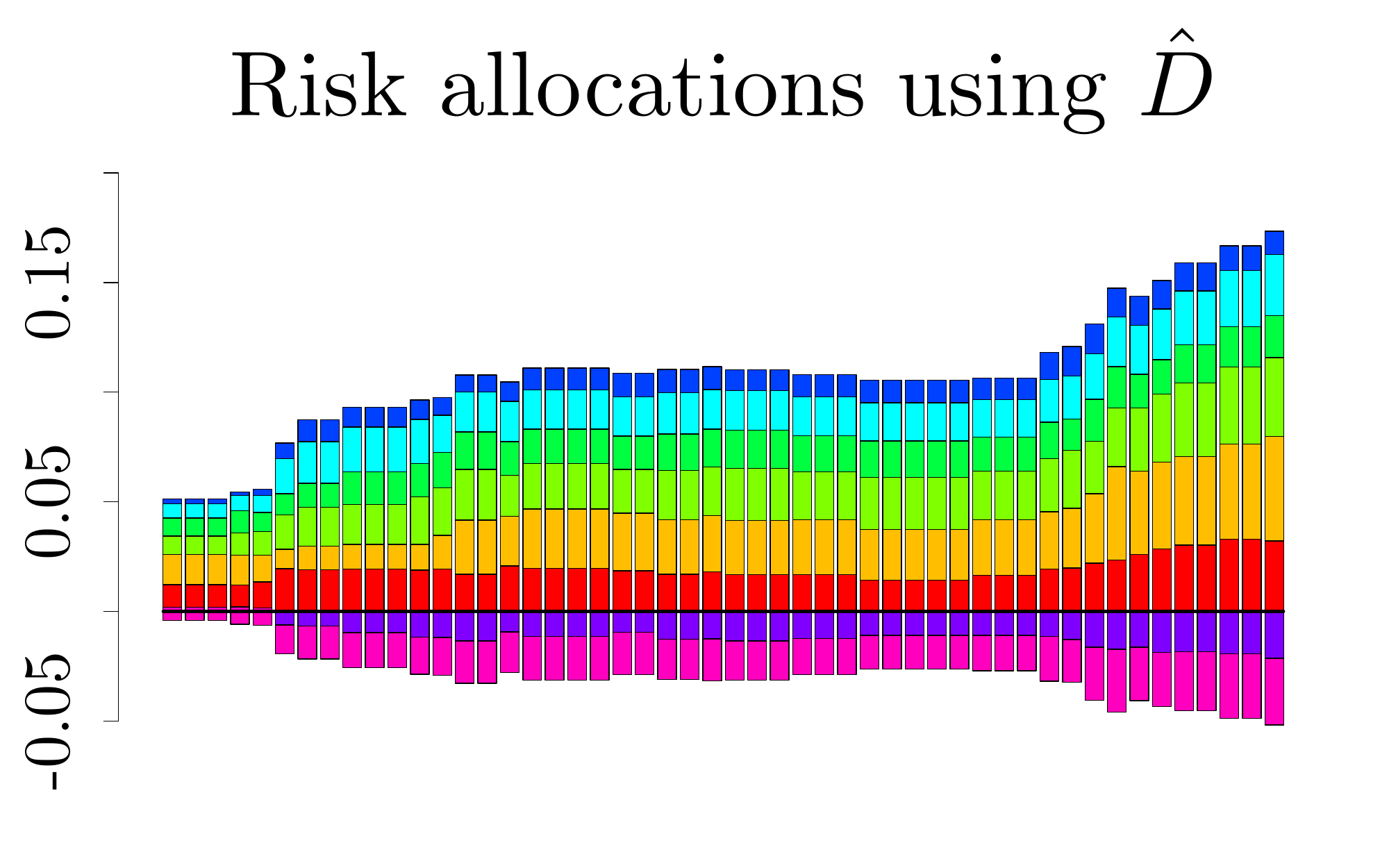}}\\
\scalebox{0.18}{\includegraphics{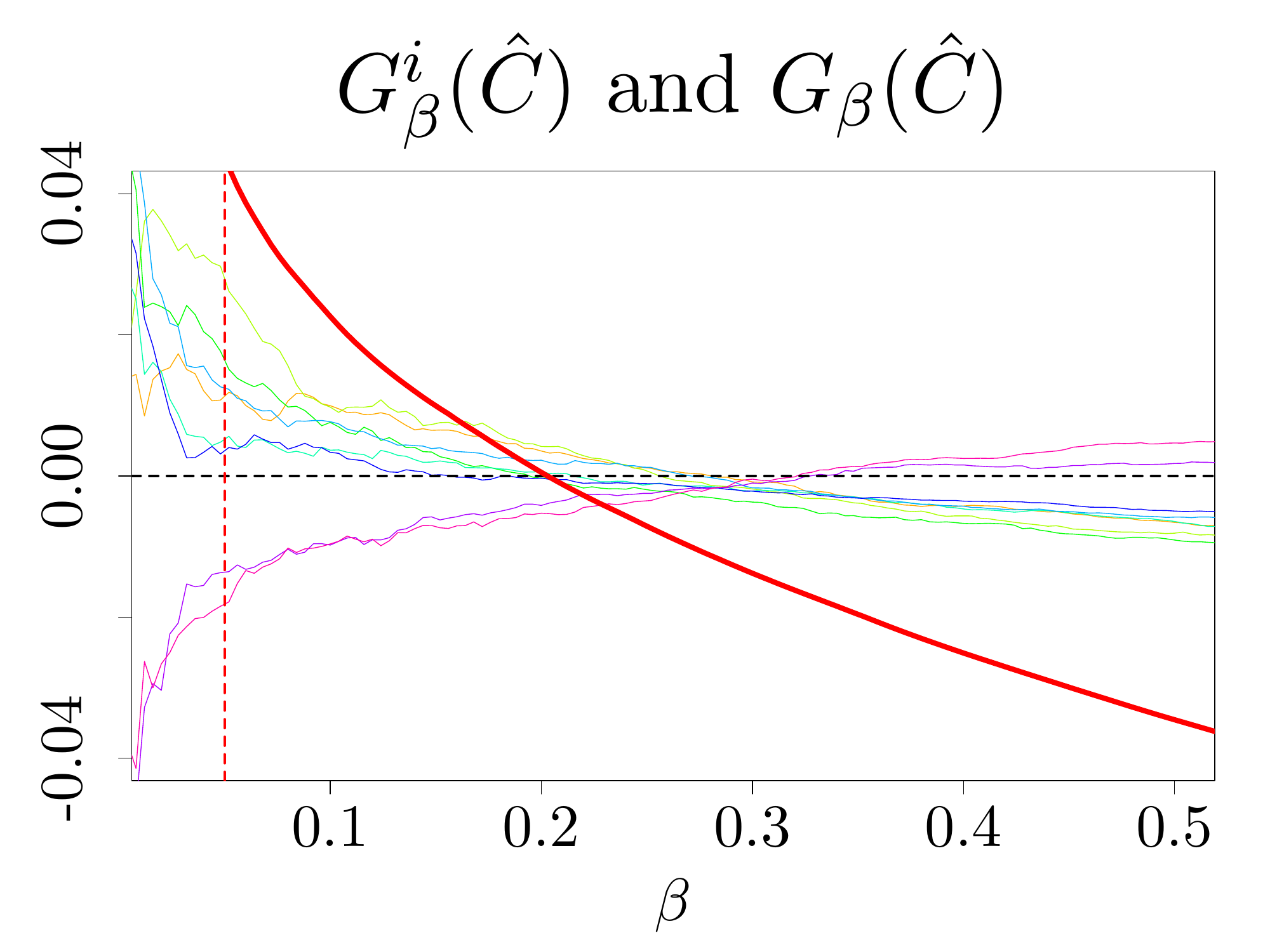}}
\scalebox{0.18}{\includegraphics{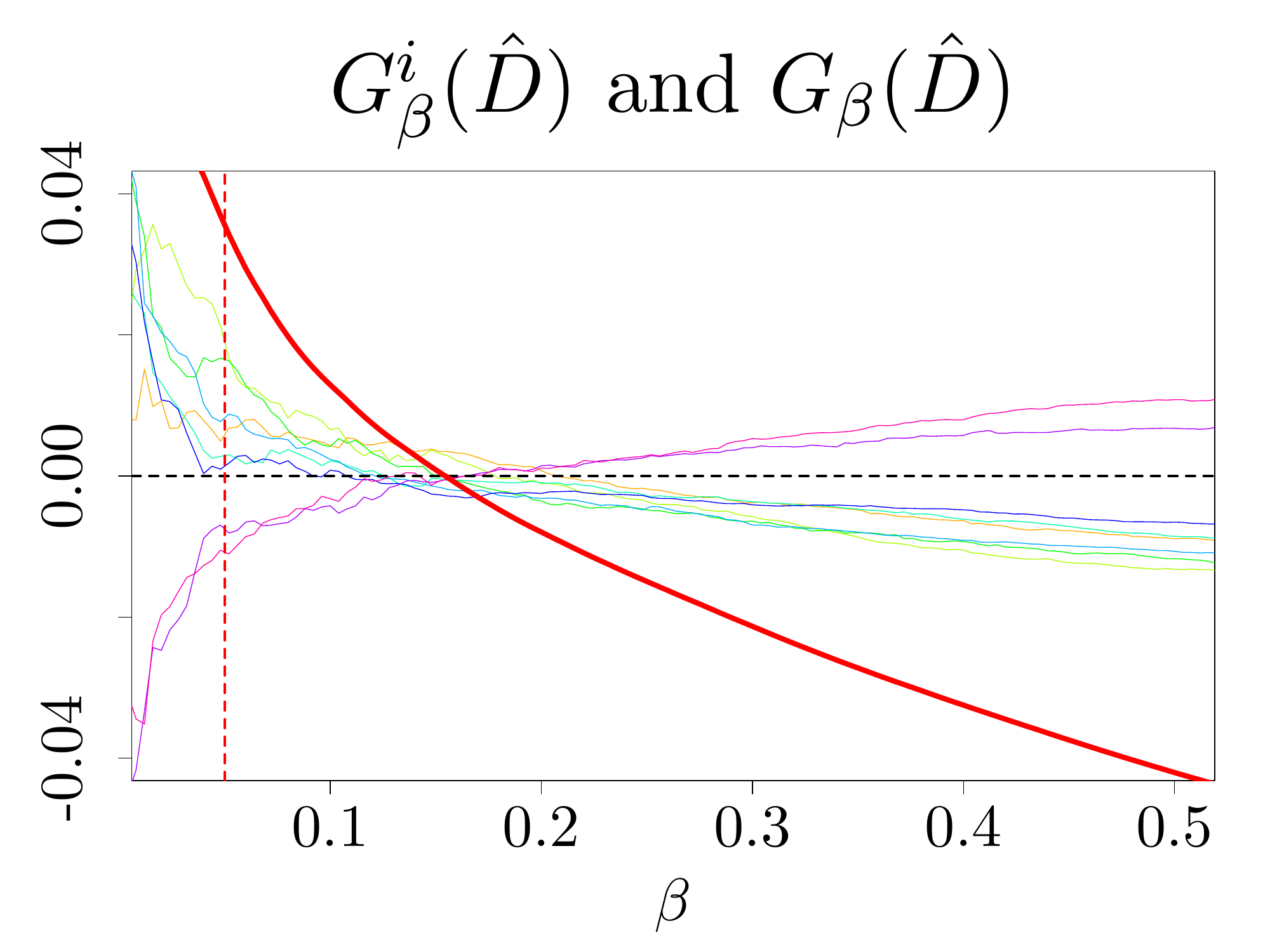}}\\
\scalebox{0.18}{\includegraphics{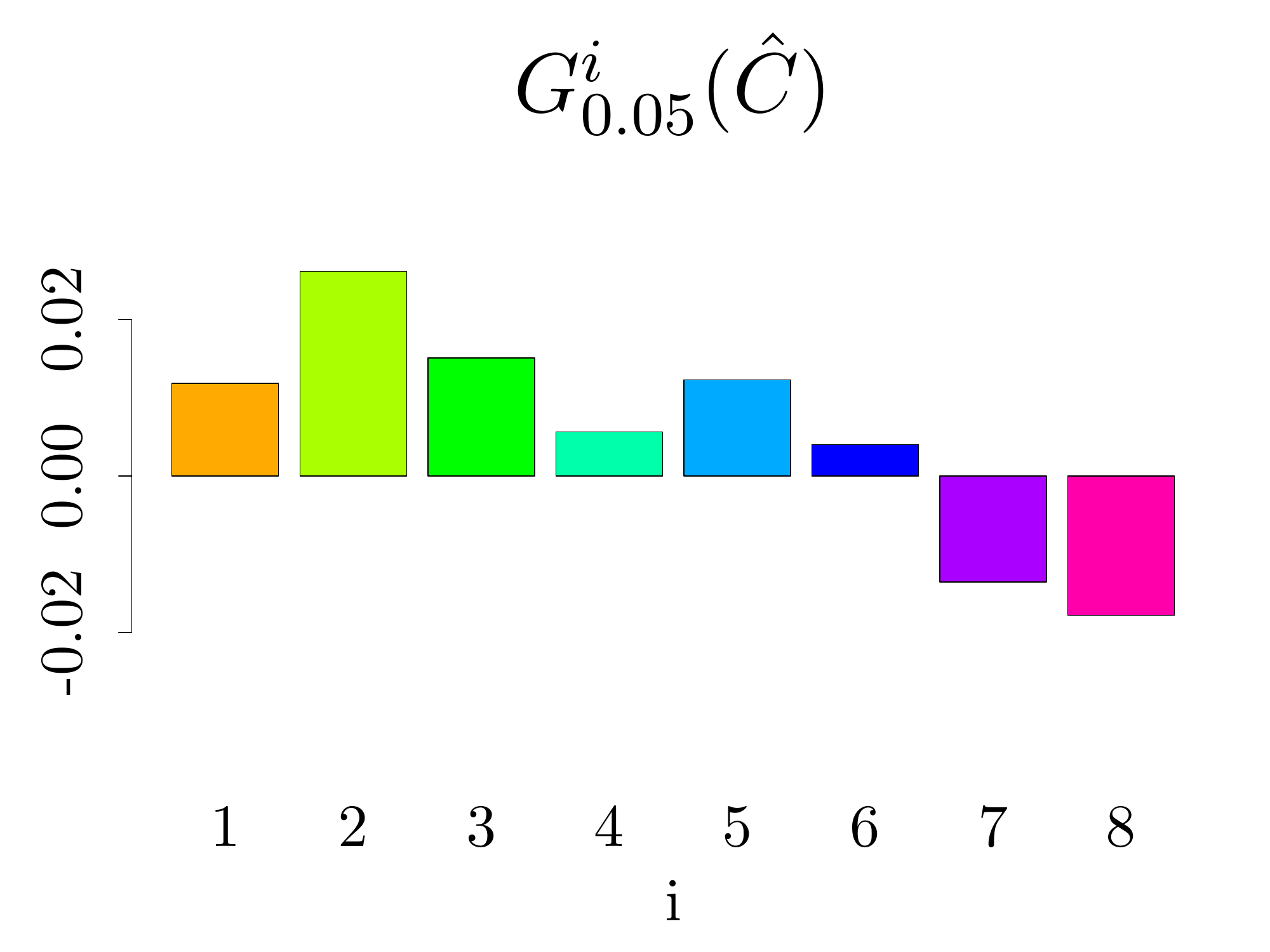}}
\scalebox{0.18}{\includegraphics{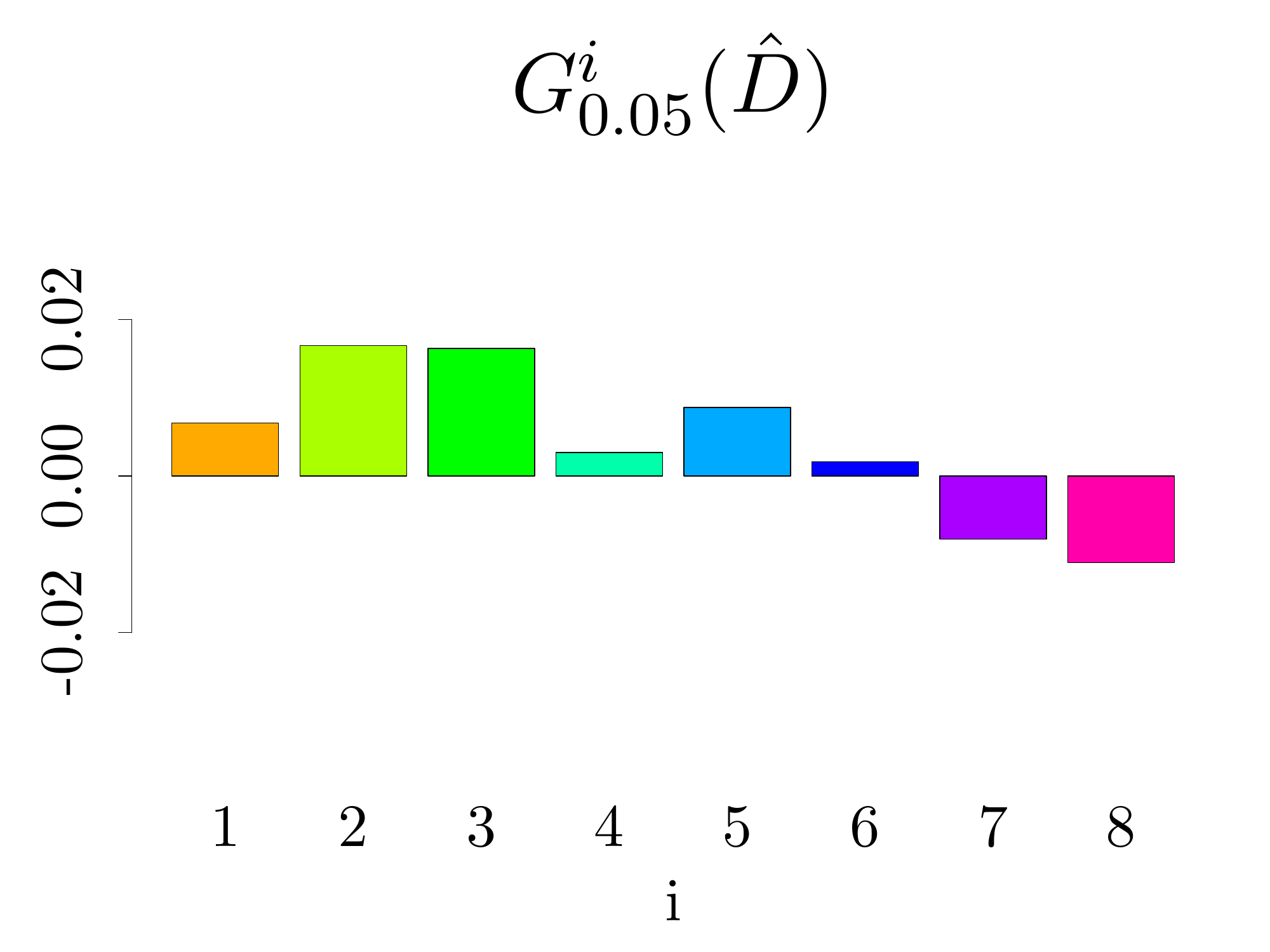}}\\
\scalebox{0.18}{\includegraphics{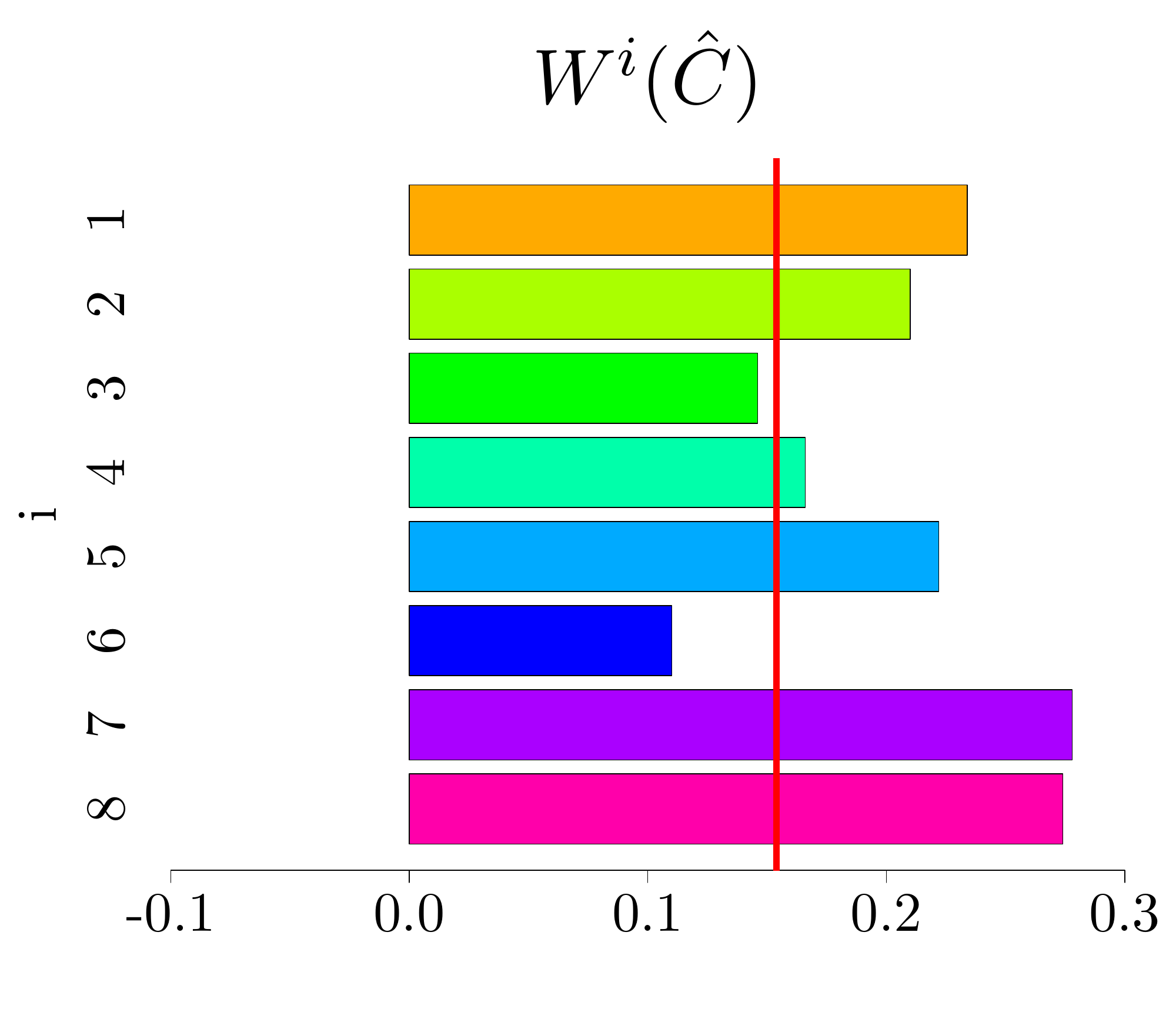}}
\scalebox{0.18}{\includegraphics{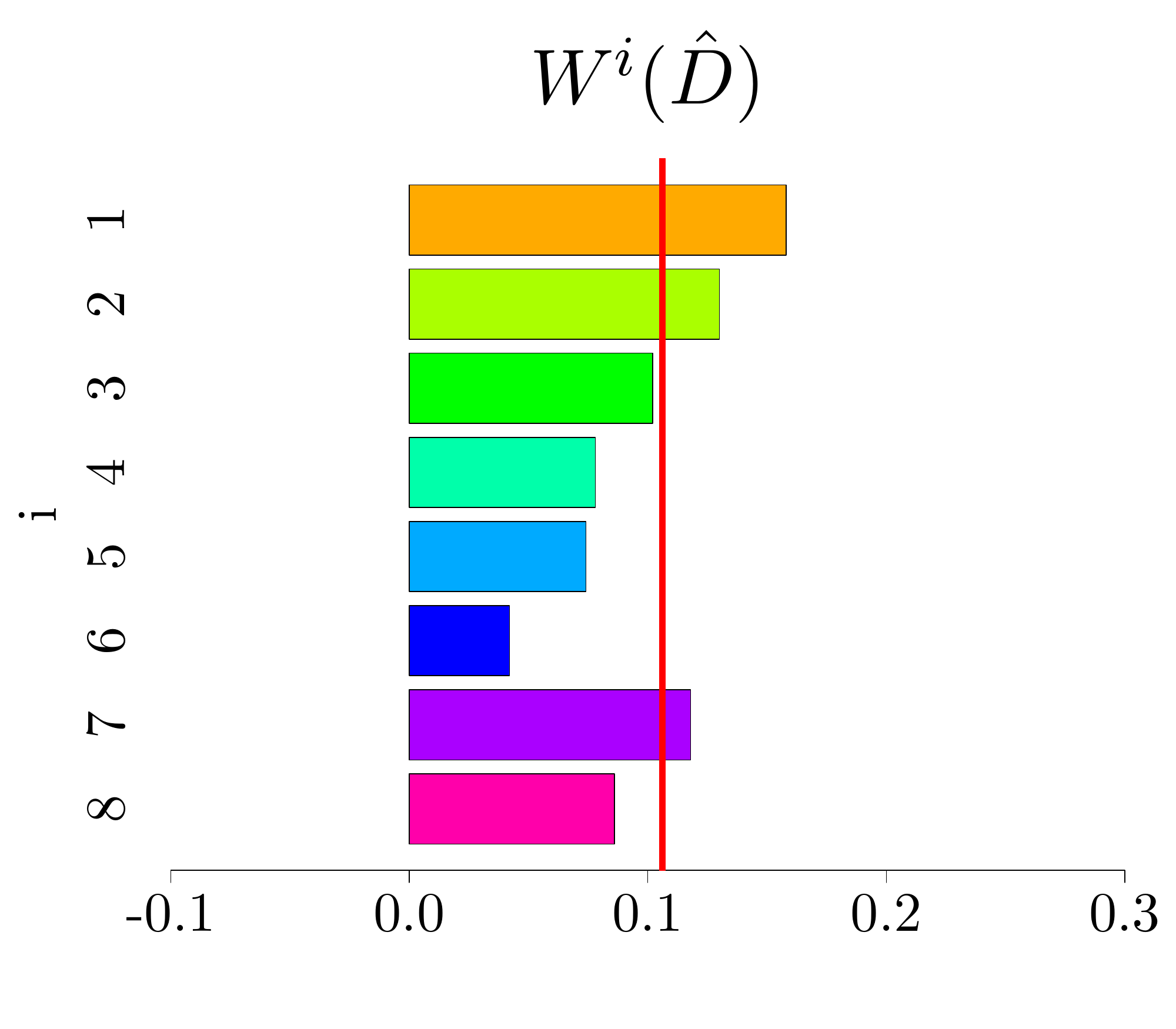}}
\end{multicols}
\caption{Aggregated risk, risk allocation and the backtesting metrics for portfolios in Example~\ref{ex:4}.
Two columns on the left (left panel) correspond to Dataset~1, while rightmost two columns (right panel) correspond to Dataset~2.
The results are obtained by using the risk allocation estimator $\widehat{C}$ and the nonparametric risk allocation estimators $\widehat{D}$.
}
\label{fig:Ex4-1}
\end{figure*}

\end{example}

\section*{Acknowledgments}
Tomasz R. Bielecki and Igor Cialenco acknowledge support from the National Science Foundation grant  DMS-1907568.
Marcin Pitera acknowledges support from the National Science Centre, Poland, via project 2016/23/B/ST1/00479.
The authors would also like to thank the anonymous referees, the associate editor and the editor for their helpful comments and suggestions which improved greatly the final manuscript.

\bibliographystyle{alpha}


\end{document}